\documentclass[sigconf, nonacm]{acmart}

\AtBeginDocument{%
	}

\usepackage{amsmath,amsfonts,amsthm,bm}
\usepackage{pgf}
\usepackage{pgfplots}
\usepgfplotslibrary{fillbetween}

\usepackage{algorithm}
\usepackage{algpseudocode}
\usetikzlibrary{patterns}
\usetikzlibrary{shapes,fit}
\usepackage{subcaption}

\usepackage{wrapfig}

\DeclareMathOperator{\U}{\mathbf{U}}
\DeclareMathOperator{\F}{\mathbf{F}}
\DeclareMathOperator{\G}{\mathbf{G}}

\newcommand{\dist}{\mathfrak{D}}
\newcommand{\logic}{FPL}
\newcommand{\reals}{\mathbb R}
\newcommand{\preals}{\mathbb R_+}

\algnewcommand\algorithmicswitch{\textbf{switch}}
\algnewcommand\algorithmiccase{\textbf{case}}
\algnewcommand\algorithmicassert{\texttt{assert}}
\algnewcommand\Assert[1]{\State \algorithmicassert(#1)}%
\algdef{SE}[SWITCH]{Switch}{EndSwitch}[1]{\algorithmicswitch\ #1\ \algorithmicdo}{\algorithmicend\ \algorithmicswitch}%
\algdef{SE}[CASE]{Case}{EndCase}[1]{\algorithmiccase\ #1}{\algorithmicend\ \algorithmiccase}%
\algtext*{EndSwitch}%
\algtext*{EndCase}%

\newcommand{\para}[1]{\smallskip \textbf{#1}\ }

\newtheorem{remark}{Remark}
\newcommand{\qee}{\hfill$\triangle$}

\definecolor{color1}{HTML}{D55E00}
\definecolor{color2}{HTML}{0072B2}
\definecolor{color3}{HTML}{56B4E9}
\definecolor{color4}{HTML}{E69F00}
\definecolor{color5}{HTML}{E1300A}
\definecolor{color5l}{HTML}{FE947D}

\definecolor{blizzardblue}{rgb}{0.67, 0.9, 0.93}
\definecolor{lightcyan}{rgb}{0.9, 1.0, 1.0}

\providecommand{\todo}[1]{}
\newcommand{\acrshort}[1]{\logic}

\newcommand{\Real}{\mathbb{R}}

\newcommand{\atom}[2]{\bm{\pi_{#1}}^{[0,#2]}}
\newcommand{\fuzzypathset}{\mathcal{I}}

\newcommand{\myspace}{\vspace*{-0.5em}}

\begin{document}
	
\title{Logic of Fuzzy Paths}

\author{Kush Grover}
\email{kgrover@fbk.eu}
\affiliation{%
	\institution{Fondazione Bruno Kessler}
	\streetaddress{}
	\city{Trento}
	\country{Italy}
}

\author{Pratham Gupta}
\email{prathamgupta@iisc.ac.in}
\affiliation{%
	\institution{Indian Institute of Science}
	\city{Bengaluru}
	\country{India}
	\postcode{560012}
}

\author{Jan K\v{r}et\'insk\'y}
\email{jan.kretinsky@tum.de}
\affiliation{%
	\institution{Masaryk University}
	\streetaddress{}
	\city{Brno}
	\country{Czechia}
}

\begin{abstract}
	We introduce a new family of temporal logics intended for specifications in motion planning (MP). 
	It builds upon the signal temporal logic (STL), which is a linear-time logic over real-valued signals that possess quantitative semantics and thus became popular in the areas of cyber-physical systems, robotics, and specifically robot MP. 
	However, in contrast to STL, the proposed logic works with paths as first-class citizens, separating the concerns of geometry and of logic.
	This in turn leads to simpler and more understandable formulae, and a more refined notion of satisfaction being able to reflect also preferences over behaviours.
	Technically, the logic is built on fuzzy, time-varying signal constraints. 
	As a consequence of this expressivity, it is (i) more usable for human-given specifications in MP and (ii) more amenable to learning specifications from demonstrations than other logics. 
	The former is important for the traditional style of verification in robot MP; the latter is becoming recognized as crucial for mining data-given tasks and controller synthesis in human-aware MP.
	We expose the advantages of our proposed logic on examples and show the versatility and flexibility of the framework on a number of scenarios. 
	Finally, we give a learning algorithm with a prototype implementation and discuss the possibilities of model checking and monitoring.	
\end{abstract}

\begin{CCSXML}
	<ccs2012>
	   <concept>
		   <concept_id>10010147.10010178.10010199.10010204</concept_id>
		   <concept_desc>Computing methodologies~Robotic planning</concept_desc>
		   <concept_significance>300</concept_significance>
		   </concept>
	   <concept>
		   <concept_id>10010147.10010257.10010282.10010290</concept_id>
		   <concept_desc>Computing methodologies~Learning from demonstrations</concept_desc>
		   <concept_significance>500</concept_significance>
		   </concept>
	   <concept>
		   <concept_id>10003752.10003790.10002990</concept_id>
		   <concept_desc>Theory of computation~Logic and verification</concept_desc>
		   <concept_significance>300</concept_significance>
		   </concept>
	 </ccs2012>
\end{CCSXML}
	
\ccsdesc[300]{Computing methodologies~Robotic planning}
\ccsdesc[500]{Computing methodologies~Learning from demonstrations}
\ccsdesc[300]{Theory of computation~Logic and verification}

\keywords{Temporal Logic, Verification,  Motion Planning, Learning and Control}

\maketitle

\section{Introduction}

In this paper, we propose a new logical framework to cater for some practical needs elegantly and we demonstrate its feasibility by several naive algorithms.
We hope to stimulate interest in exploring this new perspective %
and more efficient algorithms, to achieve real practical applicability.

\emph{Motion planning (MP)} \cite{LaValle2006PlanningA,Latombe1991RobotMP} is the problem of finding a path that satisfies a given specification in a given environment with given dynamics.
While the traditional MP typically considers the specification of the form ``go from A to B without a collision'',
MP with goals specified in the richer language of temporal logics aims at ensuring several objectives be achieved and arranged properly in time.
For instance, linear temporal logic (LTL) \cite{DBLP:conf/focs/Pnueli77} is popular for high-level planning \cite{KressGazit2009TemporalLogicBasedRM,Kloetzer2008AFA,Fainekos2009TemporalLM,Smith2010OptimalPP,DBLP:conf/rss/GroverBTK21}, where exact distances and paths are not as crucial as the qualitative arrangement of tasks.
Typical examples include periodic surveillance $\G\F \mathit{goal}_A \wedge\G\F \mathit{goal}_B$, request-response $\G(\mathit{req}\to\F\mathit{resp})$, or sequencing $\F (\mathit{goal}_1\wedge\F(\mathit{goal}_2\wedge\F\mathit{goal}_3))$, where $\F$ and $\G$ are the temporal operators future (eventually) and globally (always), often also denoted by $\Diamond$ and $\Box$, respectively.

In contrast, specifications such as ``keep a safety distance from an obstacle at least 0.1 meters'' $\G (\mathit{o}\geq0.1)$ require the specification language to work with \emph{numeric} variables, e.g.\ capturing the current \emph{value} of a distance sensor, rather than \emph{Boolean} atomic propositions, e.g.\ capturing whether the robot is currently in the goal area or not.
Such numeric variables and comparison predicates are used in \emph{signal temporal logic (STL)} \cite{DBLP:conf/formats/MalerN04}, which is popular for robot MP that intends to reflect also lower-level requirements.
Moreover, since robots are moving in real-time and changes in the variables' values can occur anytime, the variables in STL take the form of \emph{signals} $\mu:\preals\to\reals$, assigning a real value to each time moment.
Further, the temporal operators can be parametrized by time intervals, limiting their scope.

In this paper, we focus on extending or modifying STL so that it provides \emph{native support} for (1) \textbf{paths} as the basic element in MP and fundamental building blocks of the logic, and for (2) \textbf{preferences} on behaviors rather than pure safety or other functional specifications.

While STL works reasonably well for specifying \emph{hard safety} constraints, we show it is convenient neither for specifying softer constraints on the \emph{desired} behavior, nor for \emph{learning}, which is naturally producing soft generalizations.
To cater for these deficiencies, we introduce paths and \emph{fuzzy} paths as richer analogs of signal constraints and robustness.
As a result, while not aiming at greater expressivity, we obtain a higher-level and more natural language for this purpose, which is (i) easier to use for the purposes above, (ii) more amenable to learning, and (iii) facilitating explainability.

\para{Summary of the technical contribution}
After the intuitive motivation and the justification, we define the logic, exemplify it, and prove that its semantics can be approximated, by providing a naive algorithm for a concrete setting.
The framework leaves lots of degrees of freedom, which (i) can be used to make the logic fit various settings, as we discuss, and (ii) whose fortunate choices may lead to practically efficient algorithms.
Further, we provide a learning algorithm with a prototype implementation. 
We conclude with some insights on comparison to STL and some extensions, and outlining some promising further directions to look into.

\subsection*{Related Work}

\textbf{STL} was first proposed by Maler and Nickovic \cite{DBLP:conf/formats/MalerN04} as an extension to metric interval temporal logic (MITL)~\cite{DBLP:journals/jacm/AlurFH96} with \emph{continuous} signals. 
Over the past decade, much of the attention in motion planning with temporal specifications moved from non-continuity of LTL~\cite{KressGazit2009TemporalLogicBasedRM,Kloetzer2008AFA,Fainekos2009TemporalLM,Smith2010OptimalPP} to the \emph{appealing continuity in STL}, e.g.~\cite{Raman2014ModelPC}, recently focusing a lot on robustness and its maximization~\cite{DBLP:conf/iros/VasileRK17,Lindemann2019ControlBF,DBLP:conf/iros/ScherSK22}.
The various concepts of robustness semantics are introduced already in~\cite{DBLP:journals/tcs/FainekosP09,DBLP:conf/formats/DonzeM10}. 
Besides the mentioned maximization of robustness, many works focus on the case where satisfaction cannot be achieved and aim at a more realistic \emph{minimum-violation planning}~\cite{DBLP:conf/amcc/TumovaCKFR13,DBLP:conf/icra/VasileTKBR17,DBLP:conf/cdc/SchluterSB18,DBLP:conf/amcc/WongpiromsarnSF21,DBLP:conf/itsc/HalderA22}. 
While these works mostly aim at algorithmic solutions within the specification framework of STL, our work, in contrast, treats the issue of finding a better specification framework.

A class of works that deal with preferences are weighted temporal logics \cite{weightedSTL,DBLP:conf/eucc/CardonaV23,DBLP:conf/hybrid/CardonaKV23} but they focus on preferences between different subformulas rather than spatial preferences.
\textbf{Spatial preferences} can be expressed by STL to some extent~\cite{KARLSSON202015537}, but lack the convenient continuity properties discussed in the next section.
There are also dedicated logics for expressing preferences on trajectories, such as spatio-temporal logic (SpaTeL)~\cite{DBLP:conf/hybrid/HaghighiJKBGB15}, which however are suitable for describing complex spatial patterns rather than spatial relations between a trajectory and the environment.
An idea to use a fuzzy set to express spatial preferences appears in \cite{DBLP:conf/birthday/Guesgen14} in the context of spatial logics, with no relation to temporal aspects, temporal logics, or motion planning.
Fuzzy temporal logics have also appeared but with fuzziness applied to the time, not the space, e.g. \cite{5151,10.1145/2629606,9076249}.

\para{Probabilistic predicates}
Gaussian Distribution Temporal Logic (GDTL) and probabilistic STL (PrSTL) incorporate time-dependent predicates and explicitly account for estimation uncertainty via probability thresholds \cite{DBLP:journals/ijrr/LeahyCVJMSB19,sadigh2016safe}.
While sharing the high-level motivation of moving beyond purely Boolean constraints, these formalisms are not designed to express graded spatial preferences over entire trajectories; instead, they reason about probabilistic satisfaction of Boolean predicates.
Moreover, their additional probabilistic layers increase complexity, which can impede explainability and learning.

\section{Motivation}
\label{sec:motivation}

In this section, through a series of examples, we motivate the structure of our logic for preferences over paths.
With \emph{explainability} and \emph{learnability} as primary aspects in mind, the implied desiderata are the following:
(i) while able to capture desired paths, the geometrical and the logical concerns should be separated in order to achieve better understandability, 
(ii) the preferences on behaviours should be naturally expressible, also in the case when they are given by data (i.e., as sample paths), facilitating the learning and the subsequent validation.
While STL can be used in principle to describe preferences over paths, we argue that explicit, path-centric constructs provide a better fit.

\begin{example}
	Consider the following running example. 
	A robot is supposed to move in a wide corridor along the wall at a distance of 2 meters until there is an obstacle in the way.
    Once it spots the obstacle on the horizon of 3 meters, it is supposed to gradually get closer to the side so that it passes the obstacle with the same margin as to the wall.
    The situation is illustrated in Fig.~\ref{fig:avoid_obs}.
	
    Let us assume two signals $o$ and $w$ for the values of the distance to the nearest $o$bstacle and to the $w$all, respectively.
    For instance, the STL formula $$w= 2\ \U^{[0,3]}\ (o=3\ \wedge\ \F^{[3,4]} o=w)$$ constrains the distance to the wall until the obstacle is close, and then soon requires balancing the distances. It can be further conjoined with the previously mentioned $\G o>0.1$ for collision-free safety.
	\label{ex:init}

\begin{figure}[h]
	\centering

	\begin{tikzpicture}[scale=0.9]
		\begin{axis}[xmin=-1,xmax=5,ymin=-1.5,ymax=1.5,samples=50,width=\linewidth, height=0.6\linewidth,
			axis x line=box,
			axis y line =box,
			axis line style={-},
			enlarge x limits=0,
			enlarge y limits=0,
			ticks=none,
			x label style={anchor=west},
			every tick/.style={-},
			]
			\addplot[
			color=red,
			very thick,
			mark=,
			]
			coordinates {
				(-0.2,0)(2,0)(3,1)(4,1)
			};
			\draw[fill=color1] (450, 100) rectangle (550,200) ;
			
		\end{axis}	
	\end{tikzpicture}
	\caption[Running example]{The running example of avoiding an obstacle in a corridor.}
	\label{fig:avoid_obs}
	\myspace\myspace
\end{figure}
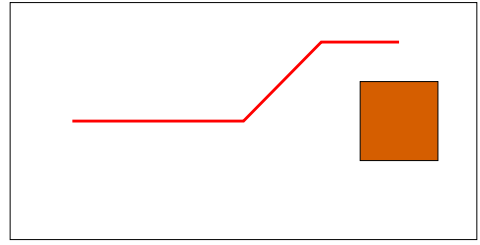

The \emph{gradual} reduction of the distance to the wall from 2 meters to, say for simplicity, 0.5 meters can be written in STL as
\begin{equation}
\begin{split}
\G^{[0,1]}2 \geq w \geq 1.5 \ \wedge\ \G^{[1,2]} 1.5 \geq w \geq 1 \ \wedge\ \\
\G^{[2,3]}1 \geq w \geq 0.5 \ \wedge\ \F^{[3]} w = 0.5 
\end{split} \label{eq:boxes}
\end{equation}
which specifies the respective areas where the robot should be at each time interval, see Fig.~\ref{fig:avoid_obs_bad}.
Clearly, while a finer discretization with a larger number of these ``boxes'' could make the specification more precise, it would be clumsy to write, read, understand, and work with.
Consequently, one can consider an extension of STL containing the logical signal of \emph{time} (as opposed to standard, extra-logical signals, which can get any meaning via an interpretation), allowing for \emph{time-dependent constraints} such as
\begin{equation}
\G^{[0,3]}\ w=2-0.5\cdot t \label{eq:time}
\end{equation} 
directly capturing the intent and avoiding the crookedness, too.
\qee
\end{example}	

\begin{figure}[h]
	\centering
	\begin{tikzpicture}[scale=0.9]
		\begin{axis}[xmin=-1,xmax=5,ymin=-1.5,ymax=1.5,samples=50,width=\linewidth, height=0.6\linewidth,
			axis x line=box,
			axis y line =box,
			axis line style={-},
			enlarge x limits=0,
			enlarge y limits=0,
			ticks=none,
			x label style={anchor=west},
			every tick/.style={-},
			]
		\draw[fill=color1] (450, 100) rectangle (550,200) ;
		\filldraw[fill=blizzardblue] (225,140) rectangle (300,185)	;
		\filldraw[fill=blizzardblue] (300,185) rectangle (355,220)	;
		\filldraw[fill=blizzardblue] (355,220) rectangle (390,250)	;
		\draw[color=red, line width=0.4mm] (75,140) -- (225,140) (300,185) -- (305,185) -- (305,210) -- (355,210) -- (355,225) -- (385,225) -- (385,250);
		\draw[color=red, line width=0.4mm] (225,140) .. controls (245,250) and (260,65) .. (300,185);
	\end{axis}
	\end{tikzpicture}
	\caption{The ``boxing'' approach of specifying continuous changes, available in STL.}
	\label{fig:avoid_obs_bad} \myspace
\end{figure}
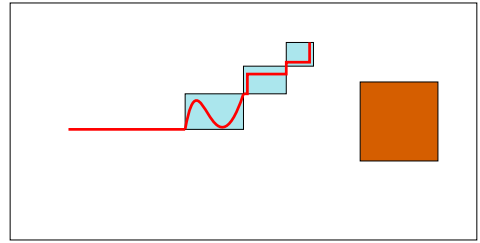

Using predicates that depend on time has been explored before in \cite{Aksaray2021,DBLP:journals/ras/BuyukkocakAY24,sadigh2016safe}.
Yet, whenever the shape is more complex, the formula---even in STL extended with non-linear predicates---inevitably grows and becomes cluttered.
This motivates us to introduce \textbf{paths}, i.e., time-varying constraints, as \emph{atomic building blocks} of our temporal logic.

\begin{example}
	We would like to build formulae such as $(\mathit{middle}\U o=3)\cdot (\mathit{left}\vee\mathit{right})$ describing we move in the middle of the corridor until we come close to an obstacle, and then we go around it from the left or from the right, where $\mathit{middle},\mathit{right},\mathit{left}$ may be very complex paths without obscuring the formula.
    Building on paths as atoms effectively separates the geometric and logical concerns.
\qee
\end{example}

However, just considering the paths is not sufficient for expressing \emph{preferences} over them.
Sometimes it is acceptable to deviate from the desired path by some margin, e.g., when in the corridor, it is acceptable to be a bit farther from the middle line but within safe distance to the wall.
This closeness to the path could work as a measure of preference: the closer to the path, the better.

To express such preferences, we need to be able to express \emph{degrees} of satisfaction with the paths, not just a strict satisfaction/violation as in STL.
This can be achieved by using the robustness semantics of STL \cite{DBLP:journals/tcs/FainekosP09}, which essentially provides a numeric value reflecting the degree of satisfaction/violation.
For an atomic constraint $f(\mu_1,\ldots\mu_n)\geq 0$, evaluating the left-hand side on the signals $\mu_1,\ldots,\mu_n$ yields a real number $r$, taken as the semantics.
In particular, if $r\geq 0$ then the formula is satisfied in the standard Boolean semantics; if $r<0$ it is violated.
The absolute value $|r|$ then gives the ``robustness'': how much the signal (or more precisely, the linear combination) must change to flip satisfaction.

\para{Why not robustness for preferences?} While the degree of satisfaction seems natural for ordering behaviors by preference, it implies several difficulties:

\begin{example}
As one problem in this area, observe that different \emph{scales} of different constraints could imply that the robustness of one completely dominates the others. 
 For instance, if we measure the distance to the wall in millimeters then the robustness semantics of $w-2000\geq0\ \wedge\ \mathit{temp}-18\geq 0$ will completely ignore a huge drop in temperature $\mathit{temp}$ by 10 degrees and it will be dominated by a negligible distance violation as soon as we get too close to the wall by a single centimeter.
 The robustness value of $-20$ thus may mean we are badly freezing or just off by 2 cm.

Hence it may not faithfully report the degree of violation.
While this could be manually addressed by re-scaling the constraints, this requires the domain knowledge and also makes mining preferences from data more difficult \cite{DBLP:conf/tacas/BortolussiGKN22}.
\qee
\label{ex:diff_scales}
\end{example}

\begin{example}
Another problem is posed by the \emph{non-linearity} of the preferences w.r.t. the distance to the threshold.
Furthermore, the measure of preference may even depend on the present context and time; for example, in a narrow passage the margin of deviation should be smaller than in a wide corridor.
\qee
\end{example}

Non-linearities can be captured via ``boxing'' (see Fig.~\ref{fig:avoid_obs_more_boxes}), i.e., geometrically delimiting different areas with the same level of preference.
Such constructs have appeared in probabilistic extensions of STL\footnote{Probabilistic STL (PrSTL) \cite{sadigh2016safe} or Gaussian Distribution Temporal Logic \cite{DBLP:journals/ijrr/LeahyCVJMSB19} can be used to express different margins of acceptance as confidence levels in probabilistic predicates.
	They are used for when there is uncertainty in the state and requires staying in a region with a given probability threshold.
	However, this again runs into the same problems as in Example~\ref{ex:diff_scales} and Example~\ref{ex:human} below, as well as the problem of many boxes.
	Indeed, because of the Boolean semantics of the probabilistic predicates, the issue of many ``boxes'' remains: different confidence parameters are needed for lighter and darker boxes. 
}, capturing smaller probability of more deviating behaviour.
However, to capture the degrees precisely, the complexity of the logical formulae grows, although the requirements are not really more complex.

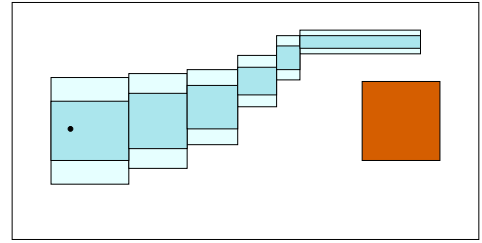
\begin{figure}[h]
	\centering\myspace
	
	\begin{tikzpicture}[scale=0.9]
	\begin{axis}[xmin=-1,xmax=5,ymin=-1.5,ymax=1.5,samples=50,width=\linewidth, height=0.6\linewidth,
	axis x line=box,
	axis y line =box,
	axis line style={-},
	enlarge x limits=0,
	enlarge y limits=0,
	ticks=none,
	x label style={anchor=west},
	every tick/.style={-},
	]
	\draw[fill=color1] (450, 100) rectangle (550,200) ;
	
	\filldraw[fill=lightcyan] (50,70) rectangle (150,205);
	\filldraw[fill=lightcyan] (150,90) rectangle (225,210)	;
	\filldraw[fill=lightcyan] (225,120) rectangle (290,215)	;
	\filldraw[fill=lightcyan] (290,168) rectangle (340,233)	;
	\filldraw[fill=lightcyan] (340,202) rectangle (370,258)	;
	\filldraw[fill=lightcyan] (370,235) rectangle (525,265)	;
	
	\filldraw[fill=blizzardblue] (50,100) rectangle (150,175)	;
	\filldraw[fill=blizzardblue] (150,115) rectangle (225,185)	;
	\filldraw[fill=blizzardblue] (225,140) rectangle (290,195)	;
	\filldraw[fill=blizzardblue] (290,183) rectangle (340,218)	;
	\filldraw[fill=blizzardblue] (340,215) rectangle (370,245)	;
	\filldraw[fill=blizzardblue] (370,242) rectangle (525,258)	;
	\filldraw (75, 140) circle (1pt);
	\end{axis}

	\end{tikzpicture}
	\caption[Preferred areas in STL boxing approach]{More (darker blue) and less (light blue) preferred areas captured by the STL ``boxing'' approach.}
	\label{fig:avoid_obs_more_boxes}\myspace
\end{figure}

{\color{blue}

}

\begin{example}
	Now we revisit the previous example with learning in mind.
	Consider the same avoidance maneuver again and how people behave in this situation, which indicates what the behavior of the robot should ideally be.
	In the narrow part with a 50 cm margin, being off by 10 cm may be perfectly fine, but being off by 40 cm is substantially worse, much less frequent among humans, and bordering on a real danger of collision.

	Further, being off by 40 cm in the wide part may, in contrast, be perfectly fine and considered normal human behavior.
	The robustness semantics, being a single numeric value of the signal combination, is too crude.
	Instead, a more refined notion of ``distance to the desired behavior'' should be considered.\qee
    \label{ex:human}
\end{example}
This motivates us to use \textbf{fuzzy} paths as specifications in our logic, which encapsulate the degrees of our satisfaction with possible paths—our \textbf{preferences}.
In short, instead of many boxes, we can consider two simple parameters---mean and deviation---per time point, yielding the nominal path and the deviation along it\footnote{This notion is akin to Gaussian Processes (GP), where a mean function and a covariance kernel function are used to define distributions over functions.
	However, the GP is primarily a probabilistic model, whereas our fuzzy paths are a way to express preferences over paths.
}.
Because the mean and variance at each time point capture the most relevant part of preferences, we want them to be \emph{native} in the logic.
The advantage is twofold: not only can we express preferences in a convenient way, but we can also learn them very easily:

\para{Are these preferences easy to mine?}
\emph{Learning STL formulas} can be classified into two main categories, template-based \cite{telex} and template-free \cite{grid-tli}. 
Finding the right template, again, requires a domain expert to suggest the templates and template-free approaches often produce formulas with similar ``boxing'' as we have seen in Fig.~\ref{fig:avoid_obs_bad} \cite{DBLP:conf/iros/Linard0LT22,grid-tli,kong2014temporal,bombara2016decision}.
Even more dramatically, in order to capture the experienced frequency (and thus preference) of various areas, for each box in Fig.~\ref{fig:avoid_obs_bad}, there would need to be multiple boxes of different widths, where wider and wider boxes determine the areas with specific lower and lower preference, see Fig.~\ref{fig:avoid_obs_more_boxes}.
This results in even clumsier, larger, and less precise formulas.
Moreover, current tools lack the capability to learn such frequency-based preferences and instead only try to fit the data to a specification. 
For further details and concrete examples of the deficiencies of STL learning, please see Appendix~\ref{app:learn-exper}.

In contrast, \emph{learning formulas of our proposed logic} from observed paths is easier.
Indeed, we can easily compute (i) the mean at each time instant, giving us a nominal, ``mean'' path (see Fig.~\ref{fig:avoid_obs_exper}), and (ii) the deviation at each time instant, giving us a natural, compact way of representing the \emph{frequency}—and thus preference—of farther areas.
Consequently, learning specifications from demonstrations as paths in our proposed logic, e.g., with Gaussian distributions describing their spread, is straightforward.

\begin{figure}[h]
	\centering\myspace
	\begin{tikzpicture}[scale=0.9]
	\begin{axis}[xmin=-1,xmax=5,ymin=-1.5,ymax=1.5,samples=50,width=\linewidth, height=0.6\linewidth,
	axis x line=box,
	axis y line =box,
	axis line style={-},
	enlarge x limits=0,
	enlarge y limits=0,
	ticks=none,
	x label style={anchor=west},
	every tick/.style={-},
	]
	\draw[fill=color1] (450, 100) rectangle (550,200) ;
	
	\draw[color=blue, line width=0.1mm] (75, 175) -- (250, 185) .. controls (300, 185) and (325, 240)  .. (405, 240) -- (500, 245) ;
	\draw[color=blue, line width=0.1mm] (75, 160) -- (250, 155) .. controls (300, 155) and (325, 235)  .. (400, 235) -- (500, 240) ;
	\draw[color=blue, line width=0.1mm] (75, 190) -- (250, 180) .. controls (300, 180) and (325, 220)  .. (400, 220) -- (500, 228) ;
	\draw[color=blue, line width=0.1mm] (75, 120) -- (250, 135) .. controls (300, 135) and (325, 220)  .. (410, 230) -- (500, 235) ;
	\draw[color=blue, line width=0.1mm] (75, 105) -- (260, 115) .. controls (300, 125) and (325, 250)  .. (450, 250) -- (500, 245) ;
	\draw[color=blue, line width=0.1mm] (75, 150) -- (200, 170) .. controls (275, 170) and (330, 210)  .. (400, 245) -- (500, 250) ;
	\draw[color=blue, line width=0.1mm] (75, 125) -- (250, 130) .. controls (300, 130) and (325, 210)  .. (425, 225) -- (500, 225) ;  
	\draw[color=blue, line width=0.1mm] (75, 155) -- (250, 150) .. controls (295, 145) and (325, 210)  .. (400, 255) -- (500, 250) ;  
	\draw[color=blue, line width=0.1mm] (75, 135) -- (250, 130) .. controls (290, 135) and (340, 260)  .. (385, 260) -- (500, 255) ;  
	\draw[color=blue, line width=0.1mm] (75, 165) -- (250, 160) .. controls (300, 150) and (335, 235)  .. (400, 230) -- (500, 230) ;  
	\draw[color=blue, line width=0.1mm] (75, 100) -- (250, 110) .. controls (300, 110) and (325, 235)  .. (400, 220) -- (500, 225);
	\draw[color=red, line width=0.5mm] (75, 140) -- (235, 140) .. controls (290, 140) and (325, 215)  .. (400, 240) -- (500, 240);
	
	\end{axis}	
	\end{tikzpicture}
	\caption{Real, observed paths are drawn in blue, the learned average in red.}
	\label{fig:avoid_obs_exper}\myspace
\end{figure}
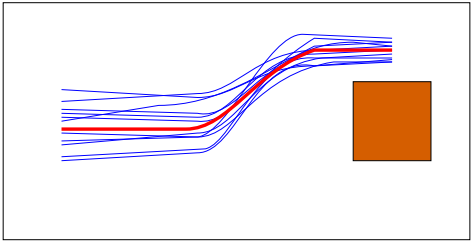

As a result, our logic can also be useful for imitation learning \cite{imitation}, where the task is to learn from demonstration and then operate in similar manner. 
Imitation learning finds applications in autonomous vehicles \cite{imitation-auto-vehicle-1,imitation-auto-vehicle-2,imitation-auto-vehicle-3}, robotics \cite{imitation-robotics-1, imitation-robotics-2}, game-playing \cite{imitation-game-1, imitation-game-2}, healthcare \cite{imitation-healthcare-1}, etc., where behaviors are acquired through observation of human experts. 

And here in the learning setting again, the clear separation of (i) the geometric and continuous-space aspects of the paths from (ii) the logical and temporal relationships of the path segments makes formulas in our formalism easier to understand and thus also \emph{easier to validate} once they are learned.

\section{Preliminaries}

In this paper, $\mathbb{N}$ and $\mathbb{R}$ denote the sets of naturals and reals respectively, $\bar{\mathbb{R}} = \mathbb{R} \cup \{\infty, -\infty\}$, $\mathbb{R}^+$ denotes non-negative reals and $\mathbb{R}^n$ the $n$-dimensional Euclidean space. 
We also use $\|\cdot\|$ to denote the Euclidean norm.
For a matrix $M$, $M^T$ is its transpose.

We assume that the workspace is an $n$-dimensional Euclidean space $E=E_1\times E_2\times\dots\times E_n\subseteq\bar{\reals}^n$.
A \emph{signal} is a continuous function $s_i:[0,\nu] \rightarrow E_i\subseteq \bar{\mathbb{R}}$ where $[0,\nu]$ is its time-horizon. 
We assume that the system has $n$ signals $s_1, s_2, \dots, s_n$ and a \emph{trajectory} with time horizon $[0,\nu]$ is a function $\zeta^{[0,\nu]}:[0,\nu]\rightarrow E$ such that $\zeta^{[0,\nu]}(t)= (s_1(t), s_2(t), \dots, s_n(t))$ for all $t\in[0,\nu]$.
We also use $\zeta$ to denote the trajectory if the time horizon is clear from the context.
A \emph{trajectory} $\zeta^{[0,\nu]}$ shifted by time $\delta$ is defined as $\zeta^{[0,\nu-\delta]}_{\rightarrow \delta}(t) := \zeta^{[0,\nu]}(t+\delta)$ for $t\in [0,\nu-\delta]$.

A function $f:A\rightarrow B$ is called \emph{Lipschitz Continuous} (LC) if there exists a constant $K$ such that $d_B(f(a),f(a'))\leq K\cdot d_A(a,a')$ for all $a,a' \in A$. 
Here, $A$ and $B$ are metric spaces with metrics $d_A$ and $d_B$ respectively, and $K$ is the Lipschitz constant.

An $n$-dimensional continuous probability distribution over a set $E\subseteq\reals^n$ is defined using a \emph{probability density function} $\mathcal{D}: E \rightarrow [0,1]$ such that $\int_{E}\mathcal{D}(\bm{x}) d\bm{x} = 1$.
Its mean vector is denoted by $\bm{\mu}_{\mathcal{D}}$ and the covariance matrix by $\bm{\sigma_{\mathcal{D}}}=(\sigma_{ij})_{n\times n}$.

\section{Syntax and Semantics}
In this section, we will give the formal definition of our fuzzy-path logic (\acrshort{fpl}) and its quantitative semantics, following the intuition and sketches from Introduction.

\subsection{Syntax}
A \emph{point-atom} is described as a probability distribution over the space $\Real^n$, which represents the desired values of a signal at a specific moment in time. 
An \emph{atom} generalizes this notion of a point-atom using the temporal dimension, outlining how signals should evolve over a period of time. 
Intuitively, it is expressed as a distribution that evolves with time. 
For the sake of readability, we restrict our atoms to be distributions characterized by the mean and the covariance matrix that are functions of time. Formally:
\begin{definition}
	An \emph{atom} is defined as a pair $\bm{\bm{\pi}}^{[0,\tau]} = \langle \bm{\mu}, \bm{\sigma}\rangle^{[0,\tau]}$ where $\tau$ is the time horizon, $\bm{\mu}(t)=(\mu_i(t))$ is the mean vector and $\bm{\sigma}(t) = (\sigma_{ij}(t))$ is the covariance matrix at time $t$. 
	Here, $\mu_i$ and $\sigma_{ij}$ are functions over $t\in[0,\tau]$ for each $0\leq i,j\leq n$.
\end{definition}
Using this definition of atoms, it is possible to work with distributions which can be uniquely identified using the mean and the covariance matrix, examples being Gaussian, uniform, or symmetric triangular distributions to name a few. 

We also define the \emph{true} atom $\top^{[0,\tau]}$ as a distribution with mean vector $\bm{0}$, and the covariance matrix being a diagonal matrix with all entries as $\infty$ for all $t\in [0,\tau]$.

Given an atom $\atom{}{\tau}$ and $t\leq\tau$, we use $\atom{}{t}$ to denote its initial part with horizon $[0,t]$.

\begin{figure}[!ht]
	\centering\myspace
	\begin{tikzpicture}[scale=1]
		\begin{axis}[xmin=-1,xmax=5,ymin=-1.5,ymax=1.5,samples=50,width=0.9\linewidth, height=0.52\linewidth,
			axis x line=box,
			axis y line =box,
			axis line style={-},
			enlarge x limits=0,
			enlarge y limits=0,
			xlabel=$x$,ylabel=$y$,
			xtick={0,2,4},
			ytick={-1,0,1},
			x label style={anchor=west},
			every tick/.style={-},
			]
			\addplot[
			color=color2,
			thick,
			mark=*,
			]
			coordinates {
				(0,0)(3,1)
			};
			\addplot[
			color=color3,
			dashed,
			]
			coordinates {
				(0,0.5)(3,1.25)
			};
			\addplot[
			color=color3,
			dashed,
			]
			coordinates {
				(0,-0.5)(3,0.75)
			};

			\draw[color=color3, dashed] (100, 150) ellipse (25 and 50);
			\draw[color=color3, dashed] (400, 250) ellipse (25 and 25);
			\draw[fill=color1] (450, 100) rectangle (550,200) ;
		\end{axis}		
	\end{tikzpicture}
	\myspace\myspace
	\caption[Atom]{The atom from Example~\ref{ex:atom}}
	\label{fig:atom}\myspace \myspace\myspace
\end{figure}
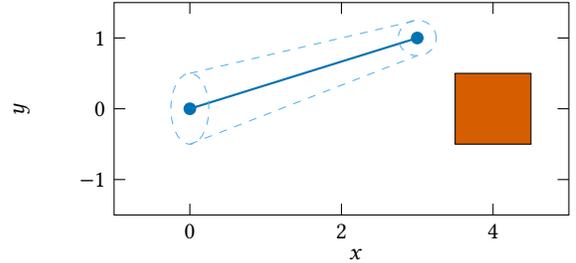

\begin{example}
	Again, for our running example, we use two signals $x$ and $y$, denoting the robot's position.
	Consider the \emph{mean} $\bm{\mu}(t) = 
	\begin{pmatrix}
		t \\ 
		t/3
	\end{pmatrix}$, 
	the \emph{covariance matrix} $\bm{\sigma}(t) = 
	\begin{pmatrix}
		1/16 & 0 \\
		0 & (1/2-t/12)^2
	\end{pmatrix}$,
	and the \emph{atom} $\bm{\bm{\pi_1}}^{[0,3]}(t)=\langle \bm{\mu}, \bm{\sigma}\rangle^{[0,3]}$. 
	This atom is shown in Figure~\ref{fig:atom}. The solid line represents the mean of the atom and the dotted line around it shows the standard deviation.
	\qee
	\label{ex:atom}
\end{example}

\noindent
Now, we have all the elements to define the complete syntax of FPL.

\begin{definition} A formula in \acrshort{fpl} is given by the following syntax:
\begin{equation}
	\notag
	\phi :=  
	\top^{[0,\tau]}\mid
	\bm{\bm{\pi}}^{[0,\tau]}\mid 
	\phi_1\lor\phi_2\mid 
	\phi_1\cdot\phi_2\mid
	\phi_1\U\phi_2
\end{equation}
\end{definition}

\noindent
Where $\bm{\pi}^{[0,\tau]}$ and $\top^{[0,\tau]}$ are atoms as described above, $\lor$ is the Boolean \emph{disjunction}, $\cdot$ is the \emph{concatenation} operator, and $\U$ is the \emph{until} operator.

\begin{example}
	Disjunction can be used, for instance, to capture the possibilities to avoid an obstacle in the corridor either from the left or from the right.
	The until operator intuitively captures switching from one behavior to another at some (a priori not known) time point.
	Besides, one could also introduce a weak until and further operators, see below.
	
	While the concatenation operator is usually not present in temporal logics, it is very natural in the context of motion planning. 
	To illustrate its use, consider two atoms $\atom{1}{3}$ and $\atom{2}{2}$. 
	Their concatenation $\atom{1}{3}\cdot\atom{2}{2}$  specifies we should follow $\atom{1}{3}$ for three time units and then follow $\atom{2}{2}$ for two time-units, thus effectively sequencing fully executed parts of a more complex move.
	
	This also explains why we assume that the time in all atoms starts from 0. 
	It can be achieved by a re-parametrization of time and the use of \emph{true} atom.
	For example, if we want to specify the atom $\langle\bm{\mu}, \bm{\sigma}\rangle^{[2,5]}$, we can convert it to $\bm{\top}^{[0,2]} \cdot \langle\bm{\mu}_{\rightarrow -2}, \bm{\sigma}_{\rightarrow -2}\rangle^{[0,3]}$ where $\bm{\mu}$ and $\bm{\sigma}$ have been shifted by two time units. 
	\qee
\end{example}

We also define $atoms(\phi)$ as the set of all atoms appearing in the syntax of $\phi$ and \emph{minimum atomic horizon} as $h_{min}(\phi) = \min\{\tau\mid \bm{\bm{\pi}}^{[0,\tau]}\in atoms(\phi)\}$.

Note that we do not use the Boolean \emph{negation} or \emph{conjunction} operators here; these are discussed in more detail in section \ref{app:extend_syntax}. 
However, we do allow the use of the \emph{bounded eventually} operator, $\F^{[0,\tau]} \phi := \bm{\top}^{[0,\tau]}\ \U\ \phi$ which can act as a syntactic sugar. 
On the other hand, defining the usual \emph{bounded always} operator is not possible because (i) there is no negation operator, making it syntactically infeasible, and (ii) due to the unusual semantics (defined next) require that an atom be satisfied for the entire given horizon.
We also discuss this more in Section~\ref{app:extend_syntax}.

\subsection{Semantics}

For a \emph{trajectory} $\zeta^{[0,\nu]}$, we monitor the satisfaction of an \logic\ formula using quantitative semantics, i.e. instead of saying that the formula is satisfied or not, we define the extent by which the formula is satisfied.

A \emph{fuzzy path} is any sequence of atoms joined with a concatenation operator, with
concatenation of two fuzzy paths written as $\bm{p}^{[0,v]}=\bm{p_1}^{[0,v_1]} \cdot \bm{p_2}^{[0,v_2]}$ where $v=v_1+v_2$.
For a formula $\phi$, we define the \emph{set of fuzzy paths of $\phi$} as:
\[\mathcal{I}(\phi):=
\begin{cases}
	\{\bm{\pi}^{[0,\tau]}\} & \phi = \bm{\bm{\pi}}^{[0,\tau]} \\
	\mathcal{I}(\phi_1)\cup \mathcal{I}(\phi_2) & \phi_1\lor\phi_2\\[5pt]
	
	\!\begin{aligned}
		\{\bm{p_1}^{[0,v_1]}\cdot \bm{p_2}^{[0,v_2]}\mid\  &\bm{p_1}^{[0,v_1]}\in\mathcal{I}(\phi_1) \\
		\text{ and } &\bm{p_2}^{[0,v_2]} \in \mathcal{I}(\phi_2)\}
	\end{aligned} & \phi = \phi_1\cdot\phi_2 \\[15pt]
	\!\begin{aligned}
		\{\bm{p_1}^{[0,t]} \cdot \bm{p_2}^{[0,v_2]} \mid\  &\bm{p_1}^{[0,v_1]} \in \fuzzypathset(\phi_1),\\
		t\leq v_1 \text{ and } &\bm{p_2}^{[0,v_2]}\in\fuzzypathset(\phi_2)\}
	\end{aligned} & \phi_1 \U \phi_2
\end{cases}
\]
Intuitively, it is the set of all prototypical ways to satisfy $\phi$.
Semantics of the formula $\phi$ evaluated on a trajectory $\zeta$ is defined as the minimum distance to any fuzzy path in this set; intuitively stating how far $\zeta$ is from prototypical satisfaction:
\[\dist(\zeta, \phi) = \min_{\bm{p} \in \mathcal{I}(\phi)}\dist(\zeta, \bm{p})\]

There are several ways to define the distance $\dist(\zeta, \bm{p})$ of a trajectory to a fuzzy path which mainly depend on the application and the particular use case. 
We explain these choices next and also discuss what to use where.

\subsubsection{Distance of a point to a point-atom}
First, we need to define a function $d$ which computes a distance between a point and a \emph{point-atom}.
This distance should have the property that it gradually decreases as we approach (in the Euclidean sense) the center of the point-atom, and vice versa.
Moreover, we want  the distance to reflect the ``frequency'' of the points to, i.e., the more typical the ``closer'' it is. 

As the main example of a metric here, we use the \emph{Mahalanobis distance}, which defines the distance between a point and an $n$-dimensional distribution, intuitively, by how many standard deviations away a point is from the mean.
It also takes into account the correlation between signals.
Formally, for a point $\bm{x}\in \Real^n$ and a distribution $\bm{\mathcal{D}}_{\bm{\mu},\bm{\sigma}}$ with mean vector $\bm{\mu}$ and covariance matrix $\bm{\sigma}$
\[d^2_M(\bm{x}, \bm{\mathcal{D}}_{\bm{\mu},\bm{\sigma}}):=
\sqrt{(\bm{x}-\bm{\mu})^T\bm{\sigma}^{-1}(\bm{x}-\bm{\mu})}\]

We show the evaluation of this metric for our running example on three points.
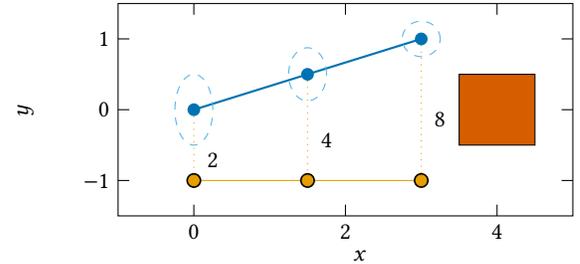
\begin{figure}[!ht]
	\centering\myspace
	\begin{tikzpicture}[scale=1]
		\begin{axis}[xmin=-1,xmax=5,ymin=-1.5,ymax=1.5,samples=50,width=0.9\linewidth, height=0.52\linewidth,
			axis x line=box,
			axis y line =box,
			axis line style={-},
			enlarge x limits=0,
			enlarge y limits=0,
			xlabel=$x$,ylabel=$y$,
			xtick={0,2,4},
			ytick={-1,0,1},
			x label style={anchor=west},
			every tick/.style={-},
			]
			\addplot[
			color=color2,
			thick,
			mark=*,
			]
			coordinates {
				(0,0)(1.5,0.5)(3,1)
			};
			
			\draw[color=color3, dashed] (100, 150) ellipse (25 and 50);
			\draw[color=color3, dashed] (400, 250) ellipse (25 and 25);
			\draw[color=color3, dashed] (250, 200) ellipse (25 and 37.5);
			
			\addplot[
			color=color4,
			mark=*,
			]
			coordinates {
				(0,-1)(1.5,-1)(3,-1)
			};
			\addplot[
			color=color4,
			dotted,
			]
			coordinates {
				(0,0)(0,-1)
			};
			\addplot[
			color=color4,
			dotted,
			]
			coordinates {
				(1.5,0.5)(1.5,-1)
			};
			\addplot[
			color=color4,
			dotted,
			]
			coordinates {
				(3,1)(3,-1)
			};
			\draw[fill=color1] (450, 100) rectangle (550,200) ;
			
			\node[label={[yshift=2ex, xshift=4ex]180:{$2$}},circle,fill,inner sep=2pt] at (axis cs:0,-1) {};
			\node[label={[yshift=4ex, xshift=4ex]180:{$4$}},circle,fill,inner sep=2pt] at (axis cs:1.5,-1) {};
			\node[label={[yshift=6ex, xshift=4ex]180:{$8$}},circle,fill,inner sep=2pt] at (axis cs:3,-1) {};
		\end{axis}
	\end{tikzpicture}
	\myspace\myspace
	\caption[Distance between a distribution and a point]{
		Distance of the points $(0,-1)$, $(1.5,-1)$, and $(3,-1)$ from the point-atoms $\bm{\pi_1}^{[0,3]}(0)$, $\bm{\pi_1}^{[0,3]}(1.5)$, and $\bm{\pi_1}^{[0,3]}(3)$ respectively where $\bm{\pi_1}^{[0,3]}$ is the atom from Example \ref{ex:atom}.
	}
	\label{fig:sem-point-atom}\myspace
\end{figure}

\begin{example}
	We look at how semantics behave for the trajectory $\zeta^{[0,3]}(t) = \begin{pmatrix}
		t \\
		-1
	\end{pmatrix}$ w.r.t. the atom from Example \ref{ex:atom} at $t=0$, $t=1.5$, and $t=3$. Look at Figure \ref{fig:sem-point-atom} for a graphical representation where the distances are indicated next to the dotted lines for all three points.
	\qee
	\label{ex:sem-point-atom}
\end{example}

We can also lift the one-dimensional Mahalanobis to higher dimensions by taking maximum among all dimensions, i.e.:
\[d^{\infty}_M(\bm{x}, \bm{\mathcal{D}}_{\bm{\mu},\bm{\sigma}}) :=\max_i\Big\{\Big|\frac{s_i - \mu_i}{\sigma_{ii}}\Big|\Big\}\]

For other examples of metrics, one can consider measuring in \emph{quantiles}, i.e. ``how many percentiles away from the ideal point are we?''. 
For simplicity of explanation, consider a single signal $x$.
We look at the quantile of its value, i.e.~$F(x)$, where $F$ is the cumulative distribution function of the distribution of the point-atom.
Its distance is then the amount of all closer points to the ideal mean, i.e. twice the difference to the quantile of the mean, which is 0.5.
Hence the distance can be defined as $|2F(x)-1|$.
We can generalize this to higher dimensions as:
\[d_{q}(\bm{x}, \bm{\mathcal{D}}_{\bm{\mu},\bm{\sigma}}):=
Pr(\bm{y}~\big|~ \|\bm{y}-\bm{\mu}\|\leq \|\bm{x}-\bm{\mu}\|; \bm{\mathcal{D}}_{\bm{\mu},\bm{\sigma}})\]
This metric may not be as useful for Gaussian distributions since computing this value might be a bit problematic.
Nevertheless, it would work well for uniform or symmetric triangular distributions. 

The final example is \emph{Euclidean distance}, but not of the resulting combination $f(\mathbf {x})$ as in STL, but of the individual differences of signals.
In other words, we could set the covariance matrix to the identity matrix $I$ and use the $d_1$ metric.
This ignores the distribution aspect, yet captures the difference in the \emph{signal} that is required to satisfy the formula.
In other words, it captures the robustness of signals, not of some linear combination of theirs as in STL robustness.

Note that all our distances describe how much the signals need to change to reach the center in terms of how probable/frequent discrepancies are; in the case of Mahalanobis distance, it is specifically relative to the deviation.
In contrast, the robustness does not speak about the change in the signals, but only about the extent the overall linear function should change; and of course, it does not relativize to the deviation.

This generalization allows us to account for different levels of deviation permitted for different directions/signals.
This might be useful when we cannot observe real paths, but only have a priori information about the separate signals, not including the information about their correlation.
For example, while $w$ and $o$ have some positive deviations in the narrow corridor of width 1 meter, $w+o$ as an observed combination has zero deviation and is identical to 1 here.

\subsubsection{Distance of a trajectory to an atom}
We can extend the definition of distance from point-atoms to {atoms} and there are several ways to do this as well.
We describe the following two:
\begin{align}
	&\dist_{max}(\zeta^{[0,\nu]}, 
	\atom{}{\tau}) := 
	\begin{cases}
		\underset{t\in[0,\tau]}{\max}\ d(\zeta^{[0,\nu]}(t), \atom{}{\tau}(t))& \nu \geq \tau \\
		\infty &o.w.
	\end{cases} \notag\\
	&\dist_{int}(\zeta^{[0,\nu]}, \atom{}{\tau}) := 
	\begin{cases}
		\int_{0}^{\tau}d(\zeta^{[0,\nu]}(t), \atom{}{\tau}(t)) dt & \nu \geq \tau \\
		\infty &o.w.
	\end{cases} \notag
\end{align}

If $\tau>\nu$, $\zeta^{[0,\nu]}(t)$ is not defined for $\nu<t\leq \tau$ and we assign $d(\zeta^{[0,\nu]}(t), \bm{\pi}^{[0,\tau]}(t))=\infty$ there.
The choice of infinity here might look random at first but if we assign it e.g. 0 instead, the trajectory with no time horizon suddenly becomes best since it'll always be at 0 distance. 
To exclude this behavior, we need to define distance as infinity if the atom is defined at some point in time but the trajectory is not.

Intuitively, $\dist_{max}$ captures the ``worst case'' and $\dist_{int}$ is more useful to accumulate bad behavior, or equivalently average it over 
time\footnote{We do not define it to be the average exactly since we are not dividing the integration by the length of the time-interval. 
	However, it is not really a restriction since it is possible to compute the semantics for the whole trajectory and divide by the horizon at the end to get the average value.
	Also, it would become unnecessarily tedious to combine semantics of subformulas if always divided by the length of each interval.}.
Averaging might be less relevant to verification of safety, but more to adhering to nominal (or most desirable) behavior.
Averaging has been advocated, e.g., in  \cite{DBLP:journals/automatica/LindemannD19,mehdipour2019average}.
It is also possible to extend it using some other functions which fits the application better as long as it is computable. 
But for this paper, we will only talk about these two distances and whenever we want to talk about general metric, we will denote it by $\dist$.

\begin{example}
	\label{ex:sem-atom}
	We continue our running example to compute the semantics of the path $\zeta^{[0,3]}(t) = 
	\begin{pmatrix}
		t \\
		-1
	\end{pmatrix}$ w.r.t the atom $\bm{\pi_1}^{[0,3]}$ from Example \ref{ex:atom}. Figure \ref{fig:atom} shows the trajectory in yellow.
	The computed distance is $8$ and $(\approx)4.3$ using maximum and integration semantics respectively (see Appendix~\ref{app:example}).

	\qee
\end{example}

\subsubsection{Distance of a trajectory to a fuzzy path}

Once we have fixed the semantics of atoms, we can extend it to fuzzy paths, i.e., concatenations of finitely many atoms.
The definition here depends on the choice of distance for atoms.
For the choice of $\dist_{max}$ it is better to use the definition:
{\small
\[
\dist_{max} (\zeta^{[0,\nu]}, \atom{}{\tau}\cdot \bm{p}^{[0,\tau']}):=
\begin{cases}
    \max\big\{\dist_{max}(\zeta^{[0,\tau]}, \atom{}{\tau}), \\
    \quad \dist_{max}(\zeta^{[0,\nu-\tau]}_{\rightarrow\tau}, \bm{p}^{[0,\tau']})\big\} & \nu > \tau \\
    \infty & \text{o.w.}
\end{cases}
\]
}

\noindent
since it goes well with the intuitive meaning of the worst case. 
In contrast, for $\dist_{int}$, it is more sensible to take the sum:
\[
	\dist_{int} (\zeta^{[0,\nu]}, \atom{}{\tau}\cdot \bm{p}^{[0,\tau']}):= 
	\begin{cases}
		\dist_{int}(\zeta^{[0,\tau]}, \atom{}{\tau}) + \\
		\dist_{int}(\zeta^{[0,\nu-\tau]}_{\rightarrow\tau},\phi_2) & \nu > \tau \\
		\infty & \text{o.w.}
	\end{cases} \notag
\]

\noindent
For any other metric $\dist$, the semantics of the concatenation operator needs to be defined accordingly.

\begin{remark}
	\label{rem:semantics}
	Observe that we could not simply define semantics in a straightforward inductive way.
	If we define the until operator by  \[\dist_{int}(\zeta^{[0,\nu]}, \phi_1\U \phi_2)= \min_{t} \big\{\dist_{int}(\zeta^{[0,t]}, \phi_1) + \dist_{int}(\zeta^{[0,\nu-t]}_{\rightarrow t}, \phi_2)\big\}\]
	then this expression is not well-defined because the domain of $t$ is not clear. 
	For example if $\phi_1 = \atom{1}{\tau_1}\lor \atom{2}{\tau_2}$ then $t$ can be in $[0,\tau_1]$ or $[0,\tau_2]$ depending on which atom the initial part of $\zeta$ is being matched to. 
	This might not look tedious at the first glance but every disjunction operator introduces another split between the domain of $t$.
\end{remark}

\subsubsection{Left normal form (LNF)}
We say an FPL formula is in left normal form if the left side (i.e.\ the first argument) of all concatenation and until operators are atoms.
We show that any FPL formula can be converted into a normal form using the following lemma.
\begin{lemma} 
	\label{lem:lnf}
	The following equivalences hold for \acrshort{fpl} formulas
	\begin{enumerate}
		\item $(\phi_1\lor\phi_2)\cdot\phi_3 \equiv (\phi_1 \cdot\phi_3) \lor (\phi_2\cdot\phi_3)$
		\item $(\phi_1\cdot\phi_2)\cdot\phi_3 \equiv \phi_1 \cdot (\phi_2\cdot\phi_3)$
		\item $(\phi_1\U\phi_2)\cdot\phi_3 \equiv \phi_1 \U (\phi_2\cdot\phi_3)$
		\item $(\phi_1\lor\phi_2)\U\phi_3 \equiv (\phi_1\U\phi_3) \lor(\phi_2 \U \phi_3)$
		\item $(\phi_1\cdot\phi_2)\U\phi_3 \equiv (\phi_1\U\phi_3) \lor(\phi_1\cdot(\phi_2 \U \phi_3))$
		\item $(\phi_1\U\phi_2)\U\phi_3 \equiv \phi_1\U(\phi_2\U\phi_3)$	
	\end{enumerate}
\end{lemma}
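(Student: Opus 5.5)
Since $\dist(\zeta,\phi)=\min_{\bm p\in\fuzzypathset(\phi)}\dist(\zeta,\bm p)$ depends on $\phi$ only through its set of fuzzy paths, I will read $\equiv$ as semantic equivalence and prove the stronger statement that in each of the six items both sides have the same set $\fuzzypathset(\cdot)$. Semantic equivalence then holds for every choice of the distance $\dist$ (max or integral). The only extra ingredient is that concatenation of fuzzy paths is associative with additive horizons. This holds because a fuzzy path is by definition a finite sequence of atoms joined by $\cdot$, so I treat fuzzy paths as words over atoms, with $(\bm a\cdot\bm b)\cdot\bm c=\bm a\cdot(\bm b\cdot\bm c)$ and horizon $v_a+v_b+v_c$.

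With that, items (1), (2), (4) follow by unfolding the definition of $\fuzzypathset$. For (1) and (4), concatenation and the until-construction distribute over the union in their first argument. For (2), associativity of word concatenation gives equality of the two sets $\{p_1p_2p_3\}$. Item (3) goes the same way. The left side consists of paths $(\bm p_1^{[0,t]}\cdot\bm p_2)\cdot\bm p_3$ with $t\le v_1$. The right side consists of paths $\bm p_1^{[0,t]}\cdot(\bm p_2\cdot\bm p_3)$ with the same constraint, and associativity closes it. Item (6) is similar. A left-side element is $\bm q^{[0,t']}\cdot\bm p_3$ with $\bm q=\bm p_1^{[0,t]}\cdot\bm p_2$ and $t'\le t+v_2$, while the right side has $\bm p_1^{[0,t]}\cdot\bm p_2^{[0,s]}\cdot\bm p_3$ with $s\le v_2$. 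I then split by whether $t'\le t$ or $t'>t$. If $t'\le t$, the prefix is $\bm p_1^{[0,t']}$, and taking $s=0$ on the right matches it. If $t'>t$, take $s=t'-t$.

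The key auxiliary fact, which I expect to be the main obstacle, is a truncation identity: for a path $\bm a\cdot\bm b$ with $\bm a$ of horizon $v_a$,
\[(\bm a\cdot\bm b)^{[0,t]}=\bm a^{[0,t]}\ \text{if } t\le v_a,\qquad (\bm a\cdot\bm b)^{[0,t]}=\bm a\cdot\bm b^{[0,t-v_a]}\ \text{otherwise}.\]
This needs the prefix operation $^{[0,t]}$ to be extended from atoms to fuzzy paths in the natural way. I will state that extension explicitly, since the paper defines prefixes only for atoms. I will also allow zero-horizon prefixes $\bm p^{[0,0]}$ and treat them as the neutral element, so that the case $t=0$ (and $s=0$ above) is harmless. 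With the identity in hand, both inclusions of (6) are bookkeeping.

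Item (5) also uses the truncation identity. A left-side element is $(\bm p_1\cdot\bm p_2)^{[0,t]}\cdot\bm p_3$ with $t\le v_1+v_2$. If $t\le v_1$, this equals $\bm p_1^{[0,t]}\cdot\bm p_3$, which lies in $\fuzzypathset(\phi_1\U\phi_3)$. Otherwise it equals $\bm p_1\cdot\bm p_2^{[0,t-v_1]}\cdot\bm p_3$, which lies in $\fuzzypathset(\phi_1\cdot(\phi_2\U\phi_3))$. Conversely, each element of the two disjuncts arises this way: choose $t\le v_1$ for the first disjunct and $t=v_1+s$ for the second. The boundary overlap at $t=v_1$ is harmless because a union is taken.
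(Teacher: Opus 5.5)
Your route is the paper's. It also reduces each item to equality of the sets $\fuzzypathset(\cdot)$, and its computations for (5) and (6) are your truncation case split, left implicit.

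The gap is the step you call harmless: treating zero-horizon pieces $\bm p^{[0,0]}$ as a neutral element. Equality of path sets \emph{modulo deleting such pieces} gives semantic equivalence only if $\dist$ ignores them. For $\dist_{int}$ it does, since $\int_0^0=0$. For $\dist_{max}$ as the paper defines it, it does not: a piece $\bm p^{[0,0]}$ sitting at time $t$ contributes $d(\zeta(t),\bm p(0))$ to the maximum. So $\bm p_1^{[0,t]}\cdot\bm p_3$ and $\bm p_1^{[0,t]}\cdot\bm p_2^{[0,0]}\cdot\bm p_3$ generally have different values. Hence your claim that equivalence holds ``for every choice of $\dist$ (max or integral)'' does not follow, and for item (6) it is false. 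Items (1)--(5) survive. In (5), with your truncation, the only extra right-hand paths $\bm p_1\cdot\bm p_2^{[0,0]}\cdot\bm p_3$ are dominated by $\bm p_1\cdot\bm p_3$, which lies on both sides.

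Here is a counterexample to (6) under $\dist_{max}$. Take one signal with unit variance and Mahalanobis distance, let $\bm{\pi_1}^{[0,1]}$ and $\bm{\pi_3}^{[0,1]}$ have constant mean $0$ and $\bm{\pi_2}^{[0,1]}$ constant mean $100$, and let $\zeta^{[0,3]}\equiv 0$.
By your truncation identity, cutting $\bm{\pi_1}^{[0,1]}\cdot\bm{\pi_2}^{[0,1]}$ at time $1/2$ puts $\bm{\pi_1}^{[0,1/2]}\cdot\bm{\pi_3}^{[0,1]}$ into $\fuzzypathset((\bm{\pi_1}\U\bm{\pi_2})\U\bm{\pi_3})$, at distance $0$.
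Every path of $\bm{\pi_1}\U(\bm{\pi_2}\U\bm{\pi_3})$ has the form $\bm{\pi_1}^{[0,t]}\cdot\bm{\pi_2}^{[0,s]}\cdot\bm{\pi_3}^{[0,1]}$, and its $\bm{\pi_2}$ piece costs $100$ even when $s=0$.

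The other truncation convention (keep every atom, possibly with length $0$) would make (6) a literal set identity. But then (5) fails on the same data: every path of $(\bm{\pi_1}\cdot\bm{\pi_2})\U\bm{\pi_3}$ contains a $\bm{\pi_2}$ piece, while the disjunct $\bm{\pi_1}\U\bm{\pi_3}$ still contains $\bm{\pi_1}^{[0,1/2]}\cdot\bm{\pi_3}^{[0,1]}$. So no bookkeeping choice rescues both items. The paper's proof hides the same identification: its (5) drops the unused atom, while its (6) writes it as $\bm{q_2}^{[0,t'']}$ with $t''=0$.

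What your argument does establish is all six items for $\dist_{int}$, and for a max semantics redefined to skip zero-horizon atoms. State that restriction or redefinition explicitly, rather than presenting the neutral element as a notational convenience.
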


\begin{theorem}
	Any FPL formula can be converted to an equivalent left normal form formula.
\end{theorem}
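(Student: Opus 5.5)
We prove the statement by well-founded induction, using the equivalences of Lemma~\ref{lem:lnf} as left-to-right rewrite rules applied inside formulas. First we note that equivalence is a congruence. The semantics of $\phi$ depends only on $\fuzzypathset(\phi)$, and $\fuzzypathset$ is defined compositionally from the sets of the immediate subformulas. Hence replacing a subformula by an equivalent one leaves the semantics unchanged. Strictly, the lemma's equivalences should hold at the level of the sets $\fuzzypathset(\cdot)$ (or at least of the resulting semantics for every trajectory), and we take them in that form. Disjunction needs no normalisation, since the LNF condition only constrains the left arguments of $\cdot$ and $\U$.

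The procedure is by structural induction on $\phi$, showing that every formula has an equivalent LNF formula $\mathrm{nf}(\phi)$. Atoms and $\top^{[0,\tau]}$ are already in LNF. For $\phi_1\lor\phi_2$ we set $\mathrm{nf}(\phi_1)\lor\mathrm{nf}(\phi_2)$. For $\phi_1\cdot\phi_3$ and $\phi_1\U\phi_3$ we first normalise both arguments and then define an auxiliary operation that pushes the left argument into atom position. For this we do an inner induction on the structure of the left argument $\psi$, with the right argument $\chi$ held fixed and already in LNF:
\begin{enumerate}
\item If $\psi$ is an atom, we are done.
\item If $\psi=\psi_1\lor\psi_2$, we use rules (1) and (4) and recurse on $\psi_1$ and $\psi_2$.
\item If $\psi=\psi_1\cdot\psi_2$ with $\psi_1$ an atom (since $\psi$ is in LNF), rule (2) yields $\psi_1\cdot(\psi_2\cdot\chi)$. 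We then recurse on $\psi_2\cdot\chi$, whose left argument $\psi_2$ is smaller than $\psi$.
\item If $\psi=\psi_1\U\psi_2$, rule (3) gives $\psi_1\U(\psi_2\cdot\chi)$, and rule (6) gives $\psi_1\U(\psi_2\U\chi)$.
\item For the until case with $\psi=\psi_1\cdot\psi_2$, rule (5) yields $(\psi_1\U\chi)\lor(\psi_1\cdot(\psi_2\U\chi))$, where $\psi_1$ is an atom. We then recurse on $\psi_2\U\chi$.
\end{enumerate}

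The main obstacle is termination, because rules (1), (4) and (5) duplicate the subformula $\chi$ (rule (5) also duplicates $\psi_1$). We handle this by always recursing on the left argument, which strictly decreases in size, while the right argument is a formula already in LNF. The newly built right arguments, such as $\psi_2\cdot\chi$, are normalised by the same inner procedure on the strictly smaller left argument $\psi_2$, so the inner induction measure is the size of the left operand. Duplicated copies of $\chi$ are already normal and never need to be rewritten again.

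Each rewrite preserves equivalence by Lemma~\ref{lem:lnf} and the congruence property. The final output has atoms as the left arguments of all $\cdot$ and $\U$ operators, which completes the induction. The blow-up in formula size may be exponential, but that does not affect the statement.
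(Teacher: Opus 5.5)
Your proof is correct and takes the paper's route: structural induction, removing non-atomic left arguments of $\cdot$ and $\U$ with the identities of Lemma~\ref{lem:lnf}. You add two details the paper's one-line sketch omits. One is the congruence of equivalence, which follows from the compositionality of $\fuzzypathset$. The other is the inner induction on the size of the already-normalised left operand with the normalised right operand fixed, and this is what makes the rewriting terminate even though rules (1), (4) and (5) duplicate subformulas.
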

\begin{proof}
	(Sketch.) We proceed by structural induction and for concatenation or until operators, we use the corresponding identity from Lemma~\ref{lem:lnf} to simplify it.
\end{proof}

\section{Computing semantics}
\label{sec:semantics-assumptions}

We discuss computing exact semantics for our framework in Appendix~\ref{app:exact_semantics}, where we identify that different choices of distance metric, semantics, and classes of functions allowed, would result in solving different optimization problems that require different techniques. 
We also conclude, with the help of an example, why computing the exact value may require solving many optimization problems.   
Therefore, instead of giving a different analysis and algorithm to solve the optimization problems for each class of functions and the chosen distance metric, we suggest to compute an approximation of the semantics with a bounded error.
However, for completely general trajectories, even simpler semantics become too difficult to work with and even undecidable because of the continuous domain \cite{DBLP:conf/stoc/HenzingerKPV95}.
Consequently, to approximate it with reasonable bound on the error, we need some computability assumptions on the atoms and some LC assumptions on trajectories, atoms, and the distance metric $d$ which computes the distance between a point and a point-atom.
In particular, 
\begin{itemize}
	\item each system trajectory $\zeta^{[0,\nu]}(\cdot)$ is LC (with own Lipschitz constant), i.e., $\exists K_{\zeta}: \|\zeta^{[0,\nu]}(t)-\zeta^{[0,\nu]}(t')\|\leq K_{\zeta}|t-t'|$ for all $t,t'\in[0,\nu]$
	\item distance $d(\cdot, \atom{}{\tau})$ to each point-atom is LC in space, i.e., $\exists K_{d,\zeta}:|d(\bm{x},\bm{\pi})-d(\bm{y},\bm{\pi})| \leq K_{d,\zeta} \| \bm{x} - \bm{y} \|$ for all $\bm{x},\bm{y} \in E$ and $\bm{\pi}\in atoms(\phi)$
	\item distance $d(\bm{x}, \atom{}{\tau}(\cdot))$ to each atom $\atom{}{\tau}$ is LC in time, i.e., $\exists K_{d,\bm{\pi}}\forall \bm{\pi}\in atoms(\phi):|d(\bm{x}, \bm{\pi}(t)) - d(\bm{x}, \bm{\pi}(t'))|\leq K_{d,\bm{\pi}}|t-t'|$ for $t,t'\in[0,\tau]$ and all $\bm{x}$
	\item $\dist(\zeta^{[0,\nu]}, \atom{}{\tau})$ is computable for all $\atom{}{\tau}\in atoms(\phi)$.
\end{itemize}
The first assumption here restricts the system but this is a standard assumption when dealing with systems over continuous domains~\cite{lipschitz-assumption-1, lipschitz-assumption-2}. 
This is required to exclude the behaviors where system behaves unpredictably which make things difficult to compute.

The second assumption essentially says that the distance to a point-atom changes reasonably as one goes closer or farther from its mean.
The third assumption is similar to the second one but instead of dealing with space, it deals with time. 
It states that, as the time passes, the distance of a point to the "moving" distribution changes reasonably.
For Mahalanobis distance, these assumptions are satisfied if all atoms are LC, i.e., each entry in the mean vector $\bm{\mu}(t)$ and the covariance matrix $\bm{\sigma}(t)$ is LC.

The last assumption here deals with computability of semantics for atoms.
It is satisfied if, for example, the atoms are piecewise linear and the trajectories are also linear combined with the Mahalanobis distances $d_M^2$ or $d_M^\infty$.
For more complicated trajectories and atoms, this assumption can be relaxed to ``approximable with a known bound on the error''.
This would, in turn, increase the error while approximating the semantics.

\subsection{Approximating Algorithm}
\label{sec:alg}
We give a recursive procedure to approximate the semantics for FPL formulas in left normal form with a bound on the error.
The pseudo-code is given in Algorithm~\ref{alg:main}.
It starts by looking at the outermost operator and computes the semantics for the sub-formulas and returns the value by combining them using the appropriate definition of $\dist$.
We use $\tilde{\dist}$ to denote the computed semantics using this procedure.
If the outermost operator is \emph{disjunction} then it is computed as 
\[\tilde{\dist}(\zeta^{[0,\nu]}, \phi_1 \lor \phi_2) = \min\{\tilde{\dist}(\zeta^{[0,\nu]}, \phi_1), \tilde{\dist}(\zeta^{[0,\nu]}, \phi_2)\}
\]
and for the \emph{concatenation} operator, we use
\[
\tilde{\dist}_{max}(\zeta^{[0,\nu]}, \atom{}{\tau} \cdot \phi_2) = 
\begin{cases}
    \max\big\{\tilde{\dist}_{max}(\zeta^{[0,\tau]}, \atom{}{\tau}), \\
    \quad \tilde{\dist}_{max}(\zeta^{[0,\nu-\tau]}_{\rightarrow \tau}, \phi_2)\big\} 
    & \nu \geq \tau \\
    \infty & \text{o.w.}
\end{cases}
\]
\[
\tilde{\dist}_{int}(\zeta^{[0,\nu]}, \atom{}{\tau} \cdot \phi_2) = 
\begin{cases}
    \tilde{\dist}_{int}(\zeta^{[0,\tau]}, \atom{}{\tau})\ + \\
    \quad \tilde{\dist}_{int}(\zeta^{[0,\nu-\tau]}_{\rightarrow \tau}, \phi_2)
     & \nu \geq \tau \\
    \infty & \text{o.w.}
\end{cases}
\]
Finally, for the until formula, we discretize the time interval $[0,\tau]$ into small steps of size $\delta$ and use these points to compute the semantics as follows
\[\tilde{\dist}(\zeta^{[0,\nu]}, \bm{\pi_1}^{[0,\tau]} \U \phi_2) = \min_{t\in S}\tilde{\dist}(\zeta, \bm{\pi_1}^{[0,t]} \cdot \phi_2)\]
where $S = \{m\delta \mid m\in \mathbb{N},\ m\delta \leq \tau\} \cup \{\tau\}$ 
is the discretization.

\begin{algorithm}[t!]
	\caption{Algorithm to compute semantics}
	\small
	\label{alg:main}
	\begin{algorithmic}[1]
		\Procedure{ComputeSemantics}{$\zeta^{[0,\nu]}, \phi$}
		\Switch{$\phi$}
			\Case{$\atom{}{\tau}$}
				\If{$\tau > \nu$} \Return $\infty$
				\Else 
					\State\Return \Call{SemanticsOfAtoms}{$\zeta^{[0,\nu]}, \atom{}{\tau}$}
				\EndIf
			\EndCase
			\Case{$\phi_1 \lor \phi_2$}
				\State $x_1\gets$ \Call{ComputeSemantics}{$\zeta^{[0,\nu]}, \phi_1$}
				\State $x_2\gets$ \Call{ComputeSemantics}{$\zeta^{[0,\nu]}, \phi_2$}
				\State \Return $\min\{x_1,x_2\}$
			\EndCase
			\Case{$\atom{}{\tau}~\cdot~\phi_2$}
				\If{$\tau > \nu$} \Return $\infty$
				\Else 
					\State $x_1\gets$ \Call{ComputeSemantics}{$\zeta^{[0,\tau]}, \atom{}{\tau}$}
					\State $x_2\gets$ \Call{ComputeSemantics}{$\zeta^{[0,\nu-\tau]}_{\rightarrow \tau}, \phi_2$}
					\State \Return $x_1 + x_2$ or $\max(x_1,x_2)$
				\EndIf
			\EndCase
			\Case{$\atom{}{\tau}~\U~\phi_2$}
				\State $S \gets$ \Call{Discretize}{$[0,\tau]$}
				\For{$\tau'\in S$}
					\If{$\tau' > \nu$} \Return $\infty$
					\Else 
						\State $x_1^{\tau'}\gets$ \Call{ComputeSemantics}{$\zeta^{[0,\tau']}, \atom{}{\tau'}$}
						\State $x_2^{\tau'}\gets$ \Call{ComputeSemantics}{$\zeta^{[0,\nu-\tau']}_{\rightarrow \tau'}, \phi_2$}
					\EndIf
				\EndFor
				\State \Return $\min_{\tau'} (x_1^{\tau'} + x_2^{\tau'})$ or $\min_{\tau'}\max \{x_1^{\tau'}, x_2^{\tau'}\}$
			\EndCase
		\EndSwitch
		\EndProcedure
	\end{algorithmic}
\end{algorithm}
\myspace\myspace

\begin{example}
	\label{ex:until}
	We extend our previous example with an until operator and consider
	$\phi = \bm{\pi_1}^{[0,4]}\ \U\ \bm{\pi_2}^{[0,2]}$
	where 
	$\bm{\pi_1}$ is the same atom from Example \ref{ex:atom} except that the time horizon is increased to 4 units.
	and $\bm{\pi_2}^{[0,2]}= \langle\bm{\mu_2}, \bm{\sigma_2}\rangle^{[0,2]}$ with
	\[\bm{\mu_2}(t) = 
	\begin{pmatrix}
		t+3 \\ 
		1
	\end{pmatrix} 
	\text{, \ \ } \bm{\sigma_2}(t) = 
	\begin{pmatrix}
		1/16 & 0 \\
		0 & 1/16
	\end{pmatrix}.\]
Figure~\ref{fig:until} shows both of the atoms.	
To compute the semantics of the trajectory $\zeta^{[0,5]}(t) = \begin{pmatrix}
	t \\
	-1
\end{pmatrix}$ for $\phi$, the algorithm will approximate the semantics $\dist(\zeta^{[0,5]},\phi)$ by discretizing the interval $[0,4]$ into steps of size $\delta$ (taken to be 0.5 here), as shown in Figure \ref{fig:until} using ticks, and for each discretization point, computing the semantics by assuming that the until switches there. 
Then the algorithm selects the point which gives the minimum value.
\qee

\end{example}

\begin{figure}[!ht]
	\centering\myspace
	\begin{tikzpicture}[scale=1]
		\begin{axis}[xmin=-1,xmax=6,ymin=-1.5,ymax=1.5,samples=50,width=0.9\linewidth, height=0.52\linewidth,
			axis x line=box,
			axis y line =box,
			axis line style={-},
			enlarge x limits=0,
			enlarge y limits=0,
			xlabel=$x$,ylabel=$y$,
			xtick={0,2,4},
			ytick={-1,0,1},
			x label style={anchor=west},
			every tick/.style={-},
			]
			\addplot[
			color=color2,
			thick,
			mark=*,
			]
			coordinates {
				(0,0)(4,1.33)
			};
			
			\addplot[
			color=color3,
			dashed,
			]
			coordinates {
				(0,0.5)(4,1.5)
			};
			\addplot[
			color=color3,
			dashed,
			]
			coordinates {
				(0,-0.5)(4,1.16)
			};
			
			\draw[color=color3, dashed] (100, 150) ellipse (25 and 50);
			\draw[color=color3, dashed] (500, 283) ellipse (25 and 17);

			\addplot[
			color=color4,
			thick,
			mark=*,
			]
			coordinates {
				(3,1)(5,1)
			};
			\addplot[
			color=color4,
			dashed,
			]
			coordinates {
				(3,1.25)(5,1.25)
			};
			\addplot[
			color=color4,
			dashed,
			]
			coordinates {
				(3,0.75)(5,0.75)
			};
			\draw[color=color5l, dashed] (400, 250) ellipse (25 and 25);
			\draw[color=color5l, dashed] (600, 250) ellipse (25 and 25);
			
			\draw[fill=color1] (450, 100) rectangle (550,200) ;
			
			\addplot[
			color=color2,
			thick,
			mark=|,
			]
			coordinates {
				(0,0)(0.5,0.165)(1,0.33)(1.5,0.5)(2,0.67)(2.5,0.835)(3,1)(3.5,1.165)(4,1.33)
			};
			
			\node[label={[yshift=0em, xshift=10em]180:{$\bm{\pi_1}$}},circle,fill,inner sep=2pt] at (axis cs:0,0) {};
			\node[label={[yshift=3em, xshift=24em]180:{$\bm{\pi_2}$}},circle,fill,inner sep=2pt] at (axis cs:0,0) {};
		\end{axis}	
	\end{tikzpicture}\myspace\myspace
	\caption{The atoms $\bm{\pi_1}$ and $\bm{\pi_2}$ from Example \ref{ex:until}}
	\label{fig:until}\myspace\myspace
\end{figure}
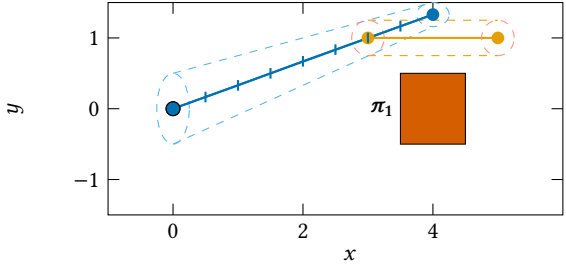

To show the correctness of this algorithm, we need to prove that it can approximate the actual value with a bound on the error.
We have different bounds for the different semantics.

\begin{theorem}
	\label{thm:main-max}
	For an LC trajectory $\zeta^{[0,\nu]}$, an FPL formula $\phi^{[0,v]}$ in LNF, $\exists C,C_2 \in \reals^+$ such that for $0<\delta <h_{min}(\phi)$, 
	\[|\tilde{\dist}_{max}(\zeta^{[0,\nu]}, \phi) - \dist_{max}(\zeta^{[0,\nu]}, \phi)| \leq k(C_2+C)\delta\] 
	where $k$ is the number of until operators in $\phi$.
\end{theorem}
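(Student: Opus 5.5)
We argue by structural induction on the LNF formula $\phi$, proving the bound for every subformula and every shifted LC trajectory $\zeta^{[0,\nu']}_{\rightarrow s}$. Shifting does not change the Lipschitz constant $K_\zeta$, so the constants below are uniform. The base case is an atom $\atom{}{\tau}$: by the computability assumption $\tilde{\dist}_{max}$ coincides with $\dist_{max}$, so the error is $0$ and there are no untils. For disjunction, $|\min\{a_1,a_2\}-\min\{b_1,b_2\}|\le\max_i|a_i-b_i|$. For concatenation $\atom{}{\tau}\cdot\phi_2$, we have $|\max\{a_1,a_2\}-\max\{b_1,b_2\}|\le\max_i|a_i-b_i|$, and the atom part is exact. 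In both cases the error is bounded by the maximum of the subformula errors. The until count $k$ of $\phi$ is at least the until count of each subformula, so the bound $k(C_2+C)\delta$ propagates. The cases $\tau>\nu$ return $\infty$ on both sides and need separate but trivial handling.

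The substantive case is $\phi=\bm{\pi_1}^{[0,\tau]}\U\phi_2$. By the definition of $\fuzzypathset$, $\dist_{max}(\zeta,\phi)=\inf_{t\in[0,\tau]} g(t)$, where
\[
g(t)=\max\{\dist_{max}(\zeta^{[0,t]},\bm{\pi_1}^{[0,t]}),\ \dist_{max}(\zeta_{\rightarrow t},\phi_2)\}.
\]
The algorithm computes $\min_{t\in S}\tilde g(t)$, where $\tilde g$ uses the approximated value for $\phi_2$. The induction hypothesis gives $|\tilde g(t)-g(t)|\le (k-1)(C_2+C)\delta$ for every $t$. It then suffices to show that $\min_{t\in S} g(t)$ and $\inf_{t\in[0,\tau]} g(t)$ differ by at most $(C_2+C)\delta$. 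Since every $t$ lies within $\delta$ of a point of $S$, and $S\subseteq[0,\tau]$ gives the other inequality for free, this reduces to a one-sided continuity estimate: for $t'\in S$ with $|t-t'|\le\delta$ we need $g(t')\le g(t)+(C_2+C)\delta$.

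The first component of $g$ is handled directly. The map $t\mapsto\max_{s\le t}d(\zeta(s),\bm{\pi_1}(s))$ is monotone, and it moves by at most $(K_{d,\zeta}K_\zeta+K_{d,\bm\pi})\delta$ when $t$ changes by $\delta$, because $s\mapsto d(\zeta(s),\bm{\pi_1}(s))$ is Lipschitz with that constant by the first three assumptions. This yields a constant $C$. The second component is the main obstacle. We must show that $\dist_{max}(\zeta_{\rightarrow t},\phi_2)$ is Lipschitz-like in the shift $t$, which requires a lemma proved by a nested induction over $\phi_2$.

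For that lemma we compare the trajectories $\zeta_{\rightarrow t}$ and $\zeta_{\rightarrow t'}$, which differ pointwise by at most $K_\zeta\delta$. The distance to any fixed fuzzy path therefore changes by at most $K_{d,\zeta}K_\zeta\delta=:C_2$ under $\dist_{max}$, provided both shifted trajectories are long enough. Taking the minimum over $\fuzzypathset(\phi_2)$ preserves this bound. The remaining subtlety is the horizon. The shorter trajectory may fall below a fuzzy path's total length and produce $\infty$. We plan to handle this by choosing the discretization point $t'\le t$, so the trajectory $\zeta_{\rightarrow t'}$ is longer. Only $\bm{\pi_1}$ is then truncated, which is harmless for the first component. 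The condition $\delta<h_{min}(\phi)$ ensures that a point of $S$ always lies in the admissible range within distance $\delta$, and that no atom degenerates.

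Combining these estimates gives an error of at most $(k-1)(C_2+C)\delta+(C_2+C)\delta=k(C_2+C)\delta$. Taking $C,C_2$ as the maxima over all atoms in $\phi$ makes the constants independent of the subformula.
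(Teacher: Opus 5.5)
Your proposal is correct and follows essentially the same route as the paper: structural induction on the LNF formula. In the until case, you move the (near-)optimal switching time $t$ down to the nearest discretization point $t'\le t$, then combine the induction hypothesis for $\phi_2$ with a shift-continuity bound for $\phi_2$ (the paper's Lemma~\ref{lem:until-free-max}) and a truncation bound for the first atom (Lemma~\ref{lem:cdelta-max}). Your deviations are minor and, if anything, cleaner than the paper's write-up: you prove the shift bound for each fixed fuzzy path and then take the infimum rather than by nested induction, you use monotonicity of the running maximum for the atom part, and you treat both error directions and a possibly unattained infimum explicitly.
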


\begin{theorem}
	\label{thm:main-int}
	For an LC trajectory $\zeta^{[0,\nu]}$, an FPL formula $\phi^{[0,v]}$ in LNF, $\exists C_1,C\in\reals^+$ such that for $0<\delta <h_{min}(\phi)$,
	\[|\tilde{\dist}_{int}(\zeta^{[0,\nu]}, \phi) - \dist_{int}(\zeta^{[0,\nu]}, \phi)| \leq k(C_1+kC\nu)\delta\] 
	where $k$ is the number of concatenation or until operators in $\phi$.
\end{theorem}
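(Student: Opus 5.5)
We prove the bound by structural induction on the LNF formula $\phi$, tracking how errors accumulate through the algorithm's recursion. In LNF every concatenation and until has an atom on the left, so the only sources of error are (a) the discretization in until and (b) errors propagated from $\phi_2$ in $\atom{}{\tau}\cdot\phi_2$ and $\atom{}{\tau}\U\phi_2$. The base case, an atom, is exact by the computability assumption. For disjunction, $|\min\{a_1,a_2\}-\min\{b_1,b_2\}|\le\max_i|a_i-b_i|$, so the error does not grow and the count $k$ is unaffected.

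For concatenation $\atom{}{\tau}\cdot\phi_2$, the first summand is exact and the second is a recursive call on a shifted trajectory. The shift preserves the Lipschitz constant $K_\zeta$ and shortens the horizon, so the inductive bound applies with the same constants. With $\dist_{int}$, the sum passes the error through unchanged, giving a contribution of $k_2(C_1+k_2C\nu)\delta$.

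For until $\atom{}{\tau}\U\phi_2$, the true value is $\min_{t\in[0,\tau]} g(t)$, where
\[g(t)=\dist_{int}(\zeta^{[0,t]},\atom{}{t})+\dist_{int}(\zeta^{[0,\nu-t]}_{\rightarrow t},\phi_2).\]
This uses $\fuzzypathset(\atom{}{\tau}\U\phi_2)=\{\atom{}{t}\cdot\bm{p_2}\}$ and the fact that the min over fuzzy paths distributes. The algorithm computes $\min_{t\in S}\tilde g(t)$. The error therefore splits into two parts:
\[\Big|\min_S \tilde g-\min_{[0,\tau]} g\Big|\le \max_{t\in S}|\tilde g(t)-g(t)| + \Big|\min_S g-\min_{[0,\tau]}g\Big|.\]
The first term is bounded by the inductive hypothesis for $\phi_2$. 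For the second term we show that $g$ is Lipschitz in $t$, so that $|\min_S g-\min g|\le L\delta$, since every $t$ has a grid point within $\delta$.

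The first summand of $g$ changes by at most $\max d\cdot|t-t'|$ over the added piece of the interval. Here $d$ is bounded on the compact domain because $\zeta$ is bounded (LC on $[0,\nu]$) and $d$ is LC in space and time; this supplies the constant $C_1$.

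The main obstacle is the second summand, $t\mapsto\dist_{int}(\zeta_{\rightarrow t},\phi_2)$, which must be Lipschitz in the shift. We plan a separate lemma, proved by induction on $\phi_2$: for any fuzzy path $\bm p$ of total horizon at most $\nu$,
\[|\dist_{int}(\zeta_{\rightarrow t},\bm p)-\dist_{int}(\zeta_{\rightarrow t'},\bm p)|\le K_{d,\zeta}K_\zeta\,\nu\,|t-t'|,\]
because pointwise $|d(\zeta(s+t),\pi(s))-d(\zeta(s+t'),\pi(s))|\le K_{d,\zeta}K_\zeta|t-t'|$ and integrating over at most $\nu$ time units gives the factor $\nu$. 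Taking the min over fuzzy paths preserves the Lipschitz bound. The hypothesis $\delta<h_{min}(\phi)$ guarantees that both shifted trajectories are long enough, or both too short, so the $\infty$ cases agree. This can fail near the horizon boundary $t\approx\nu-v_2$, and we would handle it by restricting to grid points and using the $\{\tau\}$ endpoint in $S$. The lemma yields $C=K_{d,\zeta}K_\zeta$, so each until contributes $(C_1+C\nu)\delta$.

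Finally, each of the $k$ operators contributes at most $(C_1+kC\nu)\delta$, where the crude extra factor $k$ absorbs the nested shifts. Summing these contributions over the recursion gives the stated bound $k(C_1+kC\nu)\delta$.
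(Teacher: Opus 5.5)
The overall structure matches the paper's: structural induction, exact atoms, the min--min inequality for disjunction, and concatenation passing the error through. The until case also combines the same three ingredients: a truncation bound ($C_1$, from a bound on $d$), a shift bound ($C\nu$), and the inductive hypothesis. The gap is in the step $|\min_S g-\min_{[0,\tau]}g|\le L\delta$, which you derive from ``$g$ is Lipschitz''.

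That claim is false, and not only because of jumps to $\infty$. As the switch $t$ grows, the suffix $\zeta_{\rightarrow t}$ gets shorter, so the set of fuzzy paths in $\fuzzypathset(\phi_2)$ at finite distance shrinks. Hence $t\mapsto\dist_{int}(\zeta_{\rightarrow t},\phi_2)$ can jump upward by a finite amount. Take $\phi_2=\bm{\pi_a}\lor\bm{\pi_b}$ with $\bm{\pi_a}$ longer and much closer to $\zeta$ than $\bm{\pi_b}$: once $\bm{\pi_a}$ no longer fits, the minimum jumps to the $\bm{\pi_b}$ value. So ``taking the min over fuzzy paths preserves the Lipschitz bound'' fails, because your path-wise estimate only covers paths feasible at both shifts.

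Likewise, $\delta<h_{min}(\phi)$ does not make the $\infty$ cases agree. For example, with $\nu=5$, $\phi_2=\bm{\pi}^{[0,2]}$ and shifts $3$ and $3.1$, the first is finite and the second is not. Adding $\tau$ to $S$ has nothing to do with this boundary. If the grid point ``within $\delta$'' lies to the right of the optimal switch, you can get $\infty$ or a much larger value; this is exactly the example the paper discusses after the theorem.

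The missing idea is that only a one-sided estimate is needed, and it holds in the right direction. Since $S\subseteq[0,\tau]$, $\min_S g\ge\min_{[0,\tau]}g$ trivially. For the other direction:
\begin{itemize}
\item Take the optimal (or $\epsilon$-optimal) switch $t^*$ and the grid point $t'\le t^*$ with $t^*-t'\le\delta$; it exists since $0\in S$.
\item Moving the switch earlier lengthens the suffix, so every fuzzy path feasible for $\zeta_{\rightarrow t^*}$ is feasible for $\zeta_{\rightarrow t'}$.
\item Your path-wise bound then gives $\dist_{int}(\zeta_{\rightarrow t'},\phi_2)\le\dist_{int}(\zeta_{\rightarrow t^*},\phi_2)+C\nu\delta$.
\item Adding the truncation bound yields $g(t')\le g(t^*)+(C_1+C\nu)\delta$.
\end{itemize}
This is precisely the paper's argument (``the discretization point just \emph{before} the optimal one''), and only this direction of its shift lemma is actually used.

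With that repair your proof goes through, and in some respects it is tidier than the paper's:
\begin{itemize}
\item Your split $\max_S|\tilde g-g|+|\min_S g-\min g|$ makes the lower bound $\tilde{\dist}_{int}\ge\dist_{int}-\text{error}$ explicit, which the paper's until case passes over.
\item Your path-wise shift estimate needs no factor $k$.
\item Your bound on $d$ over the whole compact domain serves as $C_1$ for every shifted trajectory in the induction. The paper's $C_1$ is anchored at $d(\zeta(0),\bm{\pi}(0))$.
\end{itemize}
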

 	The complete proof is given in Appendix~\ref{app:proof} where we also show that the following values of the constants work
 	\[C_1 = \max_{\bm{\pi}\in atoms(\phi)} \big(d(\zeta(0), \atom{}{\tau}(0)) + (K_{d,\bm{\pi}} + K_\zeta K_{d,\zeta})\cdot\min\{\nu,\tau\}\big)\] \[C_2=\max_{\bm{\pi}\in atoms(\phi)} \big(K_{d,\zeta}K_{\zeta}
 	+ K_{d,\pi}\big) \text{ and } C=K_{d,\zeta} K_{\zeta}\]
	As one would expect, the constants depend on the Lipschitz constants identified in the assumptions and for integration semantics, it also depends on the distance between the initial points of the trajectory and atoms since that could also affect the error accumulated.
	Intuitively, our algorithm and its proof work by "matching" all points of the trajectory to a point close to the optimal matching point and use LC assumptions to give a bound on the error.
	However, one needs to be careful here, since even small changes in the matching could result in a completely different value. 
	For example, consider the formula $\phi=\atom{1}{4}\U\atom{2}{2}$ from Example~\ref{ex:until} and the trajectory (with horizon $[0,5]$) which follows $\bm{\pi_1}$ for 3 time-units and $\bm{\pi_2}$ for the next 2 time-units.
	The distance for this example is $0$ since it follows exactly the mean path by switching the until at time $3$. 
	However, while computing the semantics of $\phi$ if we select any switching time $>3$, the distance becomes $\infty$ because the trajectory is not defined for the last part of $\bm{\pi_2}$.
	Fortunately, this is not a problem for our algorithm since the discretization point just \emph{before} the optimal one is good enough to approximate the value.

\section{Learning}
\label{sec:learning}

In this section, we present a method for inferring FPL formulas from a set of positive trajectories (demonstrations).
The high-level idea is to learn a tree/DAG structure that represents the logical part of the formula, and at each node learn an atom that captures the geometric behavior over a time interval, profiting from the separation of concerns.
The pseudocode is presented in Algorithm~\ref{alg:learning}.
The input is a set of trajectories that describe the intended behavior and some hyperparameters described later.

\begin{algorithm}
	\caption{Fuzzy Path Logic Learning Algorithm}
	\label{alg:learning}
	\begin{algorithmic}[1]
		\State \textbf{Input:} Dataset $D$, Initial trajectory length $k$
		\State \textbf{Output:} Fuzzy Path Logic $FPL$
		\State $T \leftarrow \textsc{LearnAtomDAG}(D, k)$
		\State $M \leftarrow \textsc{MergeNodes}(T)$
		\State $FPL \leftarrow \textsc{SimplifyDAG}(M)$
		\State \textbf{return} $FPL$
		\Procedure{LearnAtomDAG}{$D$, $k$}
		\State $T \gets$ Empty Tree
		\State $C \gets$ \textsc{ClusterTrajectories}($D$, $k$)
		\For{each cluster $c \in C$}
		\State $A \gets \textsc{LearnAtom}(c)$
		\State $R \gets$ \textsc{Regression}(A)
		\State $D' , dTime \gets$ \textsc{RemoveAgreement}($c$, R)
		\State $DT \gets$ \textsc{LearnAtomDAG}($D'$, $k$)
		\State $T \gets$ \textsc{AddSubtree}($T$, DT, A, dTime)
		\EndFor
		\State \Return $T$
		\EndProcedure
	\end{algorithmic}
\end{algorithm}

The algorithm starts with the dataset $D$ and, based on the initial behavior (of length $k$), clusters the trajectories.
For each cluster, a representative atom is learned on the initial segment of the trajectories and extended via regression to cover the maximum length within the cluster.
For each trajectory, we identify the time step at which it starts to deviate from the learned atom.
The last deviation time across all trajectories in this cluster defines the total time horizon of the learned atom ($dTime$), while the range of deviation times determines the time interval of the \textit{Until} operator.

A new dataset is then constructed from the trailing subtrajectories beginning at each deviation point, and the algorithm is invoked recursively to learn the subsequent part of the formula.
Each recursive call returns a DAG, which is connected beneath the previously learned part of the structure using an until operator.
This process results in a DAG of FPL atoms connected by \textit{Until} operators for each cluster.
Next, similar atoms in the DAG are merged based on a similarity score, while ensuring that no merged pair has an ancestor–descendant relationship.

This resulting DAG is equivalent to an FPL formula that we construct by simplifying the DAG through a combination of the following operations:
\begin{itemize}
\item \textbf{Extend:} Combine two nodes connected by an \textit{Until} operator into a single node.
\item \textbf{Join:} Replace two nodes with identical parents and children by their disjunction.
\item \textbf{Split:} Duplicate a node and distribute its children among the two copies.
\end{itemize}

Additional details of this learning algorithm regarding the merging of the atoms and the simplification procedure are provided in Appendix~\ref{app:learning}.
Due to space restrictions, we show the run of the algorithm only on one example here, whose dataset is shown in Figure~\ref{fig:learning_example_dag2d}.
For more examples that also involve learning the until operator, we refer to Appendix~\ref{appendix:learning-examples}.

\begin{example}
    Consider the initial dataset $D$ of trajectories shown in Figure~\ref{fig:learning_example_traj2d}. 
    The algorithm splits the set $D$ into two clusters based on their initial behavior. 
    We obtain a learned DAG as shown in Figure~\ref{fig:learning_example_dag2d} which captures the bifurcation in the trajectories around the obstacle.
    The learned atoms are visualized in Figure~\ref{fig:learning_example_fpl_atoms_2d}.
    For this example, there are no similar atoms to be merged.
    The SimplifyDAG procedure will compute the final FPL formula as the disjunction of the two atoms.
    \qee
\end{example}

\begin{figure}[ht]
    \centering
    \includegraphics[width=\columnwidth]{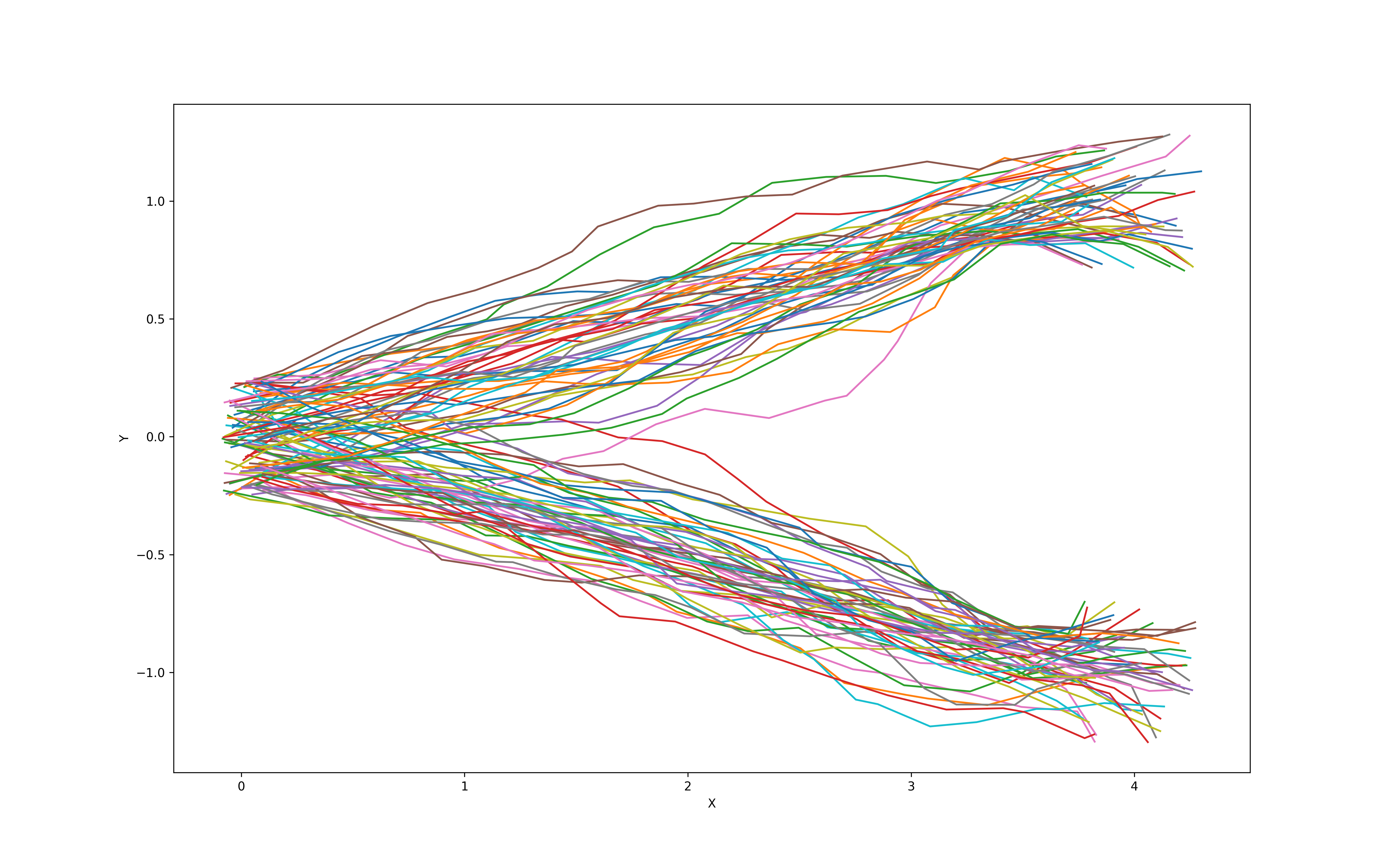}\myspace
    \caption{Initial dataset of trajectories for learning example.}
    \label{fig:learning_example_traj2d} 
\end{figure}

\begin{figure}[ht]
    \centering
    \includegraphics[width=\columnwidth]{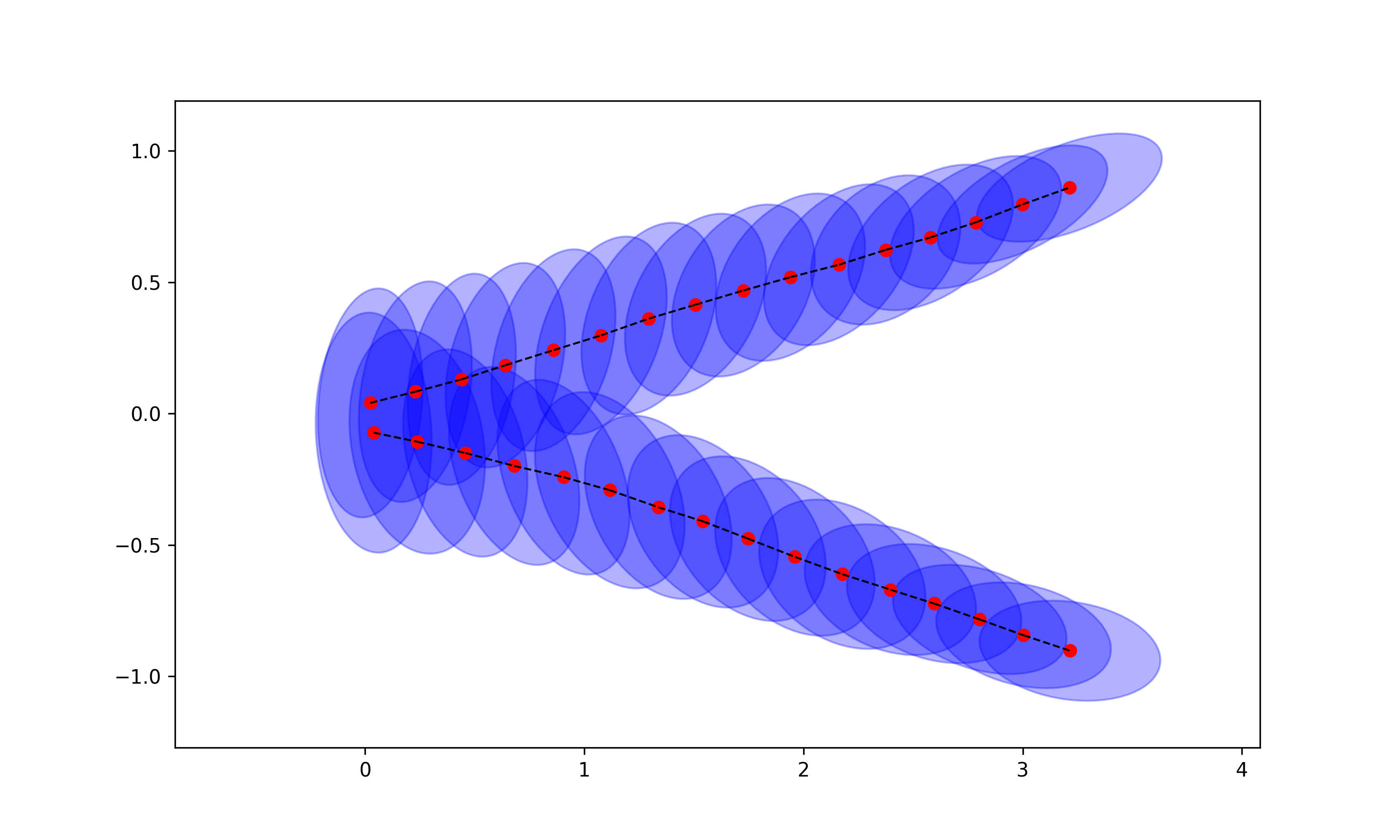}\myspace
    \caption{Learned atoms from the trajectories in Figure~\ref{fig:learning_example_traj2d}.}
    \label{fig:learning_example_fpl_atoms_2d} 
\end{figure}

\begin{figure}[ht]
    \centering
    \includegraphics[width=0.8\columnwidth]{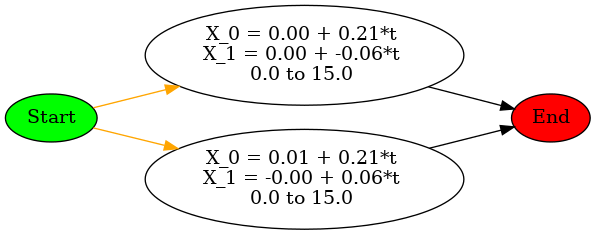}
    \caption{The learned DAG from the trajectories in Figure~\ref{fig:learning_example_traj2d}.}
    \label{fig:learning_example_dag2d}
\end{figure}

\section{Discussion}

\subsection{Advantages over other logics}
\label{app:stl_comparison}
Here we compare FPL with STL and its extensions, discussing why FPL should be a separate logic rather than just an extension of STL. 
First, we argue that replicating the behavior of FPL in STL is possible but comes with significant drawbacks in terms of explainability and learnability.
Then, we discuss these two aspects in more detail.

If we want to express an atom using STL, we need to describe the mean path and the deviation around it. 
To express the mean path with $n$ signals, STL would require $n+2$ predicates describing the half-spaces and hyperplanes whose intersection yields the required object.
For example, the atom $\atom{1}{3}$ from Example~\ref{ex:atom} can be expressed in STL using the predicates $p_1:x-t=0$, $p_2:3y-t=0$, $p_3:t\geq0$, and $p_4:t\leq 3$ to describe the mean path.

Moreover, in FPL, the variance around the mean path is explicitly represented and can be easily visualized.
In order to express the same distance from the mean path in STL, the coefficients of the predicates need to be changed accordingly to reflect the distance we use in FPL.
For our example, the point $(x,y,t)=(0,-1,0)$ has robustness $-3$ with respect to $p_1\land p_2\land p_3 \land p_4$, but if we want to reflect the Mahalanobis distance of $2$ (as computed in Example~\ref{ex:sem-point-atom}) here, we would need to change the coefficients of $p_2$ by multiplying them by $\frac{2}{3}$. 
However, since $\sigma$ changes with time in FPL, the coefficients need to be a function of time, which results in the predicates becoming non-linear and unnecessarily obscured.
In our example, the new predicates are $p_1':4x-4t=0$, $p_2':\frac{12y}{6-t}-\frac{4t}{6-t}=0$, $p_3':p_3$, and $p_4':p_4$.
Moreover, to reflect covariances between the signals (off-diagonal entries in the covariance matrix), more complex transformations are required.

\paragraph{Explainability:} STL formulas are not easily interpretable \cite{DBLP:conf/iros/Siu0M23}. 
One can plot the predicates in STL, but it is not straightforward to visualize the overall formula or understand it directly. Additionally, the robustness semantics can be non-intuitive, as it combines distances from different predicates in a non-trivial way. 
In contrast, atoms in FPL can be easily visualized as paths with variance around them.
Moreover, if we express an FPL formula as an equivalent STL formula, even though the memory requirements of the atom and these predicates are similar, in practice the STL formula becomes large and hard for a human to understand. 
By contrast, FPL handles the geometric and logical aspects separately.
The geometric part (an atom) can easily be visualized, and the logical part is comparatively quite small ($1$ atom compared to $n+2$ predicates), making the formula more interpretable.
Generally, FPL allows for easily incorporating various standard distances and ways of aggregating robustness (e.g., maximum, integrating/averaging).

\paragraph{Learning:} Learning an STL formula directly is a challenging task, as discussed towards the end of Section~\ref{sec:motivation}.
Learning an FPL formula is more straightforward, as can be seen in Section~\ref{sec:learning}.

Altogether, while some extensions of STL could in principle allow the desired functionality, they are comparatively unintuitive, complex to use, and result in large, hard-to-understand formulas that are difficult to learn, see Motivation.
In contrast, we provide one simple framework extending STL more fundamentally and catering for these needs in a systematic, adequate, easy-to-use, and unified way.

\subsection{Extending syntax}
\label{app:extend_syntax}

\subsubsection{Negation}
If we want to define a negation operator, it first needs to be defined for atoms. 
Since our semantics is defined as a distance, it must always be non-negative.
Intuitively, negation of an atom would mean that one wants to stay away from this path as much as possible. 
Therefore, for the negation of an atom, the distance should also be non-negative, increase as one comes closer to it, and decrease as one goes farther.
This would push trajectories to the boundary of the probability distribution, which is typically undesirable.
One example of such semantics could be: \[\dist(\zeta^{[0,\nu]}, \neg\atom{}{\tau}) := \frac{1}{\dist(\zeta^{[0,\nu]}, \atom{}{\tau})}\]

\subsubsection{Conjunction}
Adding conjunction to the syntax of FPL is possible; however, we do not include it here for the following reasons.

First, this operator is often unnecessary or even completely pointless in the preference setting, since it describes two distinct behaviors of the system to be followed at the same time.
The idea behind our logic is to capture good trajectories, which makes requiring two good trajectories to be followed simultaneously less sensible, as following one could lead to violation of the other.
Instead, one could opt for a more nuanced \emph{trade-off analysis}.

Second, defining semantics is not easy, because the basis of our semantics—the set of fuzzy paths—is no longer defined.
If we want to keep %
a definition like $\dist(\zeta,\phi) = \min_{\bm{p}\in\mathcal{I}(\phi)} \dist(\zeta, \bm{p})$, $\mathcal{I}(\phi)$ needs to be replaced by a set of formulas that are until- and disjunction-free. (See also the issues described in Remark~\ref{rem:semantics}.)
Consider the example $\atom{1}{\tau_1}\land\atom{2}{\tau_2}$; then $\mathcal{I}(\phi) = \{\atom{1}{\tau_1}\land\atom{2}{\tau_2}\}$ is not a set of ``paths'' any longer.

Third, converting any formula to a left normal form (LNF) does not work.
Converting the formula to LNF has the advantage that it becomes easier to deal with the semantics of until, because the horizon for until to switch is fixed.
However, this would no longer be the case if we add conjunction to the mix.
Consider the formula $((\phi_1\U\phi_2)\land\phi_3)\U \phi_4$; the outer until cannot be simplified here, and the horizon of the outer until varies depending on the switching time of the inner until.
This renders our recursive procedure inapplicable.

\subsubsection{Safety}
One of the main use cases of formal specification languages such as LTL and STL is to express properties that ensure the system behaves safely in an environment.
However, our logic as defined here essentially deals with ``good'' behaviors. 
Here, avoiding ``bad'' behaviors happens implicitly by preferring the good behaviors.
Therefore, this logic is more suited to settings where the good and bad trajectories have a clear distinction between them.

As discussed in Section~\ref{sec:motivation}, if required, a hard safety specification (that can be described in a more suitable temporal logic like STL) can be combined with our logic.
To adjust the semantics for this hybrid formula, first check whether the safety specification is satisfied.
If it is not satisfied, then set the distance to $\infty$; otherwise, compute the semantics as usual.

\subsection{Model checking and synthesis}%
While this paper focuses on motivating and defining the new logic FPL, we have accumulated further arguments why it is worth this effort, presented for the reader's convenience in the appendix due to space constraints. 
First, we claim that model checking a hybrid automaton with respect to this logic becomes possible in a certain restricted setting. 
We identified the necessary assumptions and their intuitive meanings in Appendix~\ref{app:mc}, where our algorithm for computing semantics can be extended for model checking. 
Since the algorithm itself is not central, it appears only in the appendix; the important message is a proof of concept that reasonable classes of systems can be FPL model checked.
Moreover, in Appendix~\ref{app:sampling_synthesis}, we examine the possibilities of sampling a path from a given formula, which can be useful for path synthesis.

\section{Outlook and future work}
We have introduced the idea of paths being first-class citizens in a temporal logic, coming equipped with a degree of preference. 
This helps us separate the logical and geometric aspects, making them more interpretable than their intertwined counterparts.
Such a specification language can be useful for defining the most desirable behavior.
It is particularly useful when such behaviors are hard to obtain manually in a model-based way and we are limited to observing demonstrations.
In such cases, the specification can be easily learned and thus provides a means to formally work with empirical experience.
This new paradigm opens many questions and suggests several directions for future work.
\emph{Conceptual} questions concern different ways to combine hard constraints with the introduced soft preferences; different approaches to modeling distributions of paths, their variance, and their connection to system dynamics; different concepts of optimality of paths; or criteria for sampling when imitating the variance observed in demonstrations.
\emph{Theoretical} questions on decidability and approximability are open for various classes of systems.
\emph{Practical} algorithms are desirable to apply the framework in real motion planning tasks.
Finally, given the various hyper-parameters of the framework, concrete \emph{applications} should be investigated to identify the most interesting fragments and classes of the logic.
We hope to trigger interest in theoretical aspects of this new perspective addressing a topic of high relevance in today's data-driven design of automated systems.
Logic could thus contribute with rigor to the---by nature vague---learning-based approaches.

\bibliographystyle{ACM-Reference-Format}
\bibliography{refs}

\appendix

\section{Semantics}

\subsection{Computing Exact Semantics}
\label{app:exact_semantics}

We discuss the integration and maximum semantics separately here since they result in solving different types of optimization problems.

For integration semantics, we elucidate our points using an example.
Consider the formula $\atom{1}{1}\U(\atom{2}{1}\lor(\atom{3}{1}\U \atom{4}{1}))$ and a trajectory $\zeta^{[0,5]}$.
To compute its semantics using $\dist_{int}$, one needs to solve the following equation (dropping the time horizons for readability).
\begin{align}
    &\min_{t} \Big\{ \dist_{int}(\zeta,\bm{\pi_1}) + 
    \min \big\{ \dist_{int}(\zeta_{\rightarrow t}, \bm{\pi_2}), \\
    & \quad \min_{t'} \big( \dist_{int}(\zeta_{\rightarrow t}, \bm{\pi_3}) + 
    \dist_{int}(\zeta_{\rightarrow t+t'}, \bm{\pi_4}) \big) \big\} \Big\} \notag
\end{align}

which can be reformulated as:
\begin{align}
    &\min \Big\{ 
    \min_{t} \big\{ \dist_{int}(\zeta, \bm{\pi_1}) + \dist_{int}(\zeta_{\rightarrow t}, \bm{\pi_2}) \big\}, \\
    & \quad \min_{t,t'} \big\{ \dist_{int}(\zeta, \bm{\pi_1}) + \dist_{int}(\zeta_{\rightarrow t}, \bm{\pi_3}) + 
    \dist_{int}(\zeta_{\rightarrow t+t'}, \bm{\pi_4}) \big\} 
    \Big\} \notag
\end{align}

Here, we need to solve 2 separate optimization problems. 
In general, if there are $l$ disjunctions in the formula, it requires solving $2^l$ optimization problems in the worst case.
Each optimization problem can have at most $m$ (= number of until's) time-variables which can be tuned.
The complexity of optimization problems will depend on the classes of functions used and the distance metric between a distribution and a point.
If we consider linear atoms (with linear entries in the mean vector and the covariance matrix), each optimization problem can have summation of terms like $\int_{0}^{t_m}\sqrt\frac{p(t_1,t_2,\dots t_{m-1},t)}{q(t_1,t_2\dots t_{m-1},t)}dt$ where $p$ and $q$ are polynomials, $t_1,t_2,t_m$ are the time variables that need to be optimized (see Example~\ref{ex:sem-atom} to get an idea).
These problem can be difficult to solve due to their generic nature and if we consider other classes of functions, things become even more complicated (and maybe undecidable).

\noindent
For maximum semantics, the optimization problem is given by the equation
\begin{align}
    &\min_{t} \max \bigg\{ \dist_{max}(\zeta, \bm{\pi_1}), 
    \min \Big\{ \dist_{max}(\zeta_{\rightarrow t}, \bm{\pi_2}), \\
    & \quad \min_{t'} \max \big\{ \dist_{max}(\zeta_{\rightarrow t}, \bm{\pi_3}), 
    \dist_{max}(\zeta_{\rightarrow t+t'}, \bm{\pi_4}) \big\} \Big\} \bigg\} \notag
\end{align}

Due to the alternation between min and max here, the problem becomes a lot more complex even for linear atoms, and there is no straightforward way of dealing with it.
Also, due to the difficulty identified in Remark~\ref{rem:semantics}, the value of this function can change suddenly which can be a problem for optimization techniques.

\subsection{Example \ref{ex:sem-atom} with computation}
\label{app:example}
\begin{example}
	We show the computation for Example~\ref{ex:sem-atom} here.
	\begin{itemize}
		\item[-] First, we compute with maximum semantics $\dist_{max}$ which is given by
		\begin{align}
			\hspace{10pt}
			&\dist_{max}(\zeta^{[0,3]}, \bm{\pi}^{[0,3]}) = \max_{t\in[0,3]} d^1_M(\zeta^{[0,3]}(t), \bm{\pi}^{[0,3]}(t))\notag\\
			&= \max_{t\in[0,3]} \sqrt{
				\begin{pmatrix}
					0 \\
					-1-t/3
				\end{pmatrix}^T
				\begin{pmatrix}
					16 & 0 \\
					0 & 144/(6-t)^2
				\end{pmatrix}
				\begin{pmatrix}
					0 \\
					-1-t/3
				\end{pmatrix}	
			}\notag\\
			&= \max_{t\in[0,3]}\ \frac{12+4t}{6-t} = 8\notag
		\end{align}
		It is also visible from the figure that the maximum distance is achieved at $t=3$.
		\item[-] Second, using the integration semantics, we get
		\begin{align} 
			\ \ \ &\dist_{int}(\zeta^{[0,3]}, \bm{\pi}^{[0,3]}) = \int_{0}^{3} d^1_M(\zeta^{[0,3]}(t), \bm{\pi}^{[0,3]}(t))dt \notag\\
			&= \int_{0}^{3} \frac{12+4t}{6-t}dt = \int_{3}^{6} \frac{36-4u}{u}du = 36\ln 2 - 12 \approx 13\notag 
		\end{align}
		This gives us the average distance over the interval $[0,3]$ as $\approx 4.3$.
	\end{itemize}
	\qee
\end{example}

\section{Proofs}
\label{app:proof}
\subsection{Proof of Lemma~\ref{lem:lnf}}

\begin{lemma} 
	The following equivalences hold for \acrshort{fpl} formulas
	\begin{enumerate}
		\item $(\phi_1\lor\phi_2)\cdot\phi_3 \equiv (\phi_1 \cdot\phi_3) \lor (\phi_2\cdot\phi_3)$
		\item $(\phi_1\cdot\phi_2)\cdot\phi_3 \equiv \phi_1 \cdot (\phi_2\cdot\phi_3)$
		\item $(\phi_1\U\phi_2)\cdot\phi_3 \equiv \phi_1 \U (\phi_2\cdot\phi_3)$
		\item $(\phi_1\lor\phi_2)\U\phi_3 \equiv (\phi_1\U\phi_3) \lor(\phi_2 \U \phi_3)$
		\item $(\phi_1\cdot\phi_2)\U\phi_3 \equiv (\phi_1\U\phi_3) \lor(\phi_1\cdot(\phi_2 \U \phi_3))$
		\item $(\phi_1\U\phi_2)\U\phi_3 \equiv \phi_1\U(\phi_2\U\phi_3)$	
	\end{enumerate}
\end{lemma}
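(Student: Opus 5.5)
Since the semantics is $\dist(\zeta,\phi)=\min_{\bm p\in\fuzzypathset(\phi)}\dist(\zeta,\bm p)$, it depends on $\phi$ only through $\fuzzypathset(\phi)$. I would therefore prove each equivalence by showing that both sides have the same set of fuzzy paths, using the inductive definition of $\fuzzypathset$. Two facts about fuzzy paths underlie all six items. First, concatenation of fuzzy paths is associative: a fuzzy path is a finite sequence of atoms, and concatenation is sequence concatenation with horizons adding. Second, the prefix operation behaves well on concatenations. For $t\le v_1+v_2$, the prefix $(\bm{p_1}^{[0,v_1]}\cdot\bm{p_2}^{[0,v_2]})^{[0,t]}$ equals $\bm{p_1}^{[0,t]}$ if $t\le v_1$, and equals $\bm{p_1}^{[0,v_1]}\cdot\bm{p_2}^{[0,t-v_1]}$ otherwise. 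Here a prefix of a fuzzy path means truncating the time horizon, which cuts the last atom involved to its initial part. With these, every item becomes a set-equality computation.

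Items (1) and (4) follow immediately because concatenation and the until-construction distribute over union in their first argument. Item (2) is associativity. For (3), an element of the left side has the form $(\bm{p_1}^{[0,t]}\cdot\bm{p_2})\cdot\bm{p_3}$, where $\bm{p_1}\in\fuzzypathset(\phi_1)$, $t\le v_1$, $\bm{p_2}\in\fuzzypathset(\phi_2)$ and $\bm{p_3}\in\fuzzypathset(\phi_3)$. By associativity this equals $\bm{p_1}^{[0,t]}\cdot(\bm{p_2}\cdot\bm{p_3})$, which is exactly the shape of the right side, and the argument reverses. Item (6) is analogous. The left side consists of $(\bm{p_1}^{[0,t]}\cdot\bm{p_2})^{[0,s]}\cdot\bm{p_3}$ with $s\le t+v_2$. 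The right side consists of $\bm{p_1}^{[0,t]}\cdot\bm{p_2}^{[0,s']}\cdot\bm{p_3}$. For the inclusion left-to-right, split on whether $s\le t$. If $s\le t$, the prefix is $\bm{p_1}^{[0,s]}$, and we need it to equal $\bm{p_1}^{[0,s]}\cdot\bm{p_2}^{[0,0]}\cdot\bm{p_3}$, so we need the convention that a zero-length prefix is the neutral element. If $s>t$, the prefix is $\bm{p_1}^{[0,t]}\cdot\bm{p_2}^{[0,s-t]}$, and it matches directly. The reverse inclusion picks $s=t+s'$.

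Item (5) is where the case split produces the disjunction. A left element is $(\bm{p_1}\cdot\bm{p_2})^{[0,t]}\cdot\bm{p_3}$ with $t\le v_1+v_2$. If $t\le v_1$, it equals $\bm{p_1}^{[0,t]}\cdot\bm{p_3}\in\fuzzypathset(\phi_1\U\phi_3)$. If $t>v_1$, it equals $\bm{p_1}\cdot\bm{p_2}^{[0,t-v_1]}\cdot\bm{p_3}\in\fuzzypathset(\phi_1\cdot(\phi_2\U\phi_3))$. Conversely, each element of the right-hand union arises from some choice of $t$, taking $\bm{p_2}$ arbitrary in the first disjunct (note that $\fuzzypathset(\phi_2)\neq\emptyset$ always).

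The main obstacle I expect is not the combinatorics but the boundary conventions. I need to fix what $\bm{p}^{[0,0]}$ means and check that concatenating with it is neutral, both for $\dist_{max}$ and $\dist_{int}$ and for the $\nu>\tau$ versus $\nu\ge\tau$ conditions in the concatenation semantics. I also need to make the prefix of a multi-atom fuzzy path precise, since the definition of $\fuzzypathset$ only speaks of $\bm{p_1}^{[0,t]}$ for atoms. I would state this prefix convention explicitly as a preliminary remark and then check the degenerate cases $t=0$ and $t=v_1$ once. Once set equality holds, equivalence of $\dist$ follows for every choice of distance.
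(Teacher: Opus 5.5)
Your proposal is correct and follows the paper's proof: both reduce each equivalence to equality of the fuzzy-path sets $\fuzzypathset(\cdot)$, using distributivity over $\cup$ for (1) and (4), associativity for (2) and (3), and a case split on where the truncation point falls for (5) and (6). Your explicit prefix rule and zero-length convention are more careful than the paper, which in item (6) silently writes every prefix of $\bm{q_1}^{[0,t']}\cdot\bm{q_2}$ as $\bm{q_1}^{[0,t']}\cdot\bm{q_2}^{[0,t'']}$. You should impose $\bm{p}^{[0,0]}$ as the empty path by definition rather than try to verify its neutrality from the semantics, because under $\dist_{max}$ a literal zero-length atom contributes the extra term $d(\zeta(s),\bm{p}(0))$ to the maximum.
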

\begin{proof}
	To show these equivalences, we just need to prove that their fuzzy path sets $\fuzzypathset(\phi)$ are the same.
	\begin{enumerate}
		\item \underline{$(\phi_1\lor\phi_2)\cdot\phi_3 \equiv (\phi_1 \cdot\phi_3) \lor (\phi_2\cdot\phi_3)$}:
		\[\fuzzypathset((\phi_1\lor\phi_2)\cdot\phi_3)\]
		\begin{align}
			&=\{\bm{p_1}^{[0,v_1]}\cdot\bm{p_2}^{[0,v_2]} \mid \bm{p_1}^{[0,v_1]}\in \fuzzypathset(\phi_1\lor \phi_2), \bm{p_2}^{[0,v_2]}\in \fuzzypathset(\phi_3)\} \notag\\
			&=\{\bm{p_1}^{[0,v_1]}\cdot\bm{p_2}^{[0,v_2]} \mid \bm{p_1}^{[0,v_1]}\in \fuzzypathset(\phi_1)\cup \fuzzypathset(\phi_2), \bm{p_2}^{[0,v_2]}\in \fuzzypathset(\phi_3)\} \notag\\
			&=\{\bm{p_1}^{[0,v_1]}\cdot\bm{p_2}^{[0,v_2]} \mid \bm{p_1}^{[0,v_1]}\in \fuzzypathset(\phi_1), \bm{p_2}^{[0,v_2]}\in\fuzzypathset(\phi_3)\} \notag \\
			&\ \ \ \ \  \cup\ \{\bm{p_1}^{[0,v_1]}\cdot\bm{p_2}^{[0,v_2]} \mid \bm{p_1}^{[0,v_1]}\in\fuzzypathset(\phi_2), \bm{p_2}^{[0,v_2]}\in \fuzzypathset(\phi_3)\} \notag\\
			&=\fuzzypathset(\phi_1\cdot\phi_3)\cup\fuzzypathset(\phi_2\cdot\phi_3) \notag \\
			&=\fuzzypathset((\phi_1\cdot\phi_3)\lor(\phi_2\cdot\phi_3)) \notag
		\end{align}
	
		\item \underline{$(\phi_1\cdot\phi_2)\cdot\phi_3 \equiv \phi_1 \cdot (\phi_2\cdot\phi_3)$}:
		\[\fuzzypathset((\phi_1\cdot\phi_2)\cdot\phi_3)\]
		\begin{align}
			&=\{\bm{p_1}^{[0,v_1]}\cdot\bm{p_2}^{[0,v_2]} \mid \bm{p_1}^{[0,v_1]}\in \fuzzypathset(\phi_1\cdot \phi_2), \bm{p_2}^{[0,v_2]}\in \fuzzypathset(\phi_3)\} \notag\\
			&=\{\bm{q_1}^{[0,w_1]}\cdot\bm{q_2}^{[0,w_2]}\cdot\bm{p_2}^{[0,v_2]} \mid \bm{q_1}^{[0,w_1]}\in \fuzzypathset(\phi_1),\bm{q_2}^{[0,w_2]}\in \fuzzypathset(\phi_2), \notag\\ 
			&\ \ \ \ \ \ \bm{p_2}^{[0,v_2]}\in \fuzzypathset(\phi_3)\} \notag\\
			&=\{\bm{q_1}^{[0,w_1]}\cdot\bm{r_2}^{[0,u_2]} \mid \bm{q_1}^{[0,w_1]}\in \fuzzypathset(\phi_1), \bm{r_2}^{[0,u_2]}\in\fuzzypathset(\phi_1\cdot\phi_3)\} \notag \\
			&=\fuzzypathset((\phi_1)\cdot(\phi_2\cdot\phi_3)) \notag
		\end{align}
		
		\item \underline{$(\phi_1\U\phi_2)\cdot\phi_3 \equiv \phi_1 \U (\phi_2\cdot\phi_3)$}:
		\[\fuzzypathset((\phi_1\U\phi_2)\cdot\phi_3)\]
		\begin{align}
			&=\{\bm{p_1}^{[0,v_1]}\cdot\bm{p_2}^{[0,v_2]} \mid \bm{p_1}^{[0,v_1]}\in \fuzzypathset(\phi_1\U \phi_2), \bm{p_2}^{[0,v_2]}\in \fuzzypathset(\phi_3)\} \notag\\
			&=\{\bm{q_1}^{[0,t]}\cdot\bm{q_2}^{[0,w_2]}\cdot\bm{p_2}^{[0,v_2]} \mid \bm{q_1}^{[0,w_1]}\in \fuzzypathset(\phi_1), t\leq w_1, \notag\\  
			&\ \ \ \ \ \ \bm{q_2}^{[0,w_2]}\in \fuzzypathset(\phi_2),
			\bm{p_2}^{[0,v_2]}\in \fuzzypathset(\phi_3)\} \notag\\
			&=\{\bm{q_1}^{[0,t]}\cdot\bm{r_2}^{[0,u_2]} \mid \bm{q_1}^{[0,w_1]}\in \fuzzypathset(\phi_1),  t\leq w_1, \bm{r_2}^{[0,u_2]}\in\fuzzypathset(\phi_1\cdot\phi_3)\} \notag \\
			&=\fuzzypathset((\phi_1)\U(\phi_2\cdot\phi_3)) \notag
		\end{align}
		
		\item \underline{$(\phi_1\lor\phi_2)\U\phi_3 \equiv (\phi_1 \U\phi_3)\lor (\phi_2\cdot\phi_3)$}:
		\[\fuzzypathset((\phi_1\lor\phi_2)\U\phi_3)\]
		\begin{align}
			&=\{\bm{p_1}^{[0,t]}\cdot\bm{p_2}^{[0,v_2]} \mid \bm{p_1}^{[0,v_1]}\in \fuzzypathset(\phi_1\lor \phi_2), t\leq v_1, \bm{p_2}^{[0,v_2]}\in \fuzzypathset(\phi_3)\} \notag\\
			&=\{\bm{p_1}^{[0,t]}\cdot\bm{p_2}^{[0,v_2]} \mid \bm{p_1}^{[0,v_1]}\in \fuzzypathset(\phi_1)\cup \fuzzypathset(\phi_2), t\leq v_1, \bm{p_2}^{[0,v_2]}\in \fuzzypathset(\phi_3)\} \notag\\
			&=\{\bm{p_1}^{[0,t]}\cdot\bm{p_2}^{[0,v_2]} \mid \bm{p_1}^{[0,v_1]}\in \fuzzypathset(\phi_1), t\leq v_1, \bm{p_2}^{[0,v_2]}\in \fuzzypathset(\phi_3)\} \notag\\
			&\ \ \ \ \ \ \cup\{\bm{p_1}^{[0,t]}\cdot\bm{p_2}^{[0,v_2]} \mid \bm{p_1}^{[0,v_1]}\in \fuzzypathset(\phi_2), t\leq v_1, \bm{p_2}^{[0,v_2]}\in \fuzzypathset(\phi_3)\} \notag\\
			&=\fuzzypathset((\phi_1 \U\phi_3)\lor (\phi_2\cdot\phi_3)) \notag
		\end{align}
	
		\item \underline{$(\phi_1\cdot\phi_2)\U\phi_3 \equiv (\phi_1 \U\phi_3)\lor (\phi_1\cdot(\phi_2\U\phi_3))$}:
		\[\fuzzypathset((\phi_1\cdot\phi_2)\U\phi_3)\]
		\begin{align}
			&=\{\bm{p_1}^{[0,t]}\cdot\bm{p_2}^{[0,v_2]} \mid \bm{p_1}^{[0,v_1]}\in \fuzzypathset(\phi_1\cdot \phi_2), t\leq v_1, \bm{p_2}^{[0,v_2]}\in \fuzzypathset(\phi_3)\} \notag\\
			&=\{\bm{q_1}^{[0,t]}\cdot\bm{p_2}^{[0,v_2]} \mid \bm{q_1}^{[0,w_1]}\in \fuzzypathset(\phi_1), t\leq w_1, \bm{p_2}^{[0,v_2]}\in \fuzzypathset(\phi_3)\}\ \cup\notag\\
			&\ \ \ \ \ \ \{\bm{q_1}^{[0,w_1]}\cdot\bm{q_2}^{[0,t]}\cdot\bm{p_2}^{[0,v_2]} \mid \bm{q_1}^{[0,w_1]}\in \fuzzypathset(\phi_1), \bm{q_2}^{[0,w_2]}\in \fuzzypathset(\phi_2), \notag\\
			&\ \ \ \ \ \ \ \ \ \ \ t\leq w_2, \bm{p_2}^{[0,v_2]}\in \fuzzypathset(\phi_3)\} \notag\\
			&=\fuzzypathset((\phi_1 \U\phi_3)\lor \phi_1\cdot(\phi_2\U\phi_3)) \notag
		\end{align}
		
		\item \underline{$(\phi_1\U\phi_2)\U\phi_3 \equiv \phi_1 \U (\phi_2\U\phi_3)$}:
		\[\fuzzypathset((\phi_1\U\phi_2)\U\phi_3)\]
		\begin{align}
			&=\{\bm{p_1}^{[0,t]}\cdot\bm{p_2}^{[0,v_2]} \mid \bm{p_1}^{[0,v_1]}\in \fuzzypathset(\phi_1\U \phi_2), t\leq v_1, \bm{p_2}^{[0,v_2]}\in \fuzzypathset(\phi_3)\} \notag\\
			&=\{\bm{q_1}^{[0,t']}\cdot\bm{q_2}^{[0,t'']}\cdot\bm{p_2}^{[0,v_2]} \mid \bm{q_1}^{[0,w_1]}\in \fuzzypathset(\phi_1), t'\leq w_1, \notag\\  
			&\ \ \ \ \ \ \bm{q_2}^{[0,w_2]}\in \fuzzypathset(\phi_2), t''\leq w_2, 
			\bm{p_2}^{[0,v_2]}\in \fuzzypathset(\phi_3)\} \notag\\
			&=\{\bm{q_1}^{[0,t']}\cdot\bm{r_2}^{[0,u_2]} \mid \bm{q_1}^{[0,w_1]}\in \fuzzypathset(\phi_1),  t\leq w_1, \bm{r_2}^{[0,u_2]}\in\fuzzypathset(\phi_1\U\phi_3)\} \notag \\
			&=\fuzzypathset((\phi_1)\U(\phi_2\U\phi_3)) \notag
		\end{align}
	\end{enumerate}
\end{proof}

\subsection{Proof of Theorem~\ref{thm:main-max}}

\begin{lemma}
	\label{lem:abcd}
	For $a,b,c,d\in \reals$, we have the following two statements \[|\max\{a,b\}-\max\{c,d\}|\leq \max\{|a-c|,|b-d|\}\]
	\[|\min\{a,b\}-\min\{c,d\}|\leq \max\{|a-c|,|b-d|\}\]
\end{lemma}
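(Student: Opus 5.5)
Both inequalities follow from an elementary comparison argument, and I will handle the $\max$ statement first and then get the $\min$ statement from it by negation. Write $M=\max\{|a-c|,|b-d|\}$. The key observation is that each argument on one side is within $M$ of the corresponding argument on the other side. Concretely, $a\le c+M$ and $b\le d+M$.

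For the $\max$ inequality, taking the maximum of these two bounds gives
\[\max\{a,b\}\le \max\{c+M,d+M\}=\max\{c,d\}+M.\]
Symmetrically, $c\le a+M$ and $d\le b+M$, so
\[\max\{c,d\}\le \max\{a,b\}+M.\]
Combining the two one-sided bounds yields $|\max\{a,b\}-\max\{c,d\}|\le M$. This is simply the statement that $\max$ is monotone and commutes with adding a constant.

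For the $\min$ inequality, I will use $\min\{x,y\}=-\max\{-x,-y\}$. Applying the first inequality to $-a,-b,-c,-d$ gives
\[|\min\{a,b\}-\min\{c,d\}|=|\max\{-a,-b\}-\max\{-c,-d\}|\le \max\{|a-c|,|b-d|\},\]
because $|(-a)-(-c)|=|a-c|$ and likewise for $b,d$.

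There is no real obstacle in this argument. The only point needing care is the one-sided step $\max\{c+M,d+M\}=\max\{c,d\}+M$, which holds because adding a constant preserves order.

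If the lemma is later applied with values in $\bar{\reals}$, since $\infty$ appears in the semantics, that case would need a separate remark. The statement as given is over $\reals$, so I will restrict to real numbers.
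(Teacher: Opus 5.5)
Your argument is correct, but it takes a different route from the paper. The paper's proof is only sketched, as a case split on the orderings of $a,b$ and of $c,d$. In the two aligned cases the difference is literally $a-c$ or $b-d$. In a crossed case such as $a\ge b$, $c<d$, one checks $b-d\le a-d\le a-c$. The same bookkeeping is then repeated for $\min$.

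You instead use only that $\max$ is monotone and commutes with adding a constant, which gives both one-sided bounds with no case distinction. You then get the $\min$ statement for free from $\min\{x,y\}=-\max\{-x,-y\}$.

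The case split is fully concrete, but the number of cases grows with the number of arguments and the work must be redone for $\min$. Your argument extends verbatim to finitely many arguments, and even to suprema and infima over arbitrary index sets, $|\sup_i x_i-\sup_i y_i|\le\sup_i|x_i-y_i|$. That is exactly the form compared in the base case of Lemma~\ref{lem:until-free-max}: maxima over $t\in[0,\tau]$ of two nearby functions. The paper re-derives it there via optimal points $t_1,t_2$.

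Your closing caveat is also apt. The later proofs apply the lemma to distances that can be $\infty$, e.g.\ a single infinite disjunct, where the bound becomes vacuous or ill-defined. That is an issue with the later use, not with the lemma over $\reals$.
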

\begin{proof}
	(Sketch.) They can be proved by splitting all the cases i.e. when $a>b$, $c>d$, etc.
\end{proof}

\begin{lemma}
	\label{lem:init}
	For a given LC trajectory $\zeta^{[0,\nu]}$, and an FPL formula $\phi$, $\exists C_1\in \reals^+$ such that $\forall \bm{\pi}^{[0,\tau]}\in atoms(\phi)$ and $0\leq t\leq \min\{\nu,\tau\}$
	\[d(\zeta^{[0,\nu]}(t), \bm{\pi}^{[0,\tau]}(t))\leq C_1\ \]

\end{lemma}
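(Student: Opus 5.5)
We bound $d(\zeta(t),\bm{\pi}(t))$ by comparing it with its value at time $0$ and controlling the drift via the Lipschitz assumptions of Section~\ref{sec:semantics-assumptions}. Fix an atom $\bm{\pi}^{[0,\tau]}\in atoms(\phi)$ and a time $0\leq t\leq\min\{\nu,\tau\}$. We insert the intermediate quantity $d(\zeta(0),\bm{\pi}(t))$ and apply the triangle inequality:
\[
d(\zeta(t),\bm{\pi}(t)) \leq d(\zeta(0),\bm{\pi}(0)) + |d(\zeta(t),\bm{\pi}(t)) - d(\zeta(0),\bm{\pi}(t))| + |d(\zeta(0),\bm{\pi}(t)) - d(\zeta(0),\bm{\pi}(0))|.
\]

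We bound the two difference terms separately.

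\emph{Spatial change.} The second term compares two points against the same point-atom $\bm{\pi}(t)$. By Lipschitz continuity of $d$ in space it is at most $K_{d,\zeta}\|\zeta(t)-\zeta(0)\|$. By Lipschitz continuity of the trajectory this is at most $K_{d,\zeta}K_\zeta t$.

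\emph{Temporal change.} The third term compares one point $\zeta(0)$ against the moving atom at times $t$ and $0$. By Lipschitz continuity of $d$ in time it is at most $K_{d,\bm{\pi}}t$.

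Combining the three terms and using $t\leq\min\{\nu,\tau\}$ gives
\[
d(\zeta(t),\bm{\pi}(t))\leq d(\zeta(0),\bm{\pi}(0)) + (K_{d,\bm{\pi}}+K_\zeta K_{d,\zeta})\min\{\nu,\tau\}.
\]

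$atoms(\phi)$ is finite, since $\phi$ is a finite syntactic object. We can therefore take the maximum of this right-hand side over all atoms, which is exactly the constant $C_1$ stated after Theorem~\ref{thm:main-int}. This $C_1$ depends only on $\zeta$ and $\phi$, not on $t$ or on the particular atom, as the statement requires.

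The one subtle point is that $C_1$ must be a finite real. This needs $d(\zeta(0),\bm{\pi}(0))<\infty$ for each atom, which I take from the assumption that $d$ is finite-valued on genuine (non-degenerate) atoms. Atoms such as $\top$ (infinite covariance) or degenerate covariances would require either excluding them from $atoms(\phi)$ or noting that their distance is identically $0$; I would handle this with a brief remark. Everything else is a routine triangle-inequality argument.
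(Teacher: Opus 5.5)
Your proposal is correct and follows essentially the same route as the paper: insert $d(\zeta(0),\bm{\pi}(t))$ as the intermediate term, bound the spatial drift by $K_{d,\zeta}K_\zeta t$ and the temporal drift by $K_{d,\bm{\pi}}t$, then take the maximum over the finitely many atoms to obtain exactly the stated $C_1$. Your remark that $d(\zeta(0),\bm{\pi}(0))$ must be finite (relevant for $\top$ or degenerate covariances) addresses a point the paper leaves implicit, but it does not change the argument.
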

\begin{proof}
	Using LC of $\zeta(t)$, for all $t\in[0,\nu]$, we have
	\begin{equation}
		\label{eq:lem11-1}
		\|\zeta^{[0,\nu]}(t)-\zeta^{[0,\nu]}(0)\| \leq K_\zeta t 
	\end{equation}
	Using the assumption on $d(\cdot,\bm{\pi})$ with constant $K_{d,\zeta}$ and equation \ref{eq:lem11-1}, for all $\bm{\pi}\in atoms(\phi)$ and $t\in[0,\min\{\nu,\tau\}]$, we get
	\begin{equation}
		\label{eq:lem11-2}
		|d(\zeta^{[0,\nu]}(t), \bm{\pi}^{[0,\tau]}(t))- d(\zeta^{[0,\nu]}(0), \bm{\pi}^{[0,\tau]}(t))| \leq  K_{d,\zeta}K_\zeta t
	\end{equation}
	Using the assumption on $d(p, \bm{\pi}(\cdot))$ with Lipschitz constant $K_{d,\bm{\pi}}$, for all $\bm{\pi}\in atoms(\phi)$ and $t\in[0,\min\{\nu,\tau\}]$, we get
	\begin{equation}
		\label{eq:lem11-3}
		|d(\zeta^{[0,\nu]}(0), \bm{\pi}^{[0,\tau]}(t)) - d(\zeta^{[0,\nu]}(0), \bm{\pi}^{[0,\tau]}(0))| \leq  K_{d,\bm{\pi}} t
	\end{equation}
	Now, using equations \ref{eq:lem11-2} and \ref{eq:lem11-3}, for any $\bm{\pi}^{[0,\tau]}\in atoms(\phi)$
	\begin{align}
		d(&\zeta^{[0,\nu]}(t), \bm{\pi}^{[0,\tau]}(t)) \notag \\
		&\leq d(\zeta(0), \bm{\pi}^{[0,\tau]}(0))+ K_{d,\bm{\pi}} t + K_\zeta K_{d,\zeta} t \notag\\	
		&\leq d(\zeta(0), \bm{\pi}^{[0,\tau]}(0)) + (K_{d,\bm{\pi}} + K_\zeta K_{d,\zeta})\cdot\min\{\nu,\tau\} \notag
	\end{align}
	Now, for all atoms $\atom{}{\tau}\in atoms(\phi)$, we have
	\begin{align}
		d(&\zeta^{[0,\nu]}(t), \bm{\pi}^{[0,\tau]}(t)) \notag \\
		&\leq \max_{\bm{\pi}\in atoms(\phi)} \big(d(\zeta(0), \atom{}{\tau}(0)) + (K_{d,\bm{\pi}} + K_\zeta K_{d,\zeta})\cdot\min\{\nu,\tau\}\big) \notag\\
		&
		= C_1 \notag
	\end{align}
\end{proof}

\begin{lemma}
	\label{lem:cdelta-max}
	For an LC trajectory $\zeta^{[0,\tau]}$, an FPL formula $\phi$, $\exists C_2\in\reals^+$ such that $\forall\atom{}{\tau}\in atoms(\phi)$
	\[\big|\dist_{max}(\zeta^{[0,\tau]}, \atom{}{\tau})-\dist_{max}(\zeta^{[0,\tau-\delta]}, \atom{}{\tau-\delta})\big| \leq C_2\delta\]
	where $0<\delta <h_{min}(\phi)$.
\end{lemma}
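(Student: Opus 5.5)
The plan is to compare the two maxima directly, using that $\dist_{max}(\zeta^{[0,\tau-\delta]},\atom{}{\tau-\delta})$ is the maximum of the same function $g(t):=d(\zeta(t),\bm{\pi}(t))$ over the smaller interval $[0,\tau-\delta]$. The initial part $\atom{}{\tau-\delta}$ agrees with $\atom{}{\tau}$ on $[0,\tau-\delta]$, and the trajectory restricted to $[0,\tau-\delta]$ agrees with $\zeta^{[0,\tau]}$ there. Since $\delta<h_{min}(\phi)\le\tau$, the interval $[0,\tau-\delta]$ is nonempty and both quantities are finite, because the trajectory horizon equals the atom horizon in each case. Hence
\[
0\le \dist_{max}(\zeta^{[0,\tau]},\atom{}{\tau})-\dist_{max}(\zeta^{[0,\tau-\delta]},\atom{}{\tau-\delta}) = \max_{t\in[0,\tau]}g(t)-\max_{t\in[0,\tau-\delta]}g(t),
\]
and it remains to bound the right-hand side from above.

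The key step is to show that $g$ is Lipschitz on $[0,\tau]$ with constant $K_{d,\zeta}K_\zeta+K_{d,\bm{\pi}}$. For $t,t'\in[0,\tau]$, insert the intermediate term $d(\zeta(t'),\bm{\pi}(t))$ and use the triangle inequality:
\[
|g(t)-g(t')|\le |d(\zeta(t),\bm{\pi}(t))-d(\zeta(t'),\bm{\pi}(t))| + |d(\zeta(t'),\bm{\pi}(t))-d(\zeta(t'),\bm{\pi}(t'))|.
\]
The first term is at most $K_{d,\zeta}\|\zeta(t)-\zeta(t')\|\le K_{d,\zeta}K_\zeta|t-t'|$, by spatial Lipschitz continuity of $d$ together with Lipschitz continuity of $\zeta$. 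The second term is at most $K_{d,\bm{\pi}}|t-t'|$, by temporal Lipschitz continuity of $d$. This is the same splitting already used in the proof of Lemma~\ref{lem:init}. Since $g$ is continuous on a compact interval, the maximum over $[0,\tau]$ is attained at some $t^*$.

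If $t^*\le\tau-\delta$, the difference is $0$. Otherwise set $t'=\tau-\delta$; then $|t^*-t'|\le\delta$, so
\[
\max_{[0,\tau]}g = g(t^*)\le g(\tau-\delta)+(K_{d,\zeta}K_\zeta+K_{d,\bm{\pi}})\delta \le \max_{[0,\tau-\delta]}g+(K_{d,\zeta}K_\zeta+K_{d,\bm{\pi}})\delta.
\]

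Finally take $C_2=\max_{\bm{\pi}\in atoms(\phi)}(K_{d,\zeta}K_\zeta+K_{d,\bm{\pi}})$. This maximum is finite because $atoms(\phi)$ is finite, and it matches the constant stated after Theorem~\ref{thm:main-max}. I expect no real obstacle. The only care needed is to confirm that restricting to the initial part of the atom and of the trajectory gives exactly the same function $g$ on $[0,\tau-\delta]$, and that $\delta<h_{min}(\phi)$ keeps both horizons nonnegative, so the $\infty$ case never arises.
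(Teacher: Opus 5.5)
Your proposal is correct and follows essentially the same route as the paper: if the maximizer over $[0,\tau]$ lies in $[\tau-\delta,\tau]$, you compare it with the value at $\tau-\delta$ and split through the same intermediate term via the spatial and temporal Lipschitz assumptions, obtaining the same constant $K_{d,\zeta}K_\zeta+K_{d,\bm{\pi}}$. Taking the maximum of this constant over the finitely many atoms is also right and slightly cleaner than the appendix proof: there $C_2$ is left atom-dependent, even though the statement quantifies $\exists C_2\,\forall\bm{\pi}$.
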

\begin{proof}
	For an atom $\atom{}{\tau}\in atoms(\phi)$, we have
	\begin{align}
		\big|&\dist_{max}(\zeta^{[0,\tau]}, \atom{}{\tau})-\dist_{max}(\zeta^{[0,\tau-\delta]}, \atom{}{\tau-\delta})\big| \notag\\
		&= \Big|\underset{t\in[0,\tau]}{\max}d(\zeta^{[0,\tau]}(t), \atom{}{\tau}(t))\ 
 		-\underset{t\in[0,\tau-\delta]}{\max}d(\zeta^{[0,\tau]}(t), \atom{}{\tau}(t))
		\Big| \notag
	\end{align}
	Let's assume that the optimal values of the first and second terms are achieved at $t_1\in[0,\tau]$ and $t_2\in[0,\tau-\delta]$.
	\[= \Big|d(\zeta^{[0,\tau]}(t_1), \atom{}{\tau}(t_1))\ 
	-d(\zeta^{[0,\tau-\delta]}(t_2), \atom{}{\tau-\delta}(t_2))
	\Big| \]
	Since $\delta < h_{min}(\phi)$, $\tau-\delta>0$. 
	If $t_1\in [0,\tau-\delta]$, both terms are equal and their difference becomes 0 and the lemma holds. 
	If $t_1\in [\tau-\delta, \tau]$, since the maximum of the second term is reached at $t_2$ replacing it with any other value will only increase this difference
	\begin{align}
		&\leq \Big|d(\zeta^{[0,\tau]}(t_1), \atom{}{\tau}(t_1))\ 
		-d(\zeta^{[0,\tau-\delta]}(\tau-\delta), \atom{}{\tau-\delta}(\tau-\delta))
		\Big| \notag\\
		&= \Big|d(\zeta^{[0,\tau]}(t_1), \atom{}{\tau}(t_1))\ 
		-d(\zeta^{[0,\tau-\delta]}(\tau-\delta), \atom{}{\tau}(t_1)) \notag\\
		&\ + d(\zeta^{[0,\tau-\delta]}(\tau-\delta), \atom{}{\tau}(t_1))
		-d(\zeta^{[0,\tau-\delta]}(\tau-\delta), \atom{}{\tau-\delta}(\tau-\delta))
		\Big| \notag\\
		&\leq \Big|d(\zeta^{[0,\tau]}(t_1), \atom{}{\tau}(t_1))\ 
		-d(\zeta^{[0,\tau-\delta]}(\tau-\delta), \atom{}{\tau}(t_1))\Big| \notag\\
		&\ + \Big|d(\zeta^{[0,\tau-\delta]}(\tau-\delta), \atom{}{\tau}(t_1))
		-d(\zeta^{[0,\tau-\delta]}(\tau-\delta), \atom{}{\tau-\delta}(\tau-\delta))
		\Big| \notag
	\end{align}
	Now, using LC of $\zeta(\cdot)$, $d(\cdot,\mathbf{\pi})$, and $d(p,\pi(\cdot))$: for all $\atom{}{\tau}\in atoms(\phi)$, we have
	\[\leq K_{d,\zeta}K_{\zeta}|t_1-\tau+\delta|
	+ K_{d,\pi} |t_1-\tau+\delta|
	\]
	Since $t_1\in[\tau-\delta, \tau]$,
	\[\leq(K_{d,\zeta}K_{\zeta}
	+ K_{d,\pi})\delta \]
	\[= C_2\delta \notag\]
	where $C_2=K_{d,\zeta}K_{\zeta}
	+ K_{d,\pi}$.
	
\end{proof}

\begin{lemma}
	\label{lem:until-free-max}
	For an LC trajectory $\zeta^{[0,\nu]}$, an FPL formula $\phi$ in LNF, $0<\delta< h_{\min}(\phi)$ such that $\dist_{max}(\zeta^{[0,\nu-\delta]}_{\rightarrow \delta}, \phi)<\infty$ then $\exists C\in\reals^+$ such that
	\[|\dist_{max}(\zeta^{[0,\nu-\delta]}_{\rightarrow \delta}, \phi) - \dist_{max}(\zeta^{[0,\nu]}, \phi)|\leq C\delta\]
\end{lemma}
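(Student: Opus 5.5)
Because $\dist_{max}(\zeta,\phi)=\min_{\bm p\in\fuzzypathset(\phi)}\dist_{max}(\zeta,\bm p)$, the plan is to reduce the statement to single fuzzy paths and then pass to the minimum. The key step is therefore a pointwise bound: for every fuzzy path $\bm p=\atom{1}{\tau_1}\cdot\ldots\cdot\atom{m}{\tau_m}$,
\[|\dist_{max}(\zeta^{[0,\nu-\delta]}_{\rightarrow \delta},\bm p)-\dist_{max}(\zeta^{[0,\nu]},\bm p)|\le C\delta \quad\text{whenever both sides are finite.}\]
Given this, the bound carries over to the minimum over $\fuzzypathset(\phi)$ by the min-part of Lemma~\ref{lem:abcd}, or directly: take the minimizer on each side and compare it with the same path on the other side.

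For a single fuzzy path, unfolding the recursive definition of $\dist_{max}$ on concatenations gives
\[\dist_{max}(\zeta,\bm p)=\max_{t\in[0,v]} d(\zeta(t),\bm p(t)),\]
where $v=\sum_j\tau_j$ and $\bm p(t)$ is the point-atom active at time $t$. This is the maximum over the atom pieces $j$ of $\max_{s\in[0,\tau_j]}d(\zeta(T_j+s),\atom{j}{\tau_j}(s))$, with $T_j=\tau_1+\dots+\tau_{j-1}$. Shifting $\zeta$ by $\delta$ replaces $\zeta(T_j+s)$ with $\zeta(T_j+s+\delta)$. The Lipschitz assumption on the trajectory gives $\|\zeta(T_j+s+\delta)-\zeta(T_j+s)\|\le K_\zeta\delta$. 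The spatial Lipschitz assumption on $d(\cdot,\bm\pi)$ then bounds each pointwise difference by $K_{d,\zeta}K_\zeta\delta$. By the max-part of Lemma~\ref{lem:abcd}, applied iteratively or in its continuous-max form, the maxima differ by at most $K_{d,\zeta}K_\zeta\delta$. This yields the constant $C=K_{d,\zeta}K_\zeta$, which matches the constant $C$ stated after Theorem~\ref{thm:main-max}.

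The main obstacle is finiteness and domains. The shifted trajectory has horizon $\nu-\delta$, so a path that is finite for $\zeta$ (requiring $v<\nu$ in the strict concatenation cases) may be infinite for $\zeta_{\rightarrow\delta}$. The hypothesis only guarantees $\dist_{max}(\zeta_{\rightarrow\delta},\phi)<\infty$, so we must check which direction of the minimum comparison is safe. Any path $\bm p$ that is finite for the shifted trajectory has $v\le\nu-\delta<\nu$, so it is also finite for $\zeta$ and the pointwise bound applies. This gives $\dist_{max}(\zeta,\phi)\le\dist_{max}(\zeta_{\rightarrow\delta},\phi)+C\delta$, and in particular $\dist_{max}(\zeta,\phi)<\infty$. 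For the reverse inequality, an optimal path $\bm p^*$ for $\zeta$ might have $v>\nu-\delta$. In that case I would argue that the optimum for the shifted trajectory is attained on paths of length at most $\nu-\delta$ and compare against those paths only. If that fails, I would fall back to restricting both minima to the common set of paths finite for both trajectories, which is nonempty by hypothesis.

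Since the paper's semantics seem to treat a trajectory longer than the path as acceptable, one must also confirm the reading that $\min$ ranges only over paths with $v\le$ horizon. If so, it remains to show that the minimizing path for $\zeta$ can be taken with $v\le\nu-\delta$ up to an $O(\delta)$ loss. I would try to derive this from the LNF structure and $\delta<h_{min}(\phi)$, by shortening the last until-switch. I expect this step to require the most care.
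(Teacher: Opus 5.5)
\textbf{Where your route differs from the paper.} You flatten $\phi$ to $\fuzzypathset(\phi)$, bound each fuzzy path pointwise, and then pass to the minimum. The paper instead does structural induction on the LNF formula, using Lemma~\ref{lem:abcd} at each $\lor$, $\cdot$, $\U$ node.

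\textbf{What you prove is correct.} For every $\bm p$ with $v\le\nu-\delta$ you get $|\dist_{max}(\zeta^{[0,\nu-\delta]}_{\rightarrow\delta},\bm p)-\dist_{max}(\zeta^{[0,\nu]},\bm p)|\le K_{d,\zeta}K_\zeta\delta$. Every path that is finite for the shifted trajectory is also finite for $\zeta$. Hence $\dist_{max}(\zeta^{[0,\nu]},\phi)\le\dist_{max}(\zeta^{[0,\nu-\delta]}_{\rightarrow\delta},\phi)+C\delta$.

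\textbf{The gap is the reverse inequality, and none of your repairs closes it.}
\begin{itemize}
\item That the shifted optimum lives on paths with $v\le\nu-\delta$ is automatic, but it says nothing about $\dist_{max}(\zeta^{[0,\nu]},\phi)$.
\item Restricting both minima to the common set replaces $\dist_{max}(\zeta^{[0,\nu]},\phi)$ by a possibly much larger quantity.
\item Shortening an until-switch needs until-segments with slack in the optimal path, and these need not exist.
\end{itemize}
In fact the reverse inequality fails for every $C$ independent of $\delta$. Take $n=1$, $\zeta^{[0,2]}\equiv 0$, Mahalanobis $d$, and $\phi=\bm{\pi_1}^{[0,1]}\lor\bm{\pi_2}^{[0,2]}$ with constant means $100$ and $0$ and unit variance. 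All assumptions of Section~\ref{sec:semantics-assumptions} hold, and $h_{min}(\phi)=1$. For every $\delta\in(0,1)$ we have $\dist_{max}(\zeta^{[0,2-\delta]}_{\rightarrow\delta},\phi)=\min\{100,\infty\}=100<\infty$, while $\dist_{max}(\zeta^{[0,2]},\phi)=0$. The difference stays $100$ as $\delta\to 0$.

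\textbf{The paper's proof has the same hole.}
\begin{itemize}
\item In the $\lor$ case it applies the induction hypothesis to both disjuncts. The finiteness premise $\dist_{max}(\zeta^{[0,\nu-\delta]}_{\rightarrow\delta},\phi_i)<\infty$ may hold for only one of them (here only for $\bm{\pi_1}$).
\item In case 2 of the $\U$ case it evaluates the shifted trajectory at the unshifted optimal switching time $t$, where the value can be $\infty$.
\end{itemize}

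\textbf{How to repair it.} The two-sided statement needs an extra hypothesis. One option is that some optimal path for $\zeta$ has length at most $\nu-\delta$, or has until-durations totalling at least $\delta$. Another is that $\phi$ is disjunction-free. In that case your shortening idea does go through, because reducing until-durations by a total of at most $\delta$ keeps every time offset in $[0,\delta]$.

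Alternatively, weaken the lemma to the one-sided bound you proved. That is all the until case of Theorem~\ref{thm:main-max} needs. There the lemma is applied to the base trajectory $\zeta^{[0,\nu-t']}_{\rightarrow t'}$ with shift $t-t'\ge 0$. Only $\dist_{max}(\zeta^{[0,\nu-t']}_{\rightarrow t'},\phi_2)\le\dist_{max}(\zeta^{[0,\nu-t]}_{\rightarrow t},\phi_2)+C\delta$ is used, to bound $\tilde{\dist}_{max}$ from above. The bound from below follows from the induction hypothesis alone, since the discretized minimum ranges over a subset of the switching times.
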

\begin{proof} 
	We will show it using induction on the structure of the formula.
	The base case is $\phi = \atom{}{\tau}$.
	Since we know $\dist(\zeta^{[0,\nu-\delta]}_{\rightarrow \delta}, \phi)<\infty$, $\nu-\delta \geq \tau$.
	Now,
	\begin{align}
		|\dist_{max}(&\zeta^{[0,\nu-\delta]}_{\rightarrow \delta}, \atom{}{\tau}) - \dist_{max}(\zeta^{[0,\nu]}, \atom{}{\tau})| \notag \\ 
		&=\Big|\underset{t\in[0,\tau]}{\max}\big\{d(\zeta^{[0,\nu-\delta]}_{\rightarrow \delta}(t), \atom{}{\tau}(t))\big\} \notag\\ 
		&\quad -\underset{t\in[0,\tau]}{\max}\big\{d(\zeta^{[0,\nu]}(t), \atom{}{\tau}(t))\big\}\Big| \notag 
	\end{align}
	Assuming the first term and the second term reaches the optimal value on $t_1$ and $t_2$ respectively, then
	\begin{align}
		&=|d(\zeta^{[0,\nu-\delta]}_{\rightarrow \delta}(t_1), \atom{}{\tau}(t_1)) - d(\zeta^{[0,\nu]}(t_2), \atom{}{\tau}(t_2))| \label{eq:until-free-max-1}
	\end{align}
	There are two possibilities here. 
	\begin{enumerate}
		\item first term is bigger in equation~\ref{eq:until-free-max-1}. \\
		Since $t_2$ is optimal for second term, we have
		\[\leq |d(\zeta^{[0,\nu-\delta]}_{\rightarrow \delta}(t_1), \atom{}{\tau}(t_1)) - d(\zeta^{[0,\nu]}(t_1), \atom{}{\tau}(t_1))| \]
		Now, using LC of $\zeta(\cdot)$ and $d(\cdot,\mathbf{\pi})$, we get
		\[\leq  K_{d,\zeta} K_{\zeta}\delta = C\delta\]
		where $C=K_{d,\zeta} K_{\zeta}$.
		\item the second term is bigger in equation~\ref{eq:until-free-max-1}. \\
		Since $t_1$ is optimal for the first term, we get
		\[\leq |d(\zeta^{[0,\nu-\delta]}_{\rightarrow \delta}(t_2), \atom{}{\tau}(t_2)) - d(\zeta^{[0,\nu]}(t_2), \atom{}{\tau}(t_2))| \]
		Again, using LC of $\zeta(\cdot)$ and
		$d(\cdot,\mathbf{\pi})$, we get
		\[\leq  K_{d,\zeta} K_{\zeta}\delta = C\delta\]
	\end{enumerate} 
	This completes the base case.\\
	
	\noindent
	To prove the general case, the inductive hypothesis is that the lemma holds for all sub-formulas of $\phi$ and we shall show that it holds for $\phi$ too. 
	\begin{itemize}
		\item
		\underline{$\phi=\phi_1\lor\phi_2$}:\ \  $|\dist_{max}(\zeta^{[0,\nu-\delta]}_{\rightarrow\delta}, \phi)-\dist_{max}(\zeta^{[0,\nu]}, \phi)|$
		\begin{align} 
			&\leq \big|\min\big\{\dist_{max}(\zeta^{[0,\nu-\delta]}_{\rightarrow\delta}, \phi_1), \dist_{max}(\zeta^{[0,\nu-\delta]}_{\rightarrow\delta}, \phi_2)\big\} \notag\\ 
			&\quad -\min\big\{\dist_{max}(\zeta^{[0,\nu]}, \phi_1), \dist_{max}(\zeta^{[0,\nu]}, \phi_2)\big\}\big| \notag\\
			&\text{Using Lemma~\ref{lem:abcd}}\notag\\
			&\leq \max\big\{\big|\dist_{max}(\zeta^{[0,\nu-\delta]}_{\rightarrow\delta}, \phi_1)-\dist_{max}(\zeta^{[0,\nu]}, \phi_1)\big|, \notag\\
			&\quad \big|\dist_{max}(\zeta^{[0,\nu-\delta]}_{\rightarrow\delta}, \phi_2)-\dist_{max}(\zeta^{[0,\nu]}, \phi_2)\big|\big\} \notag\\
			&\leq \max\{C\delta, C\delta\} \tag{using I.H.} \\
			&= C\delta \notag
		\end{align}
		\item 
		\underline{$\phi=\atom{}{\tau}\cdot\phi_2$}:\ \ 
		$\big|\dist_{max}(\zeta^{[0,\nu-\delta]}_{\rightarrow\delta}, \phi)-\dist_{max}(\zeta^{[0,\nu]}, 
		\phi)\big|$
		\begin{align}
			&= \big|\max\big\{\dist_{max}(\zeta^{[0,\tau]}_{\rightarrow\delta}, \atom{}{\tau}), \dist_{max}(\zeta^{[0,\nu-\delta-\tau]}_{\rightarrow\delta+\tau}, \phi_2)\big\} \notag\\
			&\quad -\max\big\{\dist_{max}(\zeta^{[0,\tau]}, \atom{}{\tau}), \dist_{max}(\zeta^{[0,\nu-\tau]}_{\rightarrow \tau}, \phi_2)\big\}\big| \notag\\
			&\text{Using Lemma~\ref{lem:abcd}}\notag\\
			&\leq \max\Big\{\big|\dist_{max}(\zeta^{[0,\tau]}_{\rightarrow\delta}, \atom{}{\tau})- \dist_{max}(\zeta^{[0,\tau]}, \atom{}{\tau})\big|, \notag\\
			&\quad \big|\dist_{max}(\zeta^{[0,\nu-\delta-\tau]}_{\rightarrow\delta+\tau}, \phi_2), \dist_{max}(\zeta^{[0,\nu-\tau]}_{\rightarrow \tau}- \phi_2)\big|\Big\} \notag\\
			&\leq \max\{C\delta, C\delta\} \tag{using I.H.} \\
			& = C\delta \notag
		\end{align}

		\item \underline{$\phi=\atom{}{\tau}\U\phi_2$}:\ \ 
		Let's assume that the optimal time to switch for $\zeta^{[{0,\nu}]}$ and $\zeta^{[0,\nu-\delta]}_{\rightarrow\delta}$ are $t$ and $t'$ respectively.
		There are two cases here.
		\begin{enumerate}
			\item If $\dist_{max}(\zeta^{[0,\nu-\delta]}_{\rightarrow\delta}, \phi)>\dist_{max}(\zeta^{[0,\nu]}, \phi)$ then since $t'$ is not optimal for the second term 
		\begin{align} 
			&\big|\dist_{max}(\zeta^{[0,\nu-\delta]}_{\rightarrow\delta}, \phi)-\dist_{max}(\zeta^{[0,\nu]}, \phi)\big| \notag \\
			&= \big|\max\big\{\dist_{max}(\zeta^{[0,t']}_{\rightarrow\delta}, \atom{}{t'}), \dist_{max}(\zeta^{[0,\nu-\delta-t']}_{\rightarrow\delta+t'}, \phi_2)\big\} \notag\\
			&\quad -\max\big\{\dist_{max}(\zeta^{[0,t]}, 
			\atom{}{t}), \dist_{max}(\zeta^{[0,\nu-t]}_{\rightarrow t}, \phi_2)\big\}\big| \notag\\
			&\leq \big|\max\big\{\dist_{max}(\zeta^{[0,t']}_{\rightarrow\delta}, \atom{}{t'}), \dist_{max}(\zeta^{[0,\nu-\delta-t']}_{\rightarrow\delta+t'}, \phi_2)\big\} \notag\\
			&\quad -\max\big\{\dist_{max}(\zeta^{[0,t']}, \atom{}{t'}), \dist_{max}(\zeta^{[0,\nu-t']}_{\rightarrow t'}, \phi_2)\big\}\big| \notag\\			
			&\text{Using Lemma~\ref{lem:abcd}}\notag\\
			&\leq \max\Big\{\big|\dist_{max}(\zeta^{[0,t']}_{\rightarrow\delta}, \atom{}{t'})- \dist_{max}(\zeta^{[0,t']}, \atom{}{t'})\big|, \notag\\
			&\quad \big|\dist_{max}(\zeta^{[0,\nu-\delta-t']}_{\rightarrow\delta+t'}, \phi_2)- \dist_{max}(\zeta^{[0,\nu-t']}_{\rightarrow t'}, \phi_2)\big|\Big\} \notag \\
			&\leq \max\{C\delta, C\delta\} \tag{using I.H.} \\
			& = C\delta \notag 
		\end{align}
		\item Otherwise, if $\dist_{max}(\zeta^{[0,\nu-\delta]}_{\rightarrow\delta}, \phi)\leq\dist_{max}(\zeta^{[0,\nu]}, \phi)$ then since $t$ is not optimal for the first term
		\begin{align}
			&\big|\dist_{max}(\zeta^{[0,\nu-\delta]}_{\rightarrow\delta}, \phi)-\dist_{max}(\zeta^{[0,\nu]}, \phi)\big| \notag \\
			&= \big|\max\big\{\dist_{max}(\zeta^{[0,t']}_{\rightarrow\delta}, \atom{}{t'}), \dist_{max}(\zeta^{[0,\nu-\delta-t']}_{\rightarrow\delta+t'}, \phi_2)\big\} \notag\\
			&\quad -\max\big\{\dist_{max}(\zeta^{[0,t]}, 
			\atom{}{t}), \dist_{max}(\zeta^{[0,\nu-t]}_{\rightarrow t}, \phi_2)\big\}\big| \notag\\
			&\leq \big|\max\big\{\dist_{max}(\zeta^{[0,t]}_{\rightarrow\delta}, \atom{}{t}), \dist_{max}(\zeta^{[0,\nu-\delta-t]}_{\rightarrow\delta+t}, \phi_2)\big\} \notag\\
			&\quad -\max\big\{\dist_{max}(\zeta^{[0,t]}, \atom{}{t}), \dist_{max}(\zeta^{[0,\nu-t]}_{\rightarrow t}, \phi_2)\big\}\big| \notag\\			
			&\text{Using Lemma~\ref{lem:abcd}}\notag\\
			&\leq \max\Big\{\big|\dist_{max}(\zeta^{[0,t]}_{\rightarrow\delta}, \atom{}{t})- \dist_{max}(\zeta^{[0,t]}, \atom{}{t})\big|, \notag\\
			&\quad \big|\dist_{max}(\zeta^{[0,\nu-\delta-t]}_{\rightarrow\delta+t}, \phi_2)- \dist_{max}(\zeta^{[0,\nu-t]}_{\rightarrow t}, \phi_2)\big|\Big\} \notag \\
			&\leq \max\{C\delta, C\delta\} \tag{using I.H.} \\
			& = C\delta \notag
		\end{align}
		\end{enumerate}

	\end{itemize}
	This completes the proof. 		
\end{proof}

\begin{theorem}
	For an LC trajectory $\zeta^{[0,\nu]}$, an FPL formula $\phi^{[0,v]}$ in LNF and $0<\delta <h_{min}(\phi)/k$, $\exists C,C_2 \in \reals^+$ such that
	\[|\tilde{\dist}_{max}(\zeta^{[0,\nu]}, \phi) - \dist_{max}(\zeta^{[0,\nu]}, \phi)| \leq k(C_2+C)\delta\] 
	where $k$ is the number of until operators in $\phi$.
\end{theorem}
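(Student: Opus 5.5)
We argue by structural induction on the LNF formula $\phi$, proving the stronger claim that for every suffix $\zeta_{\rightarrow s}$ of the trajectory the error of Algorithm~\ref{alg:main} is at most $k(C_2+C)\delta$, where $k$ is the number of until operators in the subformula at hand. The claim is needed for suffixes because concatenation and until recurse on shifted trajectories.

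\textbf{Base case and the non-until cases.}
\begin{itemize}
\item For an atom, the algorithm calls the exact atom semantics, which is computable by assumption. So the error is $0$, and $k=0$.
\item For $\phi_1\lor\phi_2$, Lemma~\ref{lem:abcd} gives that the error of the $\min$ is at most the maximum of the two errors. That is bounded by $\max(k_1,k_2)(C_2+C)\delta\le k(C_2+C)\delta$.
\item For $\atom{}{\tau}\cdot\phi_2$, the first component is exact. The second component is computed recursively on $\zeta_{\rightarrow\tau}$. Lemma~\ref{lem:abcd} applied to the $\max$ again bounds the error by that of $\phi_2$, which has the same number of untils.
\end{itemize}
A cheap side remark handles infinite values: when either value is $\infty$ we check that both are. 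The approximation never undercuts the true value, since each of its candidate switching points is also a feasible switching point for the exact semantics.

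\textbf{The until case $\atom{}{\tau}\U\phi_2$.}
\begin{itemize}
\item \emph{Upper bound on the approximation.} Here $\tilde\dist$ is a minimum over a subset $S$ of the switching times, with recursively approximated second components. By the induction hypothesis on $\phi_2$, it is bounded below by $\dist-(k-1)(C_2+C)\delta$.
\item \emph{Choice of grid point.} For the upper bound, let $t^*$ be an optimal switching time for the exact semantics. Following the remark after the theorem statement in the main text, take the grid point $s=\max\{m\delta\le t^*\}\in S$, so $0\le t^*-s<\delta$ with $s\ge0$.
\item \emph{Atom component.} Compare $\dist_{max}(\zeta^{[0,s]},\atom{}{s})$ with $\dist_{max}(\zeta^{[0,t^*]},\atom{}{t^*})$ using the argument of Lemma~\ref{lem:cdelta-max}, truncating an atom by less than $\delta$. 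This gives a difference of at most $C_2\delta$.
\item \emph{Second component.} Compare $\dist_{max}(\zeta_{\rightarrow s},\phi_2)$ with $\dist_{max}(\zeta_{\rightarrow t^*},\phi_2)$. Since $\zeta_{\rightarrow t^*}$ is $\zeta_{\rightarrow s}$ shifted by $t^*-s<\delta$, Lemma~\ref{lem:until-free-max} applies. Its finiteness hypothesis holds because the optimum at $t^*$ is finite; if the optimum is $\infty$ the claim is trivial. The difference is at most $C\delta$.
\item \emph{Combining.} Lemma~\ref{lem:abcd} gives $|\dist(\zeta,\atom{}{s}\cdot\phi_2)-\dist(\zeta,\phi)|\le(C_2+C)\delta$. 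The algorithm then replaces $\dist(\zeta_{\rightarrow s},\phi_2)$ by its approximation, adding at most $(k-1)(C_2+C)\delta$ by the induction hypothesis. Taking the minimum over $S$ yields total error $k(C_2+C)\delta$.
\end{itemize}

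\textbf{Main obstacle.} I expect the delicate point to be the hypotheses of the two lemmas in the until step.
\begin{itemize}
\item Lemma~\ref{lem:cdelta-max} needs truncation amount $<h_{min}$ and the truncated atom horizon to stay positive. The condition $\delta<h_{min}(\phi)/k$ gives room when $s$ is near $0$, and if $s=0$ the atom part is empty and contributes nothing.
\item Lemma~\ref{lem:until-free-max} needs finiteness of the shifted value, and it must be applied to a formula in LNF. The finiteness is why we round down to $s\le t^*$: shifting left only lengthens the available trajectory, so horizons that fit at $t^*$ still fit at $s$.
\end{itemize}
If these checks go through, the induction closes with constants independent of the suffix, because the Lipschitz constants are global.
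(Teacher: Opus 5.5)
Your proposal is correct and follows the paper's proof essentially step for step. Both use structural induction quantified over all suffixes, exact atoms, and Lemma~\ref{lem:abcd} for $\lor$ and $\cdot$. In the until case, both round the optimal switch time $t^*$ down to the grid and combine Lemma~\ref{lem:cdelta-max} ($C_2\delta$), Lemma~\ref{lem:until-free-max} ($C\delta$) and the induction hypothesis on $\phi_2$ ($(k-1)(C_2+C)\delta$).

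Your explicit observation that $\tilde{\dist}_{max}\geq\dist_{max}$ (every grid switch time is feasible for the exact semantics) supplies the lower-bound direction that the paper's until case leaves implicit. Your use of Lemma~\ref{lem:abcd} there also correctly replaces the paper's sum. The only slip is the remark that the atom part is empty at $s=0$: it equals $d(\zeta(0),\bm{\pi}(0))$, which the same truncation argument handles.
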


\begin{proof}
	To prove the correctness, we use induction on the structure of the formula $\phi$.
	The base case is trivial since we assume that we can compute the exact distance for an atom.
	Induction hypothesis is that all the sub-formulas of smaller size can be approximated for any trajectory.
	\begin{itemize}
		\item \underline{$\phi = \phi_1 \lor \phi_2$}: \\
		Using induction hypothesis, we get
		\[|\tilde{\dist}_{max}(\zeta^{[0,\nu]}, \phi_1)-\dist_{max}(\zeta^{[0,\nu]}, \phi_1)| \leq k(C_2+C)\delta\]
		and 
		\[|\tilde{\dist}_{max}(\zeta^{[0,\nu]}, \phi_2)-\dist_{max}(\zeta^{[0,\nu]}, \phi_2)| \leq k(C_2+C)\delta\]
		Now, by definition,
		\[\tilde{\dist}_{max}(\zeta^{[0,\nu]}, \phi) = \min\{\tilde{\dist}_{max}(\zeta^{[0,\nu]}, \phi_1), \tilde{\dist}_{max}(\zeta^{[0,\nu]}, \phi_2)\}\]
		This gives us 
		\[|\tilde{\dist}_{max}(\zeta^{[0,\nu]}, \phi)-\dist_{max}(\zeta^{[0,\nu]}, \phi)| \leq k(C_2+C)\delta\]
		\item \underline{$\phi=\atom{}{\tau} \cdot \phi_2$}: \\
		If $\nu < \tau$ then by definition the semantics is $\infty$, so we assume $\nu > \tau$.
		Now, by induction hypothesis we get 
		\[|\tilde{\dist}_{max}(\zeta^{[0,\nu-\tau]}_{\rightarrow \tau}, \phi_2)-\dist_{max}(\zeta^{[0,\nu-\tau]}_{\rightarrow \tau}, \phi_2)| \leq k(C_2+C)\delta\]

		Since we assumed that we can compute the semantics for atoms exactly, we get
		\begin{align}
			|\tilde{\dist}_{max}&(\zeta^{[0,\nu]}, \phi)-\dist_{max}(\zeta^{[0,\nu]}, \phi)| \notag\\
			&=|\max\big\{\tilde{\dist}_{max}(\zeta^{[0,\tau]}, \atom{}{\tau}), \tilde{\dist}_{max}(\zeta^{[0,\nu-\tau]}_{\rightarrow \tau}, \phi_2)\big\} \notag\\
			&\quad - \max\big\{\dist_{max}(\zeta^{[0,\tau]}, \atom{}{\tau}) - \dist_{max}(\zeta^{[0,\nu-\tau]}_{\rightarrow \tau}, \phi_2)\big\}| \notag\\
			&\leq \max\big\{|\tilde{\dist}_{max}(\zeta^{[0,\tau]}, \atom{}{\tau})- \dist_{max}(\zeta^{[0,\tau]}, \atom{}{\tau})|, \notag\\
			&\quad |\tilde{\dist}_{max}(\zeta^{[0,\nu-\tau]}_{\rightarrow \tau}, \phi_2)- \dist_{max}(\zeta^{[0,\nu-\tau]}_{\rightarrow \tau}, \phi_2)| \big\} \notag \\
			&\leq \max\{0,k(C_2+C)\delta\} \notag\\ 
			&=k(C_2+C)\delta \notag
		\end{align}

		\item \underline{$\phi=\atom{}{\tau} \U \phi_2$}: \\
		If the value of semantics is infinite then the algorithm is trivially correct since all the computations would lead to $\infty$.
		So, let's assume the value is finite and that the optimal time to switch for this until was $t$. Therefore,
		\[\dist_{max}(\zeta^{[0,\nu]}, \phi)\] 
		\[=\max\big\{\dist_{max}(\zeta^{[0,t]}, \atom{}{t}),\dist_{max}(\zeta^{[0,\nu-t]}_{\rightarrow t}, \phi_2)\big\}\]
		Let $t'<t$ be the closest point in discretization to $t$, (which also implies $t-t'\leq \delta$)
		\[\tilde{\dist}_{max}(\zeta^{[0,\nu]}, \phi) \] 
		\[\leq\max\big\{\tilde{\dist}_{max}(\zeta^{[0,t']}, \atom{}{t'}),\tilde{\dist}_{max}(\zeta^{[0,\nu-t']}_{\rightarrow t'}, \phi_2)\big\}\]
		
		Now, since $\dist(\zeta^{[0,\nu-t]}_{\rightarrow t}, \phi_2)<\infty$, we use Lemma~\ref{lem:until-free-max} to get
		\begin{equation}
			|\dist_{max}(\zeta^{[0,\nu-t']}_{\rightarrow t'}, \phi_2) - \dist_{max}(\zeta^{[0,\nu-t]}_{\rightarrow t}, \phi_2)| \leq C\delta \label{eq:thm-main-1}
		\end{equation}
		Also, from Lemma~\ref{lem:cdelta-max} and the assumption that computing semantics for atoms is exact, we get 
		\begin{equation}	
			|\tilde{\dist}_{max}(\zeta^{[0,t']}, \atom{}{t'}) - \dist_{max}(\zeta^{[0,t]}, \atom{}{t})| \leq C_2\delta \label{eq:thm-main-2}
		\end{equation}
		
		Since there can be only $k-1$ until operators in $\phi_2$, induction hypothesis gives us 
		\begin{align}
			|\tilde{\dist}_{max}(\zeta^{[0,\nu-t']}_{\rightarrow t'}, &\phi_2) - \dist_{max}(\zeta^{[0,\nu-t]}_{\rightarrow t}, \phi_2)| \notag\\ 
			&\leq (k-1)(C_2+C)\delta \label{eq:thm-main-3}
		\end{align}
		Now,
		\begin{align}
			|\tilde{\dist}_{max}&(\zeta^{[0,\nu]}, \phi)-\dist_{max}(\zeta^{[0,\nu]}, \phi)| \notag\\
			= |&\tilde{\dist}_{max}(\zeta^{[0,t']}, \atom{}{t'}) + \tilde{\dist}_{max}(\zeta^{[0,\nu-t']}_{\rightarrow t'}, \phi_2) \notag\\
			- &\dist_{max}(\zeta^{[0,t]}, \atom{}{t}) - \dist_{max}(\zeta^{[0,\nu-t]}_{\rightarrow t}, \phi_2)| \notag\\
			\leq |&\tilde{\dist}_{max}(\zeta^{[0,t']}, \atom{}{t'}) - \dist_{max}(\zeta^{[0,t]}, \atom{}{t})| \notag\\
			+ &|\tilde{\dist}_{max}(\zeta^{[0,\nu-t']}_{\rightarrow t'}, \phi_2) - \dist_{max}(\zeta^{[0,\nu-t]}_{\rightarrow t}, \phi_2)| \notag 
		\end{align}
		Using equation \ref{eq:thm-main-1}, \ref{eq:thm-main-2}, and \ref{eq:thm-main-3} we get 
		\begin{align}
			&\leq C_2\delta + (k-1)(C_2+C)\delta + C\delta\notag\\
			&\leq k(C_2+C)\delta \notag
		\end{align}
	\end{itemize}
	
\end{proof}

\subsection{Proof of Theorem~\ref{thm:main-int}}

\begin{lemma}
	\label{lem:cdelta-int}
	For an LC trajectory $\zeta^{[0,\tau]}$, an FPL formula $\phi$, $\exists C_1\in\reals^+$ such that $\forall\atom{}{\tau}\in atoms(\phi)$
	\[\big|\dist_{int}(\zeta^{[0,\tau]}, \atom{}{\tau})-\dist_{int}(\zeta^{[0,\tau-\delta]}, \atom{}{\tau-\delta})\big| \leq C_1\delta\]
	where $0<\delta <h_{min}(\phi)$.
	
\end{lemma}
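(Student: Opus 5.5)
The plan is to write the difference directly as an integral over the tail segment and then bound the integrand uniformly with Lemma~\ref{lem:init}.

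By definition, and since $\delta<h_{min}(\phi)\le\tau$ makes both horizons non-empty with $\nu=\tau\ge\tau-\delta$, both values are finite. The truncated atom $\atom{}{\tau-\delta}$ and the truncated trajectory $\zeta^{[0,\tau-\delta]}$ agree pointwise with $\atom{}{\tau}$ and $\zeta^{[0,\tau]}$ on $[0,\tau-\delta]$. Hence
\[
\dist_{int}(\zeta^{[0,\tau]}, \atom{}{\tau})-\dist_{int}(\zeta^{[0,\tau-\delta]}, \atom{}{\tau-\delta})
=\int_{\tau-\delta}^{\tau} d(\zeta^{[0,\tau]}(t), \atom{}{\tau}(t))\,dt .
\]
The integrands over $[0,\tau-\delta]$ cancel exactly. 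This is simpler than in Lemma~\ref{lem:cdelta-max}, where we had to compare two maxima.

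The integrand is a distance, so it is non-negative, and the absolute value of the difference equals the integral itself. To bound the integrand I invoke Lemma~\ref{lem:init} with $\nu=\tau$. For every $t\in[0,\min\{\nu,\tau\}]=[0,\tau]$, and in particular on $[\tau-\delta,\tau]$, it gives $d(\zeta(t),\atom{}{\tau}(t))\le C_1$. Here $C_1$ is the constant already exhibited after Theorem~\ref{thm:main-int}, taken as a maximum over $atoms(\phi)$, so it is uniform over all atoms of $\phi$. Integrating over an interval of length $\delta$ then gives the bound $C_1\delta$.

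The only delicate point is uniformity of $C_1$. It must not depend on $\tau$ or on the particular atom, and it must be valid for the trajectory horizon used here. Lemma~\ref{lem:init} is stated for a fixed trajectory $\zeta^{[0,\nu]}$. When the present lemma is later applied to shifted or truncated pieces of the original trajectory, the value $d(\zeta(0),\cdot)$ appearing in $C_1$ changes. I would handle this either by stating $C_1$ relative to the trajectory at hand, as the lemma literally does, or, for the later application, by bounding $d(\zeta(s),\atom{}{\tau}(0))$ over all shifts $s\in[0,\nu]$ via the Lipschitz constants $K_\zeta K_{d,\zeta}\nu$. For the lemma as stated, the direct application of Lemma~\ref{lem:init} suffices.
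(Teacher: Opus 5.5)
Your proposal is correct and matches the paper's proof. The paper likewise reduces the difference to $\int_{\tau-\delta}^{\tau} d(\zeta^{[0,\tau]}(t),\atom{}{\tau}(t))\,dt$, bounds the integrand by the constant $C_1$ of Lemma~\ref{lem:init}, and obtains $C_1\delta$. Your remark that $C_1$ depends on $\zeta(0)$ when the lemma is later applied to shifted trajectories goes beyond what the paper discusses, but the lemma as stated does not need it.
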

\begin{proof}
	For all atoms $\atom{}{\tau}\in atoms(\phi)$, $\tau-\delta>0$, therefore
	\begin{align}
		\big|\dist_{int}(\zeta^{[0,\tau]},& \atom{}{\tau})-\dist_{int}(\zeta^{[0,\tau-\delta]}, \atom{}{\tau-\delta})\big| \notag\\
		&= \Big|\int_{\tau-\delta}^{\tau}d(\zeta^{[0,\tau]}(t), \atom{}{\tau}(t))dt\Big| \notag 
	\end{align}
	Using, Lemma~\ref{lem:init}, we get
	\begin{align}
		\leq \Big|\int_{\tau-\delta}^{\tau}C_1dt\Big| \notag = C_1\delta
	\end{align}
\end{proof}

\begin{lemma}
	\label{lem:until-free-int}
	For an LC trajectory $\zeta^{[0,\nu]}$, an FPL formula in LNF $\phi$, $0<\delta< h_{\min}(\phi)$ such that $\dist_{int}(\zeta^{[0,\nu-\delta]}_{\rightarrow \delta}, \phi)<\infty$ then $\exists C\in\reals^+$ such that
	\[|\dist_{int}(\zeta^{[0,\nu-\delta]}_{\rightarrow \delta}, \phi) - \dist_{int}(\zeta^{[0,\nu]}, \phi)|\leq kC\nu\delta\]
	where $k$ is the number of concatenation or until operators in $\phi$.
\end{lemma}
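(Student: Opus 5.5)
I would prove this by structural induction on the LNF formula $\phi$, mirroring the proof of Lemma~\ref{lem:until-free-max} but replacing $\max$ by sums and tracking how integrals accumulate pointwise errors. Throughout I use the pointwise estimate that, by Lipschitz continuity of $\zeta$ and of $d(\cdot,\bm{\pi})$, for any $s$ in the common domain,
\[
\big|d(\zeta_{\rightarrow\delta}(s),\bm{\pi}(s)) - d(\zeta(s),\bm{\pi}(s))\big| \le K_{d,\zeta}K_\zeta\delta = C\delta .
\]

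\emph{Base case} $\phi=\atom{}{\tau}$. Finiteness gives $\nu-\delta\ge\tau$, so both integrals run over $[0,\tau]$. Integrating the pointwise bound gives $|\dist_{int}(\zeta_{\rightarrow\delta},\bm{\pi})-\dist_{int}(\zeta,\bm{\pi})|\le C\tau\delta\le C\nu\delta$. This is the source of the factor $\nu$. Here $k=0$, so I state the induction invariant as $\max(k,1)C\nu\delta$, or equivalently count the atom itself; I would make this convention explicit.

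\emph{Disjunction.} This case is immediate from Lemma~\ref{lem:abcd} (the $\min$ version) and the induction hypothesis, since both disjuncts have at most $k$ operators.

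\emph{Concatenation} $\atom{}{\tau}\cdot\phi_2$. The two values are sums. I split by the triangle inequality into the atom term, bounded by $C\tau\delta$ as in the base case, and the $\phi_2$ term on $\zeta_{\rightarrow\tau}$ versus $\zeta_{\rightarrow\tau+\delta}$, bounded by $(k-1)C\nu\delta$ via the induction hypothesis (applied to the shifted trajectory with horizon $\le\nu$). The sum is at most $kC\nu\delta$.

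\emph{Until} $\atom{}{\tau}\U\phi_2$. Let $t,t'$ be the optimal switching times for $\zeta$ and $\zeta_{\rightarrow\delta}$ respectively. As in Lemma~\ref{lem:until-free-max}, I split on which value is larger.
\begin{itemize}
\item If the $\zeta_{\rightarrow\delta}$ value is larger, I upper bound it by using the switch time $t$.
\item Otherwise, I upper bound the $\zeta$ value by using $t'$.
\end{itemize}
In either case the difference is bounded by the difference of two sums that share the same switch time. That difference is then controlled exactly as in the concatenation case, with $\atom{}{t}$ in place of $\atom{}{\tau}$, giving $C t\delta + (k-1)C\nu\delta\le kC\nu\delta$.

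\textbf{Main obstacle: finiteness.} To apply the induction hypothesis to $\phi_2$, I need finiteness of $\dist_{int}(\zeta_{\rightarrow \tau+\delta},\phi_2)$ under the chosen switch. In the case where the minimizer is $t$, taken from the $\zeta$ side, the shifted trajectory is shorter by $\delta$ and could fail to cover $\phi_2$. I would first try to choose the comparison switch as the optimal time of the trajectory known to be finite, namely $\zeta_{\rightarrow\delta}$. If the bad case still arises, I would argue that the same finite path remains admissible by noting that $\zeta$ has horizon at least that of $\zeta_{\rightarrow\delta}$.
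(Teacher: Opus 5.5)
Your skeleton matches the paper's: structural induction, Lemma~\ref{lem:abcd} for $\lor$, the triangle inequality for $\cdot$, and a case split on which value is larger for $\U$. In the $\U$ case your choice of switch to reuse is the correct one.

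Write $A(s)$ and $B(s)$ for the values of $\zeta_{\rightarrow\delta}$ and of $\zeta$ when switching at $s$, so $A=A(t')$ and $B=B(t)$. If $A>B$ you need $A-B\le A(t)-B(t)$. The paper's write-up instead replaces $B(t)$ by $B(t')\ge B(t)$, and does the symmetric thing in the other case, which only bounds $A-B$ from \emph{below}. Your first fallback, reusing $t'$, is exactly that reversed step. Your second fallback (the $\zeta_{\rightarrow\delta}$-optimal path is admissible for the longer $\zeta$) only helps in the case $B\ge A$.

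So the obstacle you flag stays open, and it does occur. Take $\bm{\pi_1}^{[0,4]}\U\bm{\pi_2}^{[0,2]}$ from Example~\ref{ex:until} and $\zeta^{[0,5]}$ tracking the two means with the switch at $3$. The unique optimum is $t=3$ with $B=0$, but every switch above $3-\delta$ is infeasible for $\zeta_{\rightarrow\delta}$, so $A(t)=\infty$. When $t\ge\delta$, the missing move is to let $\zeta_{\rightarrow\delta}$ switch at $t-\delta$. Its remainder is then literally $\zeta_{\rightarrow t}$, so the $\phi_2$-terms cancel. The leftover term $-\int_{t-\delta}^{t}d\,\le 0$ then gives $A-B\le C(t-\delta)\delta$ with no induction hypothesis.

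The more serious gap is the disjunction case, which you call immediate. The induction hypothesis for $\phi_i$ needs $\dist_{int}(\zeta_{\rightarrow\delta},\phi_i)<\infty$ for \emph{both} $i$, but only the minimum is known to be finite. Lemma~\ref{lem:abcd} with an $\infty$ entry gives nothing. No trick repairs this, because the two-sided estimate fails for every $\delta$-independent constant.

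Here is a counterexample. Take one signal, $\zeta\equiv 0$ on $[0,4]$, and $\phi=(\bm{\pi_1}^{[0,1]}\cdot\bm{\pi_1}^{[0,1]})\lor\bm{\pi_2}^{[0,4]}$ with unit variances and means $100$ and $0$; then $k=1$ and $h_{min}(\phi)=1$. We get $\dist_{int}(\zeta,\phi)=0$ via $\bm{\pi_2}$. For every $\delta\in(0,1)$, however, the shifted trajectory has horizon $4-\delta<4$, can only use the first disjunct, and gets $200$. So no $C$ independent of $\delta$ works, in particular not $K_{d,\zeta}K_\zeta$, which is the constant Theorem~\ref{thm:main-int} uses. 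The paper's proof has the same hole. Replacing $\phi$ by $\bm{\pi}^{[0,1]}\U\phi$, with $\bm{\pi}$ centred at $0$, defeats the $\U$ case as well; there $t=0<\delta$, so the $t-\delta$ move is unavailable.

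What is true, and what the upper estimate in Theorem~\ref{thm:main-int} actually needs when it compares $\zeta_{\rightarrow t'}$ with $\zeta_{\rightarrow t}$ for $t'<t$, is the one-sided bound $\dist_{int}(\zeta^{[0,\nu]},\phi)\le\dist_{int}(\zeta^{[0,\nu-\delta]}_{\rightarrow\delta},\phi)+C\nu\delta$. To prove it:
\begin{enumerate}
\item Take a (near-)optimal fuzzy path $\bm{p}$ for $\zeta_{\rightarrow\delta}$; it is admissible for the longer $\zeta$.
\item By the recursive definition, $\dist_{int}(\cdot,\bm{p})$ is a single integral of pointwise distances over an interval of length at most $\nu-\delta$.
\item Your pointwise estimate therefore integrates to at most $C\nu\delta$.
\end{enumerate}
This is your ``otherwise'' branch. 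It needs no case split and no factor $k$, and it also disposes of the $k=0$ base-case problem you noticed.
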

\begin{proof} 
	We will show it using induction on the structure of the formula.
	The base case is $\phi = \atom{}{\tau}$.
	Since we assume \\
	$\dist(\zeta^{[0,\nu-\delta]}_{\rightarrow \delta}, \phi)<\infty$, we know $\nu-\delta \geq \tau$. Now,

	\begin{align}
		&|\dist_{int}(\zeta^{[0,\nu-\delta]}_{\rightarrow \delta}, \atom{}{\tau}) - \dist_{int}(\zeta^{[0,\nu]}, \atom{}{\tau})| \notag \\ 
		&=\Big|\int_{0}^{\tau}d(\zeta^{[0,\nu-\delta]}_{\rightarrow \delta}(t), \atom{}{\tau}(t))dt- \int_{0}^{\tau}d(\zeta^{[0,\nu]}(t), \atom{}{\tau}(t))dt\Big| \notag
	\end{align}
	\[\leq \Big|\int_{0}^{\tau}d(\zeta^{[0,\nu-\delta]}_{\rightarrow \delta}(t), \atom{}{\tau}(t))- d(\zeta^{[0,\nu]}(t), \atom{}{\tau}(t))dt\Big|\]
	By using LC of $\zeta(\cdot)$ and $d(\cdot,\mathbf{\pi})$, we get
	\[\leq \Big|K_{d,\zeta} K_{\zeta}\int_{0}^{\tau}\delta dt\Big| = K_{d,\zeta} K_{\zeta}\tau \delta = C\tau\delta
	\]	
	where $C=K_{d,\zeta} K_{\zeta}$. 
	This completes the base case. \\

	\noindent
	To prove the general case, the inductive hypothesis is that the lemma holds for all sub-formulas of $\phi$ and we shall show that it holds for $\phi$ too. 
	\begin{itemize}
		\item
		\underline{$\phi=\phi_1\lor\phi_2$}:\ \  $|\dist_{int}(\zeta^{[0,\nu-\delta]}_{\rightarrow\delta}, \phi)-\dist_{int}(\zeta^{[0,\nu]}, \phi)|$
		\begin{align} 
			&\leq \big|\min\big\{\dist_{int}(\zeta^{[0,\nu-\delta]}_{\rightarrow\delta}, \phi_1), \dist_{int}(\zeta^{[0,\nu-\delta]}_{\rightarrow\delta}, \phi_2)\big\} \notag\\
			&\quad -\min\big\{\dist_{int}(\zeta^{[0,\nu]}, \phi_1), \dist_{int}(\zeta^{[0,\nu]}, \phi_2)\big\}\big| \notag\\
			&\leq \max\big\{\big|\dist_{int}(\zeta^{[0,\nu-\delta]}_{\rightarrow\delta}, \phi_1)-\dist_{int}(\zeta^{[0,\nu]}, \phi_1)\big|, \notag\\
			&\quad  \big|\dist_{int}(\zeta^{[0,\nu-\delta]}_{\rightarrow\delta}, \phi_2)-\dist_{int}(\zeta^{[0,\nu]}, \phi_2)\big|\big\} \notag\\
			&\leq \max\{kC\nu\delta, kC\nu\delta\} \tag{I.H.} \\
			&\leq kC\nu\delta \notag
		\end{align}
		\item 
		\underline{$\phi=\atom{}{\tau}\cdot\phi_2$}:\ \ 
		$\big|\dist_{int}(\zeta^{[0,\nu-\delta]}_{\rightarrow\delta}, \phi)-\dist_{int}(\zeta^{[0,\nu]}, 
		\phi)\big|$
		\begin{align}
			&= \big|\dist_{int}(\zeta^{[0,\tau]}_{\rightarrow\delta}, \atom{}{\tau}) + \dist_{int}(\zeta^{[0,\nu-\delta-\tau]}_{\rightarrow\delta+\tau}, \phi_2) \notag\\
			&\quad -\dist_{int}(\zeta^{[0,\tau]}, 
			\atom{}{\tau})-\dist_{int}(\zeta^{[0,\nu-\tau]}_{\rightarrow \tau}, \phi_2)\big| \notag\\
			&\leq \big|\dist_{int}(\zeta^{[0,\tau]}_{\rightarrow\delta}, \atom{}{\tau}) -\dist_{int}(\zeta^{[0,\tau]}, \atom{}{\tau})\big| \notag\\
			&\quad +\big|\dist_{int}(\zeta^{[0,\nu-\delta-\tau]}_{\rightarrow\delta+\tau}, \phi_2)- \dist_{int}(\zeta^{[0,\nu-\tau]}_{\rightarrow \tau}, \phi_2)\big| \notag\\
			&\leq C\tau\delta+ (k-1)C\delta(\nu-\tau) \tag{I.H.}\\
			&\leq kC\nu\delta \notag
		\end{align}
		
		\item \underline{$\phi=\atom{}{\tau}\U\phi_2$}:\ \ 
		Let's assume that the optimal time to switch for $\zeta^{[{0,\nu}]}$ and $\zeta^{[0,\nu-\delta]}_{\rightarrow\delta}$ are $t$ and $t'$ respectively.
		There are two cases here.
		\begin{enumerate}
			\item If $\dist_{max}(\zeta^{[0,\nu-\delta]}_{\rightarrow\delta}, \phi)>\dist_{max}(\zeta^{[0,\nu]}, \phi)$ then since $t'$ is not optimal for the second term 
			\begin{align} 
				&\big|\dist_{int}(\zeta^{[0,\nu-\delta]}_{\rightarrow\delta}, \phi)-\dist_{int}(\zeta^{[0,\nu]}, \phi)\big| \notag \\
				&= \big|\dist_{int}(\zeta^{[0,t']}_{\rightarrow\delta}, \atom{}{t'}) + \dist_{int}(\zeta^{[0,\nu-\delta-t']}_{\rightarrow\delta+t'}, \phi_2) \notag\\
				&\quad - \dist_{int}(\zeta^{[0,t]}, 
				\atom{}{t}) - \dist_{int}(\zeta^{[0,\nu-t]}_{\rightarrow t}, \phi_2)\big| \notag\\
				&\leq \big|\dist_{int}(\zeta^{[0,t']}_{\rightarrow\delta}, \atom{}{t'}) + \dist_{int}(\zeta^{[0,\nu-\delta-t']}_{\rightarrow\delta+t'}, \phi_2) \notag\\
				&\quad - \dist_{int}(\zeta^{[0,t']}, 
				\atom{}{t'}) - \dist_{int}(\zeta^{[0,\nu-t']}_{\rightarrow t'}, \phi_2)\big| \notag\\
				&\leq \big|\dist_{int}(\zeta^{[0,t']}_{\rightarrow\delta}, \atom{}{t'}) - \dist_{int}(\zeta^{[0,t']}, 
				\atom{}{t'}) \big| \notag\\
				&\quad + \big|\dist_{int}(\zeta^{[0,\nu-\delta-t']}_{\rightarrow\delta+t'}, \phi_2) - \dist_{int}(\zeta^{[0,\nu-t']}_{\rightarrow t'}, \phi_2)\big| \notag\\			
				&\leq C\nu\delta + (k-1)C\nu\delta \tag{using I.H.} \\
				& = kC\nu\delta \notag 
			\end{align}
			\item Otherwise, if $\dist_{int}(\zeta^{[0,\nu-\delta]}_{\rightarrow\delta}, \phi)\leq\dist_{int}(\zeta^{[0,\nu]}, \phi)$ then since $t$ is not optimal for the first term
			\begin{align}
				&\big|\dist_{int}(\zeta^{[0,\nu-\delta]}_{\rightarrow\delta}, \phi)-\dist_{int}(\zeta^{[0,\nu]}, \phi)\big| \notag \\
				&= \big|\dist_{int}(\zeta^{[0,t']}_{\rightarrow\delta}, \atom{}{t'}) + \dist_{int}(\zeta^{[0,\nu-\delta-t']}_{\rightarrow\delta+t'}, \phi_2) \notag\\
				&\quad - \dist_{int}(\zeta^{[0,t]}, 
				\atom{}{t}) - \dist_{int}(\zeta^{[0,\nu-t]}_{\rightarrow t}, \phi_2)\big| \notag\\
				&\leq \big|\dist_{int}(\zeta^{[0,t]}_{\rightarrow\delta}, \atom{}{t}) + \dist_{int}(\zeta^{[0,\nu-\delta-t]}_{\rightarrow\delta+t}, \phi_2) \notag\\
				&\quad - \dist_{int}(\zeta^{[0,t]}, 
				\atom{}{t}) - \dist_{int}(\zeta^{[0,\nu-t]}_{\rightarrow t}, \phi_2)\big| \notag\\
				&\leq \big|\dist_{int}(\zeta^{[0,t]}_{\rightarrow\delta}, \atom{}{t}) - \dist_{int}(\zeta^{[0,t]}, 
				\atom{}{t}) \big| \notag\\
				&\quad + \big|\dist_{int}(\zeta^{[0,\nu-\delta-t]}_{\rightarrow\delta+t}, \phi_2) - \dist_{int}(\zeta^{[0,\nu-t]}_{\rightarrow t}, \phi_2)\big| \notag\\			
				&\leq C\nu\delta + (k-1)C\nu\delta \tag{using I.H.} \\
				& = kC\nu\delta \notag 
			\end{align}
		\end{enumerate}

	\end{itemize}
	
\end{proof}

\begin{theorem}
	\label{app:thm:main-int}
	For an LC trajectory $\zeta^{[0,\nu]}$, an FPL formula $\phi^{[0,v]}$ in LNF and $0<\delta <h_{min}(\phi)$, then $\exists C_1,C\in\reals^+$ such that
	\[|\tilde{\dist}_{int}(\zeta^{[0,\nu]}, \phi) - \dist_{int}(\zeta^{[0,\nu]}, \phi)| \leq k(C_1+kC\nu)\delta\] 
	where $k$ is the number of concatenation or until operators in $\phi$.
\end{theorem}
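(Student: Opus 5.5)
We argue by structural induction on the LNF formula $\phi$, mirroring the proof of Theorem~\ref{thm:main-max} but with sums in place of maxima. The claim we carry through the induction is that, for every subformula $\psi$ with $k_\psi$ concatenation or until operators and every shifted suffix $\zeta'$ of $\zeta$ of horizon $\nu'\le\nu$, we have $|\tilde{\dist}_{int}(\zeta',\psi)-\dist_{int}(\zeta',\psi)|\le k_\psi(C_1+k_\psi C\nu)\delta$. We use the constants $C_1$ of Lemma~\ref{lem:init} and $C=K_{d,\zeta}K_\zeta$. Since $\nu'\le\nu$ and $k_\psi\le k$, the bounds for subformulas are monotone in these parameters. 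The base case, an atom, is exact by the computability assumption (error $0$). If the true value is $\infty$, both sides are $\infty$ and the claim is trivial, so we assume finiteness throughout.

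\textbf{Disjunction and concatenation.} For $\phi_1\lor\phi_2$, the min-part of Lemma~\ref{lem:abcd} gives the maximum of the two inductive errors. Each of these is bounded by the claimed bound, because $k_{\phi_i}\le k$. For $\atom{}{\tau}\cdot\phi_2$, the atom part is computed exactly, and the suffix $\zeta_{\rightarrow\tau}$ is evaluated at exactly the same split point in both the algorithm and the semantics. So the error is just the inductive error for $\phi_2$, which has $k-1$ operators. This is trivially at most $k(C_1+kC\nu)\delta$.

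\textbf{Until.} This is the main step. Let $\phi=\atom{}{\tau}\U\phi_2$ and let $t$ be an optimal switching time, so that $\dist_{int}(\zeta,\phi)=\dist_{int}(\zeta^{[0,t]},\atom{}{t})+\dist_{int}(\zeta_{\rightarrow t},\phi_2)$. Take $t'\le t$ to be the largest discretization point, so $t-t'\le\delta$. Choosing $t'$ below $t$ keeps $\dist_{int}(\zeta_{\rightarrow t'},\phi_2)$ finite, as the discussion after Theorem~\ref{thm:main-int} notes. We split the error into three terms:
\begin{enumerate}
\item the atom-truncation error $|\dist_{int}(\zeta^{[0,t']},\atom{}{t'})-\dist_{int}(\zeta^{[0,t]},\atom{}{t})|\le C_1\delta$, by Lemma~\ref{lem:cdelta-int}, applied to the atom restricted to $[0,t]$ with $d\le C_1$ from Lemma~\ref{lem:init};
\item the shift error $|\dist_{int}(\zeta_{\rightarrow t'},\phi_2)-\dist_{int}(\zeta_{\rightarrow t},\phi_2)|\le (k-1)C\nu\delta$, by Lemma~\ref{lem:until-free-int} applied to the trajectory $\zeta_{\rightarrow t'}$ shifted by $t-t'\le\delta$;
\item the recursive error $|\tilde{\dist}_{int}(\zeta_{\rightarrow t'},\phi_2)-\dist_{int}(\zeta_{\rightarrow t'},\phi_2)|\le (k-1)(C_1+(k-1)C\nu)\delta$, by the induction hypothesis.
\end{enumerate}
Summing the three terms gives at most $k C_1\delta+\big((k-1)+(k-1)^2\big)C\nu\delta\le k(C_1+kC\nu)\delta$. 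This argument bounds $\tilde{\dist}_{int}(\zeta,\phi)$ from above by the true value plus the error.

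For the reverse direction, every discretized candidate $s\in S$ is a legitimate switching time. So the exact semantics satisfies $\dist_{int}(\zeta,\phi)\le \dist_{int}(\zeta^{[0,s]},\atom{}{s})+\dist_{int}(\zeta_{\rightarrow s},\phi_2)$. We apply this with $s$ the minimizer in $S$ and compare it with the algorithm's value using only the inductive error for $\phi_2$, since the atom term is computed exactly. This gives $\dist_{int}(\zeta,\phi)-\tilde{\dist}_{int}(\zeta,\phi)\le (k-1)(C_1+(k-1)C\nu)\delta$.

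The main obstacle is the shift term. Lemma~\ref{lem:until-free-int} requires finiteness on the shifted trajectory and $\delta<h_{min}$, which is why we round down rather than to the nearest point. Its constant scales with the horizon, which forces the quadratic $k^2 C\nu$ term. We must also check that the induction hypothesis is applied to suffix trajectories whose horizons are at most $\nu$, so that the constants stay uniform.
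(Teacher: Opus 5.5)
Your proposal is correct and follows the paper's proof essentially step for step. Both use structural induction on the LNF formula with exact atoms and the min-inequality of Lemma~\ref{lem:abcd} for disjunction. In the until case, both round the optimal switching time $t$ down to $t'\in S$ and split the error into three terms: the atom-truncation term $C_1\delta$ (Lemma~\ref{lem:cdelta-int}), the shift term $(k-1)C\nu\delta$ (Lemma~\ref{lem:until-free-int}), and the inductive term $(k-1)(C_1+(k-1)C\nu)\delta$.

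Your explicit reverse inequality, which uses that every point of $S$ is an admissible switching time, is a real improvement. The paper's proof silently turns the one-sided bound $\tilde{\dist}_{int}(\zeta^{[0,\nu]},\phi)\le \tilde{\dist}_{int}(\zeta^{[0,t']},\atom{}{t'})+\tilde{\dist}_{int}(\zeta^{[0,\nu-t']}_{\rightarrow t'},\phi_2)$ into an equality, and so only controls $\tilde{\dist}_{int}-\dist_{int}$ from above.
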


\begin{proof}
	(Sketch.) 
	The proof works by applying structural induction on $\phi$, and the main case is the until case, which also requires that a trajectory shifted slightly backwards has similar distance as the original one.
	In the proof, we show the following values of the constants work
	\[C_1 = \max_{\bm{\pi}\in atoms(\phi)} \big(d(\zeta(0), \atom{}{\tau}(0)) + (K_{d,\bm{\pi}} + K_\zeta K_{d,\zeta})\cdot\min\{\nu,\tau\}\big)\] \[C_2=\max_{\bm{\pi}\in atoms(\phi)} \big(K_{d,\zeta}K_{\zeta}
	+ K_{d,\pi}\big) \text{ and } C=K_{d,\zeta} K_{\zeta}\]

	(Complete Proof.)
	To prove the correctness, we use induction on the structure of the formula $\phi$.
	The base case is trivial since we assume that we can compute the exact distance for an atom.
	Induction hypothesis is that all the sub-formulas of smaller size can be approximated for any trajectory.
	\begin{itemize}
		\item \underline{$\phi = \phi_1 \lor \phi_2$}: \\
		Using induction hypothesis, we get
		\[|\tilde{\dist}_{int}(\zeta^{[0,\nu]}, \phi_1)-\dist_{int}(\zeta^{[0,\nu]}, \phi_1)| \leq k(C_1+kC\nu)\delta\]
		and 
		\[|\tilde{\dist}_{int}(\zeta^{[0,\nu]}, \phi_2)-\dist_{int}(\zeta^{[0,\nu]}, \phi_2)| \leq k(C_1+kC\nu)\delta\]
		Now, by definition,
		\[\tilde{\dist}_{int}(\zeta^{[0,\nu]}, \phi) = \min\{\tilde{\dist}_{int}(\zeta^{[0,\nu]}, \phi_1), \tilde{\dist}_{int}(\zeta^{[0,\nu]}, \phi_2)\}\]
		This gives us 
		\[|\tilde{\dist}_{int}(\zeta^{[0,\nu]}, \phi)-\dist_{int}(\zeta^{[0,\nu]}, \phi)| \leq k(C_1+kC\nu)\delta\]
		
		\item \underline{$\phi=\atom{}{\tau} \cdot \phi_2$}: \\
		If $\nu < \tau$ then by definition the semantics is $\infty$, so we assume $\nu > \tau$.
		Now, by induction hypothesis we get 
		\[|\tilde{\dist}(\zeta^{[0,\nu-\tau]}_{\rightarrow \tau}, \phi_2)-\dist(\zeta^{[0,\nu-\tau]}_{\rightarrow \tau}, \phi_2)| \leq k(C_1+kC\nu)\delta\]
		
		Since we assume that we can calculate the semantics for atoms exactly, we get
		\begin{align}
			&|\tilde{\dist}_{int}(\zeta^{[0,\nu]}, \phi)-\dist(\zeta^{[0,\nu]}, \phi)| \notag\\
			=&|\tilde{\dist}_{int}(\zeta^{[0,\tau]}, \atom{}{\tau}) + \tilde{\dist}_{int}(\zeta^{[0,\nu-\tau]}_{\rightarrow \tau}, \phi_2) \notag \\ -&\dist_{int}(\zeta^{[0,\tau]}, \atom{}{\tau}) - \dist_{int}(\zeta^{[0,\nu-\tau]}_{\rightarrow \tau}, \phi_2)| \notag\\
			= |&\tilde{\dist}_{int}(\zeta^{[0,\nu-\tau]}_{\rightarrow \tau}, \phi_2)-\dist_{int}(\zeta^{[0,\nu-\tau]}_{\rightarrow \tau}, \phi_2)| \notag \\
			\leq\ \ &k(C_1+kC\nu)\delta \tag{using I.H.}
		\end{align}

		\item \underline{$\phi=\atom{}{\tau} \U \phi_2$}: \\
		If the value of semantics is infinite then the algorithm is trivially correct since all the computations would lead to $\infty$.
		So, let's assume the value is finite and that the optimal time to switch for this until was $t$. Then,
		\[\dist_{int}(\zeta^{[0,\nu]}, \phi) = \dist_{int}(\zeta^{[0,t]}, \atom{}{t})+\dist_{int}(\zeta^{[0,\nu-t]}_{\rightarrow t}, \phi_2)\]
		Let $t'<t$ be the closest point in discretization to $t$, (which also implies $t-t'\leq \delta$)
		\[\tilde{\dist}_{int}(\zeta^{[0,\nu]}, \phi) \leq \tilde{\dist}_{int}(\zeta^{[0,t']}, \atom{}{t'})+\tilde{\dist}_{int}(\zeta^{[0,\nu-t']}_{\rightarrow t'}, \phi_2)\]
		
		Now, since $\dist(\zeta^{[0,\nu-t]}_{\rightarrow t}, \phi_2)<\infty$, we use Lemma~\ref{lem:until-free-int} to get
		\begin{equation}
			|\dist_{int}(\zeta^{[0,\nu-t']}_{\rightarrow t'}, \phi_2) - \dist_{int}(\zeta^{[0,\nu-t]}_{\rightarrow t}, \phi_2)| \leq (k-1)C\nu\delta \label{eq:int-main-thm-1}
		\end{equation}
		Also, from Lemma~\ref{lem:cdelta-int} and the assumption that computing semantics for atoms is exact, we get 
		\begin{equation}	
			|\tilde{\dist}_{int}(\zeta^{[0,t']}, \atom{}{t'}) - \dist_{int}(\zeta^{[0,t]}, \atom{}{t})| \leq C_1\delta \label{eq:int-main-thm-2}
		\end{equation}
		
		Since there can be only $k-1$ until operators in $\phi_2$, induction hypothesis gives us 
		\begin{align}
			|\tilde{\dist}_{int}(\zeta^{[0,\nu-t']}_{\rightarrow t'}, &\phi_2) - \dist_{int}(\zeta^{[0,\nu-t]}_{\rightarrow t}, \phi_2)| \notag\\ 
			&\leq (k-1)(C_1+(k-1)C\nu)\delta \label{eq:int-main-thm-3}
		\end{align}
		Now,
		\begin{align}
			|\tilde{\dist}_{int}&(\zeta^{[0,\nu]}, \phi)-\dist_{int}(\zeta^{[0,\nu]}, \phi)| \notag\\
			= |&\tilde{\dist}_{int}(\zeta^{[0,t']}, \atom{}{t'}) + \tilde{\dist}_{int}(\zeta^{[0,\nu-t']}_{\rightarrow t'}, \phi_2) \notag\\
			- &\dist_{int}(\zeta^{[0,t]}, \atom{}{t}) - \dist_{int}(\zeta^{[0,\nu-t]}_{\rightarrow t}, \phi_2)| \notag\\
			\leq |&\tilde{\dist}_{int}(\zeta^{[0,t']}, \atom{}{t'}) - \dist_{int}(\zeta^{[0,t]}, \atom{}{t})| \notag\\
			+ &|\tilde{\dist}_{int}(\zeta^{[0,\nu-t']}_{\rightarrow t'}, \phi_2) - \dist_{int}(\zeta^{[0,\nu-t]}_{\rightarrow t}, \phi_2)| \notag 
		\end{align}
		Using equation \ref{eq:int-main-thm-1}, \ref{eq:int-main-thm-2}, and \ref{eq:int-main-thm-3} we get 
		\begin{align}
			&\leq C_1\delta + (k-1)(C_1+(k-1)C\nu)\delta + (k-1)C\nu\delta\notag\\
			&\leq kC_1\delta + k^2C\nu\delta \notag \\
			&=k(C_1+kC\nu)\delta \notag
		\end{align}

	\end{itemize}
	
\end{proof}

\section{Discussion}

\subsection{Model Checking}
\label{app:mc}
We can extend the algorithm to compute semantics to model check a restricted class of hybrid systems. 
If the system has finitely many paths then, it is possible to compute semantics for all different paths and check whether all paths satisfy the specification or not.

\begin{definition}
	A hybrid automaton is defined as an 8-tuple $\mathcal{H} = (L, \Sigma, \Delta, S, init, inv, flow, jump)$, where 
	\begin{itemize}
			\item $L$ is a finite set of locations,
			\item $\Sigma$ is a finite set of actions,
			\item $\Delta \subseteq L \times \Sigma \times L$ is a set of labeled edges that represent discrete changes in the control mode,
			\item $S=\{s_1, s_2, \dots, s_n\}$ is a set of real-valued variables, and $\dot{S}=\{\dot{s_1}, \dot{s_2}, \dots, \dot{s_n}\}$ represents the set of first derivative of $S$ w.r.t time and $S'=(s_1', s_2', \dots, s_n')$ are primed variables that represent updates after discrete jumps,
			\item $init$, $inv$ and $flow$ assign three predicates to each location. 
			The initial condition $init(l)$ is a predicate over $S$ which represents the possible values of $S$ when the system starts in location $l$.
			The invariant $inv(l)$ is also a predicate over $S$ which represents the constraints system should satisfy when in location $l$.
			The flow condition $flow(l)$ is a predicate over $S\cup S'$ describes how the system evolves when in location $l$.
			\item $jump(e)$ assigns a predicate over $S\cup S'$ to edge $e\in E$ and it represents whether taking that edge is feasible or not.
		\end{itemize}
\end{definition}

A run of the Hybrid automaton is a trajectory $\zeta_{l_1}^{[0,t_1]}\cdot \zeta_{l_2}^{[0,t_2]} \dots\zeta_{l_m}^{[0,t_m]}$, where each $\zeta_{l_i}$ is spent in the vertex $l_i$ for time $t_i$ such that 
\begin{itemize}
	\item initial point of $\zeta_{l_i}$ satisfies the predicate $init(l_i)$
	\item $\zeta_{l_i}$ satisfies the predicates $inv(l_i)$ and $flow(l_i)$
	\item $\zeta_{l_i}$ satisfies the predicate $jump(e)$ where $e$ is an edge from $l_i$ to $l_{i+1}$.
\end{itemize}

To extend our algorithm directly to model check hybrid systems, we need some assumptions on the models. 
We describe these assumptions next and also discuss why they are required here.
\begin{enumerate}
	\item the evolution is always LC: this assumption is required since we need the trajectories (or runs) to be LC
	\item there are no updates during the discrete changes i.e. $S' =S$: this assumption is in line with the first assumptions, since we don't want any discontinuities in the trajectories.
	\item the edges can only be taken at discrete times when the $jump$ predicate is true: this assumption is required so that the system does not have infinite behaviors and our algorithm can directly work here. However, it might be possible to lift this assumption (or weaken it) by finding smarter algorithms that take the structure of the system into account. It might also be possible to discretize the time and give a bound on the error, as we did in this paper for until operators. All this can be part of future work.
	\item the evolution of the system inside a location is fixed:
	This assumption is similar to assumption 3 since it also restricts the behavior of the system to finitely many runs, making it feasible for our algorithm to work off the shelf. However, we might be able to lift this assumption in the future too.
	
\end{enumerate}
These restrictions reduce the hybrid system into a finite set of trajectories, which are LC, hence they can be enumerated and model checked separately by computing their semantics.

\subsection{Outlooks to Sampling and Synthesis}
\label{app:sampling_synthesis}
While we learn a representation of the nominal behavior and a function capturing  the distance from it, some applications may require producing behaviors close but not equal to the nominal behavior.
For instance, a bit of randomness may be desirable in the movements, e.g.\ to naturally break ties, or slight deviation may be desirable, e.g.\ in order to simulate human-like imprecision.
Despite sounding naturally, the task is not well defined and a number of formulations can be suggested.
We illustrate some of the open points below:
\begin{itemize}
	\item Consider the most naive view, where a point is sampled from the distribution corresponding to the path at each time step. 
	First, not all the sampled points might make sense.
	For example, consider signals $s$ and $v$ for position and velocity, respectively. 
	Since they are in physical reality bound together by $v$ being derivative of $s$, these two cannot be sampled independently when sampling a continuation of a path.
	Consequently, knowledge of the dynamics needs to enter the picture when we learn the path distribution for the sake of sampling.
	\item Second, even when sampling only the ``independent variables'', the path may turn out very sinuous, not reflecting real trajectories.
	Here, for instance, a Bayesian approach might be more useful where you sample the next point based on the current point. 
	For this, knowledge of such conditional probability distributions is required to be known a priori, which may depend on additional variables such as curvature. 
	These variables of interest may also be learnt, so that the sampling is more useful.
	Again, this is additional knowledge, which depends on the particular setting.
	\item Third, a completely different approach to sampling might be desirable.
	For instance, consider a more high-level approach, where endpoints of an atom are sampled and the rest of the path should be ``sensibly'' completed so that, e.g., the integral of mean squared distance to the points on the nominal path is minimized, or the length of the path is minimized while keeping the distance below that of the sampled endpoints etc.
\end{itemize}

The key observation here is that the next point in the trajectory on current values of the signals which is not present in the FPL formula. 
Also, the system dynamics might not allow for completely random changes thereby restricting the sampling of the next point even more.

To conclude, finding a path which adheres to an FPL formula is not entirely trivial and can depend on the dynamics of the system.

\section{STL Learning With TelEx}\label{app:learn-exper}
In this section, we show the learning of STL formulae for our example of avoiding obstacle in a corridor using TeLEx~\cite{telex}.
There are in total 20 trajectories, 10 going from each side of the obstacle and each trajectory is 24 time steps long. 
Note that TeLEx uses a template based approach and we had to provide the templates. 
Each template has some parameters (followed by a question mark and a range, e.g. $a? -1;10$) and the tool tries to synthesize some good values for these parameters from this given range.
We tried three templates, starting with a most basic box to bound just the value of $x$, then bound the value of both $x, y$ for the whole horizon. Lastly, we tried to learn two boxes by splitting the horizon into two parts.

\begin{itemize}
	\item The fist template (shown below) only bounds the values of $x$ for the whole time horizon of 24 time steps.
		Here. time horizon is fixed and the only parameters are the bounds on $x$.
		\begin{equation}
			\G^{[0,24]}(x < a? -1;10) \land \G^{[0,24]}(x > b? -1;10)
			\label{eq:telex_1}
		\end{equation}
		The generated formula is 
		\begin{equation}
			\G^{[0,24]}(x < 3.16) \land \G^{[0,24]}(x > -0.35) \notag
		\end{equation}
		It says that for the whole horizon, the value of $x$ should stay within these bounds. We show it visually in Figure~\ref{fig:telex_1}, the bounds cover all of the signals and look very tight, as one would expect. All trajectories satisfy the formula here.
		\begin{figure}[ht]
			\centering
			\includegraphics[scale=0.5, trim={0 2cm 0 3cm}, clip]{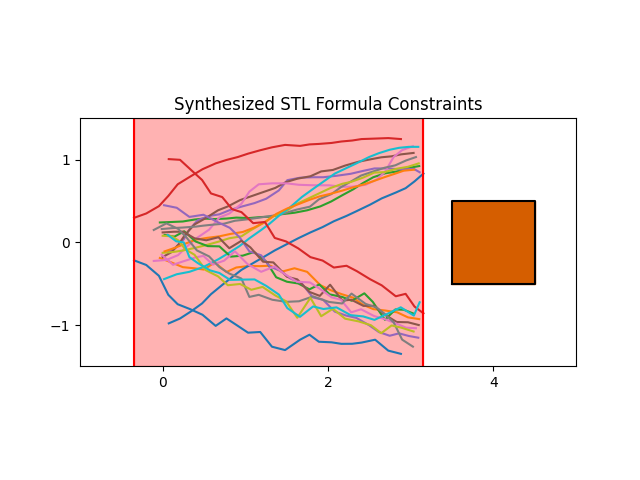}
			\caption{Bounds generated by TeLEx for the template in Equation~\ref{eq:telex_1}}
			\label{fig:telex_1}
		\end{figure}

	\item The second template (shown below) bounds both $x$ and $y$ for the whole horizon of 24 time steps. Again, the time horizon is fixed and the parameters are the bounds on the values of $x$ and $y$. 
		\begin{equation}
			\begin{split}
				\G^{[0,24]}(x < a? -1;10) \land \G^{[0,24]}(x > b? -1;10) \land\\ 
				\G^{[0,24]}(y < c? -10;10) \land \G^{[0,24]}(y > d? -10;10)		
			\end{split}
			\label{eq:telex_2}
		\end{equation}
		The synthesized formula is 
		\begin{equation}
			\begin{split}
				\G^{[0,24]}(x < 3.16) \land \G^{[0,24]}(x > -0.35) \land\\ 
				\G^{[0,24]}(y < 1.27) \land \G^{[0,24]}(y > -1.35)		
			\end{split} \notag
		\end{equation}
		This formula also says that for the whole horizon, the value of $x$ and $y$ should stay within these bounds. We show it visually in Figure~\ref{fig:telex_2}, the box cover all of the signals and again look very tight. Also, all trajectories satisfy the formula here.
		\begin{figure}[ht]
			\centering
			\includegraphics[scale=0.5, trim={0 2cm 0 3cm}, clip]{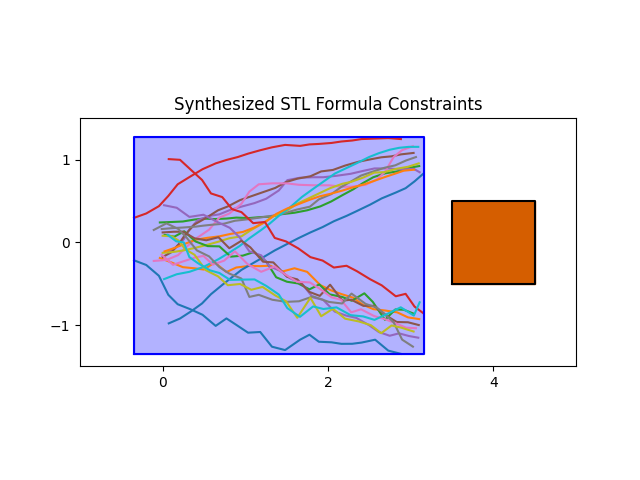}
			\caption{Bounds generated by TeLEx for the template in Equation~\ref{eq:telex_2}}
			\label{fig:telex_2}
		\end{figure}

	\item The third template (shown below) tries to split the boxes into two parts and asks the learning algorithm to generate the right time step to split. 
	Now, the parameters are the bounds for the two boxes as well as the ending time of the first box and starting time of the second box. The end time for the first box ranges from 12 to 18 steps and for the second box, beginning time ranges from 15 to 20 steps.
		\begin{equation}
			\begin{split}
				\G^{[0, a? 12;18]}\big((x < b? -1;10) \land (x > c? -1;10) \land  \\
				(y < d? -10;10) \land (y > e? -10;10)\big) \land  \\
				\G^{[f? 15;20,25]}\big((x < g? -1;10) \land (x > h? -1;10) \land  \\
				(y < i? -10;10) \land (y > j? -10;10)\big)
			\end{split}
			\label{eq:telex_3}
		\end{equation}
		The generated formula is 
		\begin{equation}
			\begin{split}
				\G^{[0, 18]}\big((x < 1.69) \land (x > -0.35) \land \\
				(y < 1.15) \land (y > -1.10)\big) \land \\
				\G^{[15,25]}\big((x < 3.16) \land (x > 2.16) \land \\ 
				(y < 1.27) \land (y > -1.35)\big)
			\end{split} \notag
		\end{equation}
		It is visually shown in Figure~\ref{fig:telex_3}. The formula says that for the first 18 steps stay in the first box (blue) and for steps from 15-25, stay in the second box (red).
		\begin{figure}
			\centering
			\includegraphics[scale=0.5, trim={0 2cm 0 3cm}, clip]{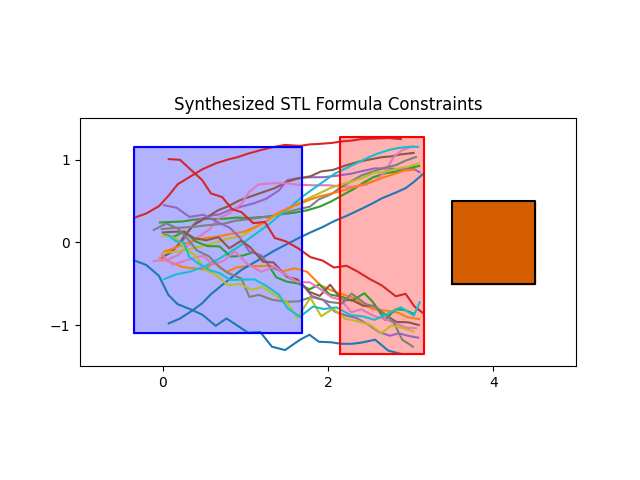}
			\caption[]{Showing the boxes used in STL formula generated by TeLEx for the template in Equation~\ref{eq:telex_3}.}
			\label{fig:telex_3}
		\end{figure}
	
\end{itemize}

\begin{remark}
	To our surprise, the STL formula generated with the third template (Equation~\ref{eq:telex_3}) is not satisfied by any of the trajectories (see Figure~\ref{fig:telex_3}) since the time horizon for the two boxes overlaps but the boxes themselves are disjoint, i.e the first box needs to be satisfied for 0-18 time steps and second box needs to be satisfied for 15-25 time steps.
	Note that, with the template, it was possible to generate a formula that says stay in the first box for first 12 time steps and stay in the second box for 20-25 time steps, which could be easily satisfied by the trajectories with the generated boxes. 
\end{remark}

\section{FPL Learning Algorithm}
\label{app:learning}
In this section, we provide additional details on the learning algorithm for Fuzzy Path Logic (FPL) presented in Section~\ref{sec:learning}.

\subsection{Similar Atom Merging}
\label{appendix:similar-atom-merging}
This process identifies and merges similar atoms within the learned DAG. This is useful for reducing redundancy to achive a more compact FPL. This can also clean noisy atoms that may arise due to data imperfections as can be seen in Figure~\ref{fig:coffee_dag}. 

Examples of this can be seen in section \ref{appendix:learning-examples}.

The algorithm assigns a similarity score to each pair of atoms based on their KL divergence and time mismatch. If the similarity score exceeds a predefined threshold, the atoms are merged by appropriate combined value of their attributes and the time intervals are adjusted accordingly.

\subsection{DAG Simplification Algorithm}
\label{appendix:dag-simplification}
This simplification algorithm operates on a DAG with a single \texttt{Start} node and a single \texttt{End} node. It iteratively applies three operations: \texttt{Extend}, \texttt{Join}, and \texttt{Split}. The process continues until the graph is reduced to a form where it has exactly three nodes: \texttt{Start}, an internal node, and \texttt{End}.

\subsubsection{Operations}
\begin{itemize}
    \item \textbf{Extend}: Merges two nodes \(u\) and \(v\) if \(u\) is the only parent of \(v\) and \(v\) is the only child of \(u\). This operation reduces the number of nodes by one.
    \item \textbf{Join}: Merges two internal nodes \(u\) and \(v\) if they have the same parents and children. This operation also reduces the number of nodes by one.
    \item \textbf{Split}: Splits an internal node \(u\) into two nodes \(u_1\) and \(u_2\), redistributing its children. This operation increases the number of nodes by one.
\end{itemize}

\begin{algorithm}[ht]
\caption{\textsc{SimplifyDAG}$(G)$}
\begin{algorithmic}[1]
\State \textbf{Input:} Directed Acyclic Graph $G$ with \texttt{Start}, \texttt{End}
\While{$\text{SIZE}(G) > 3$}
    \If{\Call{TryExtend}{$G$}} \State \textbf{continue} \EndIf
    \If{\Call{TryJoin}{$G$}} \State \textbf{continue} \EndIf
    \If{\Call{TrySplit}{$G$}} \State \textbf{continue} \EndIf
    \State \textbf{break}
\EndWhile
\end{algorithmic}
\end{algorithm}

\begin{algorithm}[ht]
\caption{\textsc{TryExtend}$(G)$}
\begin{algorithmic}[1]
\ForAll{edges $(u, v)$ in $G$}
    \If{$u \ne v$ and $u \ne$ \texttt{Start} and $v \ne$ \texttt{End}}
        \If{$\text{Children}(u) = \{v\}$ and $\text{Parents}(v) = \{u\}$}
            \State \Call{MergeNodes}{$u, v, \mathbf{U}$}
            \State \Return true
        \EndIf
    \EndIf
\EndFor
\State \Return false
\end{algorithmic}
\end{algorithm}

\begin{algorithm}[ht]
\caption{\textsc{TryJoin}$(G)$}
\begin{algorithmic}[1]
\ForAll{distinct internal nodes $u, v$ in $G$}
    \If{$\text{Parents}(u) = \text{Parents}(v)$ and $\text{Children}(u) = \text{Children}(v)$}
        \State \Call{MergeNodes}{$u, v, \mathbf{\vee} $}
        \State \Return true
    \EndIf
\EndFor
\State \Return false
\end{algorithmic}
\end{algorithm}

\begin{algorithm}[ht]
\caption{\textsc{TrySplit}$(G)$}
\begin{algorithmic}[1]
\ForAll{internal nodes $u$ in $G$}
    \If{$|\text{Children}(u)| > 1$}
        \State \Call{SplitNode}{$u$}
        \State \Return true
    \EndIf
\EndFor
\State \Return false
\end{algorithmic}
\end{algorithm}

\begin{algorithm}[ht]
\caption{\textsc{MergeNodes}$(u, v, Op)$}
\begin{algorithmic}[1]
\State Create node $w$ with merged content of $u$, $v$ according to $Op$
\ForAll{$p$ in $\text{Parents}(u)$} \State \Call{AddEdge}{$p, w$} \EndFor
\ForAll{$c$ in $\text{Children}(v)$} \State \Call{AddEdge}{$w, c$} \EndFor
\State Remove $u$ and $v$ from $G$
\end{algorithmic}
\end{algorithm}

\begin{algorithm}[ht]
\caption{\textsc{SplitNode}$(u)$}
\begin{algorithmic}[1]
\State Create nodes $u_1$, $u_2$ with same content as $u$
\ForAll{$p$ in $\text{Parents}(u)$}
    \State \Call{AddEdge}{$p, u_1$}, \Call{AddEdge}{$p, u_2$}
\EndFor
\State Partition $\text{Children}(u)$ into $C_1, C_2$ such that $|C_1| = 1$ and $|C_2| = |\text{Children}(u)| - 1$ 
\ForAll{$c \in C_1$} \State \Call{AddEdge}{$u_1, c$} \EndFor
\ForAll{$c \in C_2$} \State \Call{AddEdge}{$u_2, c$} \EndFor
\State Remove $u$ from $G$
\end{algorithmic}
\end{algorithm}

\subsubsection{Proof of Correctness of DAG Simplification Algorithm}

To prove the DAG simplification algorithm terminates. We define a potential function that strictly decreases with each operation, ensuring termination.

We define a potential function \(\Phi(G)\) as follows:
\[
\Phi(G) = v(G) + 2 \cdot \psi(G)
\]
where:
\begin{itemize}
    \item \(v(G)\) is the number of internal vertices in \(G\).
    \item \(\psi(G)\) is the \textit{split potential} of \(G\).
\end{itemize}
\(\Phi(G)\) is a non-negative function that decreases with each transformation.

We define the split potential \(\psi(G)\) recursively:
\begin{itemize}
    \item \(\psi(\texttt{start}) = \psi(\texttt{end}) = 0\)
    \item For an internal node \(u\),
    \[
    \psi(u) = \left( \sum_{c \in \text{Children}(u)} \psi(c) + 1 \right) - 1
    \]
    \item The total split potential is:
    \[
    \psi(G) = \sum_{u \in G} \psi(u)
    \]
\end{itemize}

The idea behind these definition is the split potential \(\psi(G)\) captures the maximum number of splits that can be performed on the graph \(G\). \\

To see check this, let \(u\) be an internal node with children \(c_1, c_2, \dots\). Assume each child \(c_i\) has split potential \(\psi(c_i)\). After all splits, \(u\) will have \(\sum_{i} (\psi(c_i) + 1)\) children. Since we can only keep splitting a node till all its children are partitioned, therefore,
\[
\psi(u) = \left( \sum_i (\psi(c_i) + 1) \right) - 1 = \sum_i \psi(c_i) + (k - 1)
\]
where \(k\) is the number of children. This matches the defined formula.

Now we need to check the effects of Operations on Potential We show that \(\Phi(G)\) is strictly decreasing under each of the three operations: \texttt{Extend}, \texttt{Join}, and \texttt{Split}.

\paragraph{General Note:}
We only consider changes in split potential of parents of a modified node. Since \(\psi(\cdot)\) is an increasing function of its children's potential, ancestor nodes cannot have increased potential if their children's potential does not increase.\\

\textbf{\texttt{Extend} Operation}

Let \(u\) and \(v\) be two nodes such that:
\[
\text{Children}(u) = \{v\}, \quad \text{Parents}(v) = \{u\}
\]
Assume \(\psi(v) = x\). Then \(\psi(u) = x\). \\

After merging \(u\) and \(v\) into node \(w\):

\[\psi(w) = \psi(u) = x\]
All parents of \(u\) now point to \(w\), and their \(\psi\) values are unchanged. \\
So new split potential of graph \(G\) is:
\[
\psi(G)_{w} = \psi(G)_{u,v} - x \leq \psi(G)_{u,v}
\]
The number of internal nodes decreases by 1.

Thus,
\[
\Phi(G)_w = v(G)_w + 2 \cdot \psi(G)_w < v(G)_{u,v} + 2 \cdot \psi(G)_{u,v} = \Phi(G)_{u,v}
\]
The potential function \(\Phi(G)\) strictly decreases under the \texttt{Extend} operation.\\

\textbf{\texttt{Join} Operation}

Let \(u, v\) be nodes with the same children and parents \(p_1, p_2, \dots\) , and \(\psi(u) = \psi(v) = x\). After merging into \(w\):

\(\psi(w) = x\) \\
For each parent \(p_i\), before merging:
    \[
    \psi(p_i) = 2x + 1 + \sum_j (\psi(c_{i,j}) + 1)
    \]
where \(c_{i,j}\) are the children of \(p_i\) other than \(u\) and \(v\). \\
After merging:
    \[
    \psi(p_i) = x + \sum_j \psi(c_{i,j})
    \]
So, the decrease in each parent’s potential is \(x + 1\). Number of internal nodes decreases by 1 

Therefore, \(\Phi(G)\) strictly decreases under the \texttt{Join} operation \\

\textbf{\texttt{Split} Operation} \\
Let a node \(u\) with children \(\{c_1, c_2, \dots, c_n\}\) where \(n > 1\) be split into \(u_1\) and \(u_2\). Let \(\psi(c_i) = x_i\).

Before splitting:
\[
\psi(u) = \sum_{i=1}^n (x_i + 1) - 1
\]

After splitting:
\[
\psi(u_1) = x_1,\quad \psi(u_2) = \sum_{i=2}^{n} (x_i + 1) - 1
\]
\[
\Rightarrow \psi(u_1) + \psi(u_2) = \psi(u) - 1
\]

Parents of \(u\) now point to both \(u_1\) and \(u_2\).
Their total potential remains unchanged as: 
\[
\begin{aligned}
    \psi(p_i)_{u} &= (\psi(u)+1) + \sum_{j=1}^{n} (\psi(c_j) + 1) -1 \\
    &= (\psi(u_1) + \psi(u_2)+2) + \sum_{j=1}^{n} (\psi(c_j) + 1) -1 \\
    &= \psi(p_i)_{u_1,u_2}
\end{aligned}
\]

Hence, the split operation does not change the potential of parents. Since the children's potentials remain unchanged only change is split potential is due to the split node itself. Hence 
\[
\begin{aligned}
    \psi(G)_{\text{after}} &= \psi(u_1) + \psi(u_2) + \sum_{n \in G - \{u_1,u_2\}} \psi(n) \\
    &=  \psi(u) - 1 + \sum_{n \in G - \{u_1,u_2\}} \psi(n)  \\
    &= \psi(G)_{\text{before}} - 1 \\
\end{aligned}
\]

Since the number of internal nodes only increases by 1
\[
\begin{aligned}
\psi(G)_{\text{after}} &= v(G)_{\text{before}} + 1 + 2\cdot(\psi(G)_{\text{before}} - 1) \\
&= v(G)_{\text{before}} + 2\cdot\psi(G)_{\text{before}} - 1 \\
&= \Phi(G)_{\text{before}} - 1
\end{aligned}
\]

Hence, \(\Phi(G)\) strictly decreases under \texttt{Split} operation.

The least potential function \(\Phi(G)\) is 1, which occurs when the graph is reduced to the form \texttt{Start} $\rightarrow$ \texttt{Node} $\rightarrow$ \texttt{End}. Since \(\Phi(G)\) is strictly decreasing with each operation, the algorithm is guaranteed to terminate.

Example of DAG Simplification is shown in Figure \ref{fig:dag-simplification-example}.

\begin{figure}[H]
    \centering
    \includegraphics[width=0.5\textwidth]{"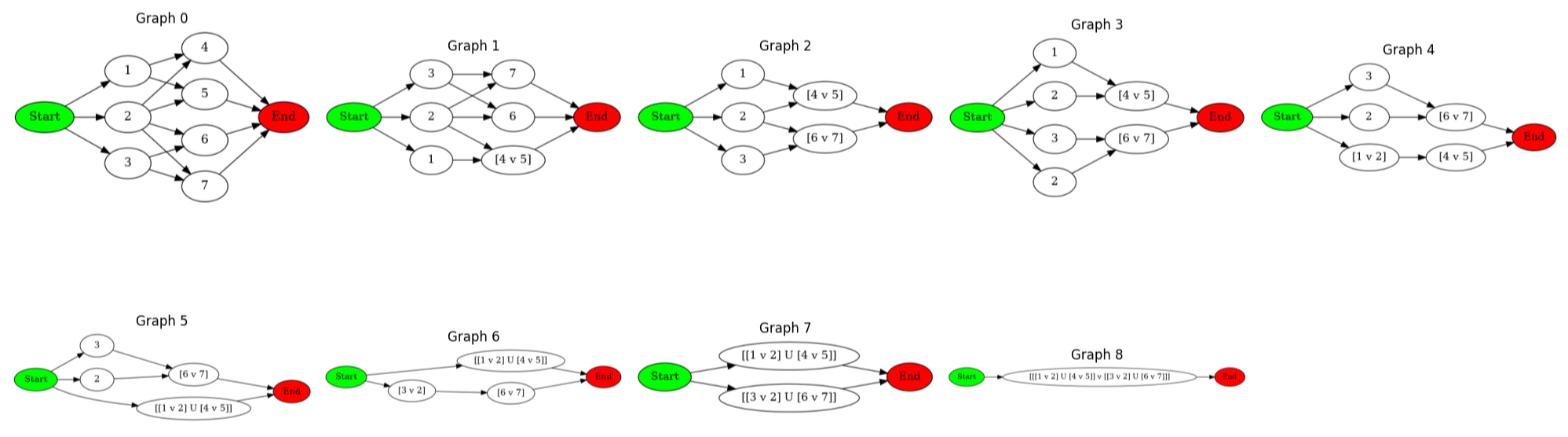"}
    \caption{An example of the DAG simplification algorithm.}
    \label{fig:dag-simplification-example}
\end{figure}

\subsection{Experiments with the Learning Algorithm}
We present 4 additional examples of learning FPL formulae from trajectory data using our learning algorithm.

\paragraph{Example 1}
    Consider the 1-dimensional trajectories shown in Figure~\ref{fig:learning_example_traj1}. Initially, the paths are around 0; around time $t=50$ they move at approximately unit speed for about 50 seconds, and then remain near the final position.
    We obtain a learned DAG as shown in Figure~\ref{fig:learning_example_dag1}.
    This DAG reveals three phases: an initial stationary phase, a moving phase, and a final stationary phase, represented by atom nodes connected by \textit{Until} operators. We can also visualize the learned atoms in Figure~\ref{fig:learning_example_fpl_atoms1}.
    After simplification, we obtain
    \([X_0 = -0.01 + 0.0\,t] \; \U_{(53, 204)} \; \big[[ X_0 = 4.82 + 1.0\,t] \; \U_{(50, 56)} \; [X_0 = 51.07 + 0.0\,t]\big]\).

\paragraph{Example 2}
    Consider the 1-dimensional trajectories shown in Figure~\ref{fig:learning_example_traj2}. The paths are initially around 0; at time $t=50$, some trajectories move with speed $1$ for about 150 seconds, while others move with speed $-1$ for about 150 seconds.
    We obtain a learned DAG as shown in Figure~\ref{fig:learning_example_dag2}.
    This DAG shows a common resting phase followed by a branching into two distinct movement patterns. We can also visualize the learned atoms in Figure~\ref{fig:learning_example_fpl_atoms2}.
    After simplification, we obtain 
    \([X_0 = 0.01 + 0.0\,t]_{(0.0, 57.0)} \; \U_{(50, 57)} \; \big( [X_0 = 3.02 + 1.01\,t]_{(0.0, 150.0)} \; \lor \; [X_0 = -3.16 - 0.99\,t]_{(0.0, 150.0)} \big)\).

\label{appendix:learning-examples}
\begin{figure}[ht]
    \centering
    \includegraphics[width=0.8\columnwidth]{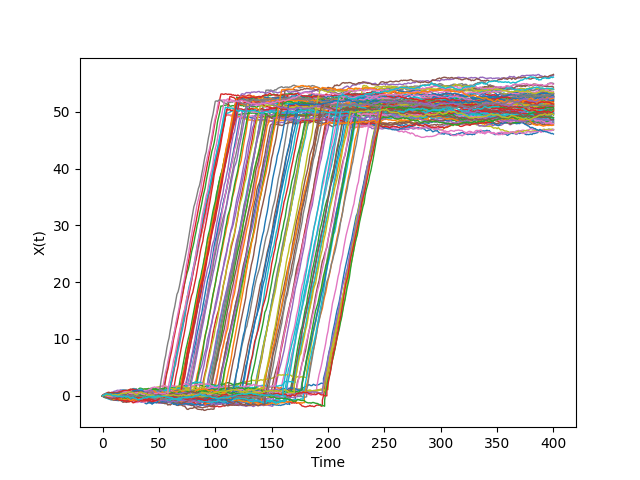}
    \caption{Example 1 trajectories. The $x$-axis here represents the time and the $y$-axis represents the position in 1D.}
    \label{fig:learning_example_traj1} 
\end{figure}

\begin{figure}[ht]
    \centering
    \includegraphics[width=1\columnwidth]{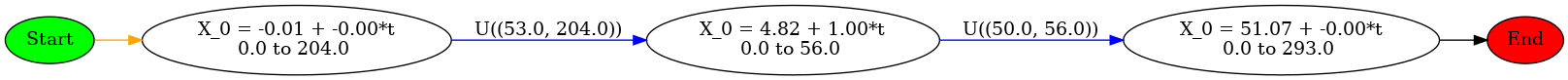}
    \caption{Example 1 learned DAG.}
    \label{fig:learning_example_dag1}
\end{figure} 

\begin{figure}[ht]
    \centering
    \includegraphics[width=0.8\columnwidth]{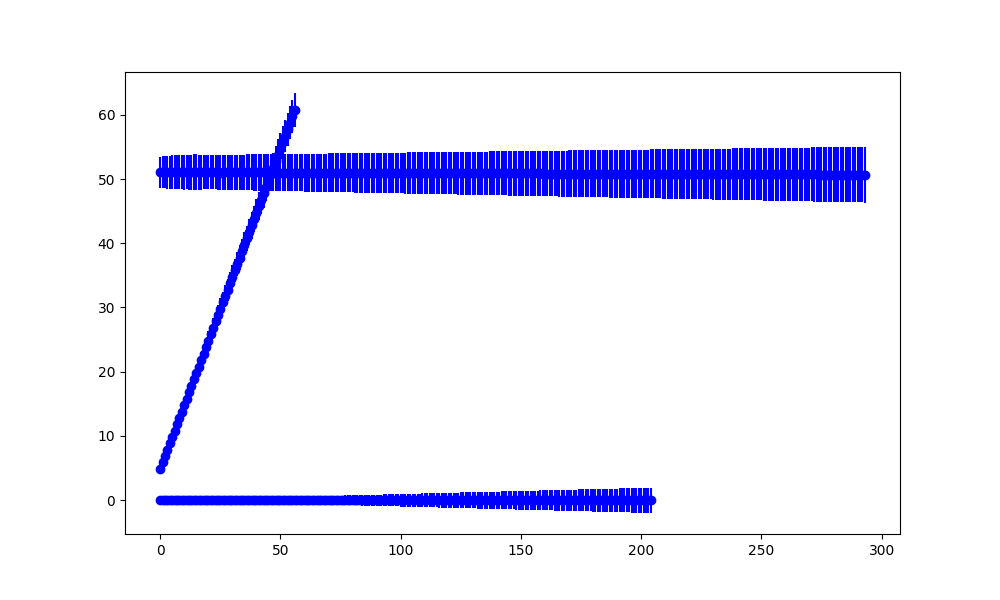}
    \caption{Example 1 FPL atoms. Since it's a 1D example, the deviation is only in y axis.}
    \label{fig:learning_example_fpl_atoms1} 
\end{figure}

\begin{figure}[ht]
    \centering
    \includegraphics[width=0.8\columnwidth]{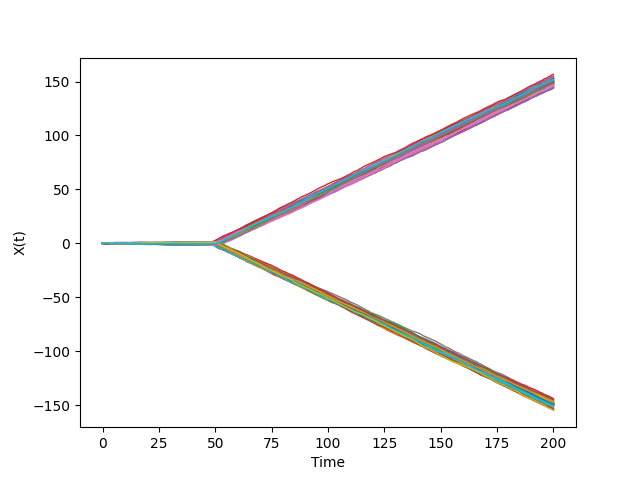}
    \caption{Example 2 trajectories.}
    \label{fig:learning_example_traj2} 
\end{figure}

\begin{figure}[ht]
    \centering
    \includegraphics[width=0.8\columnwidth]{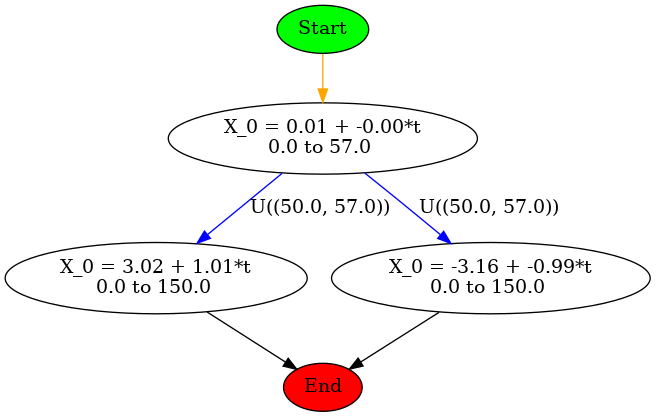}
    \caption{Example 2 learned DAG.}
    \label{fig:learning_example_dag2}
\end{figure}

\begin{figure}[ht]
    \centering
    \includegraphics[width=0.8\columnwidth]{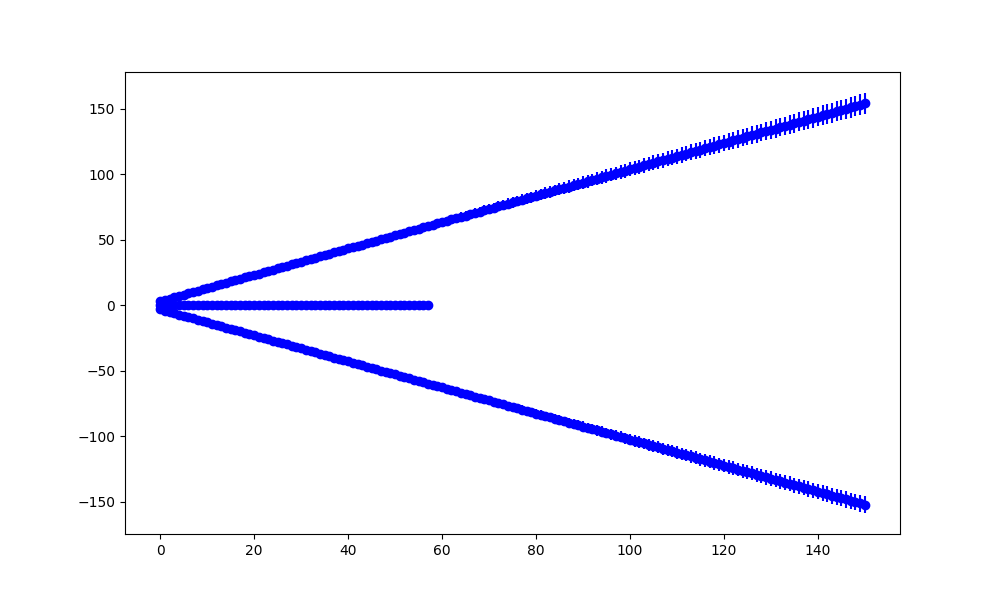}
    \caption{Example 2 FPL atoms.}
    \label{fig:learning_example_fpl_atoms2} 
\end{figure}

\paragraph{Example 3} 
Consider the set of trajectories shown in Figure~\ref{fig:learning_example_traj1_appendix}. In these trajectory, the agent starts at X=0 and remains there for 100 seconds. Then the agent either moves to around X = 50 or X = -50 and remain there for 50 seconds and then returns to X=0. And staying there for another 100 seconds.

The learned DAG before and after merging of similar atoms is shown in Figures~\ref{fig:learning_example_dag1_appendix} and \ref{fig:learning_example_dag_merged1_appendix} respectively.

As expected we can see the learned Atoms captures the diffrent aspects of the trajectory. We also observe that before merging we have multiple atoms corresponding to the final portion of the trajectory where the agent returns to X=0 and stays there for 100 seconds. These atom are merged into a single atom after \texttt{MergeNodes} operation as shown in Figure~\ref{fig:learning_example_dag_merged1_appendix}. We can visualize the the learnt atoms after merging in Figure~\ref{fig:learning_example_atoms1_appendix}

Finally we obtain the following fuzzy path logic formula from the learned DAG:

\begin{figure}[ht]
    \centering
    \includegraphics[width=0.8\columnwidth]{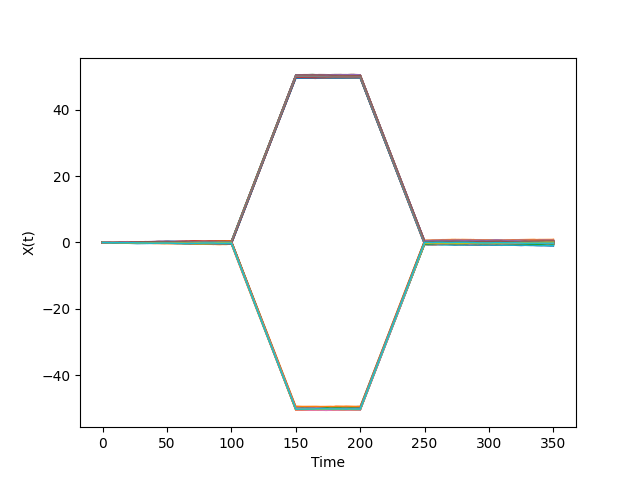}
    \caption{Example 3 trajectories.}
    \label{fig:learning_example_traj1_appendix}
\end{figure}

\begin{figure}[ht]
    \centering
    \begin{subfigure}[b]{0.45\columnwidth}
        \centering
        \includegraphics[width=\columnwidth]{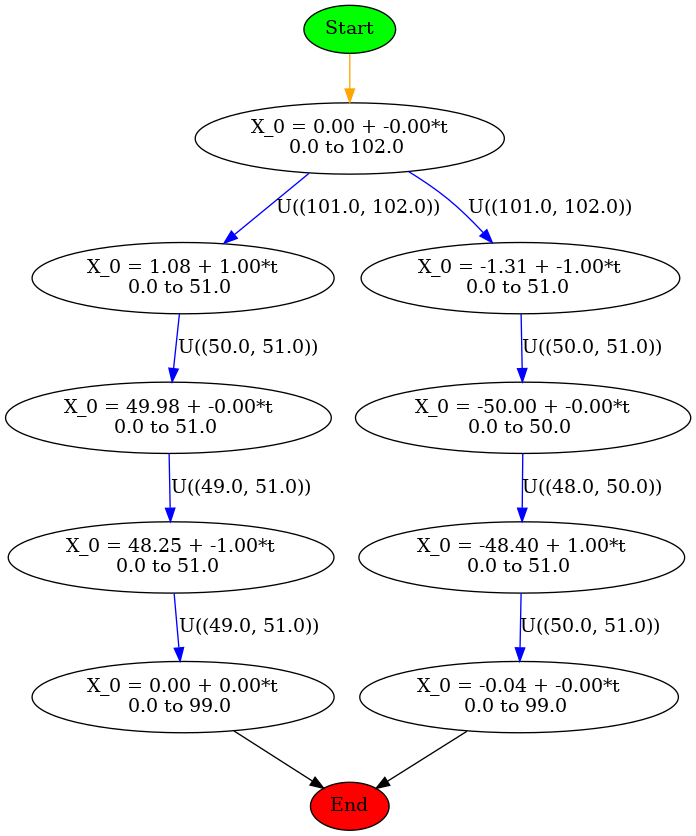}
        \caption{The learned DAG for Example 3 from the trajectories in Figure~\ref{fig:learning_example_traj1_appendix} before atom merging.}
        \label{fig:learning_example_dag1_appendix}
    \end{subfigure}
    \hfill
    \begin{subfigure}[b]{0.45\columnwidth}
        \centering
        \includegraphics[width=\columnwidth]{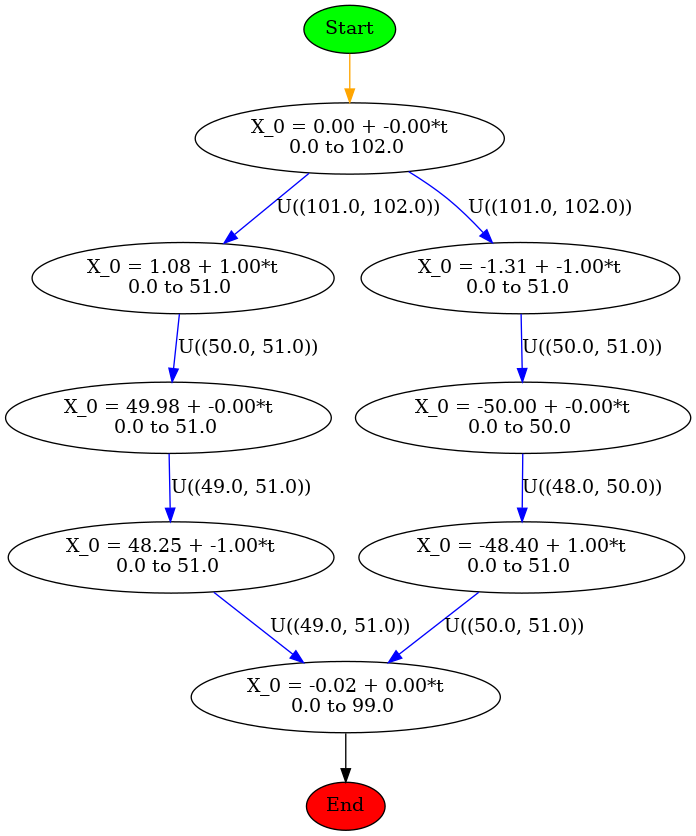}
        \caption{DAG after merging of similar atoms from the DAG in Figure~\ref{fig:learning_example_dag1_appendix}.}
        \label{fig:learning_example_dag_merged1_appendix}
    \end{subfigure}
    \caption{Learned DAG for Example 3.}
\end{figure}

\begin{figure}[ht]
    \centering
    \includegraphics[width=0.8\columnwidth]{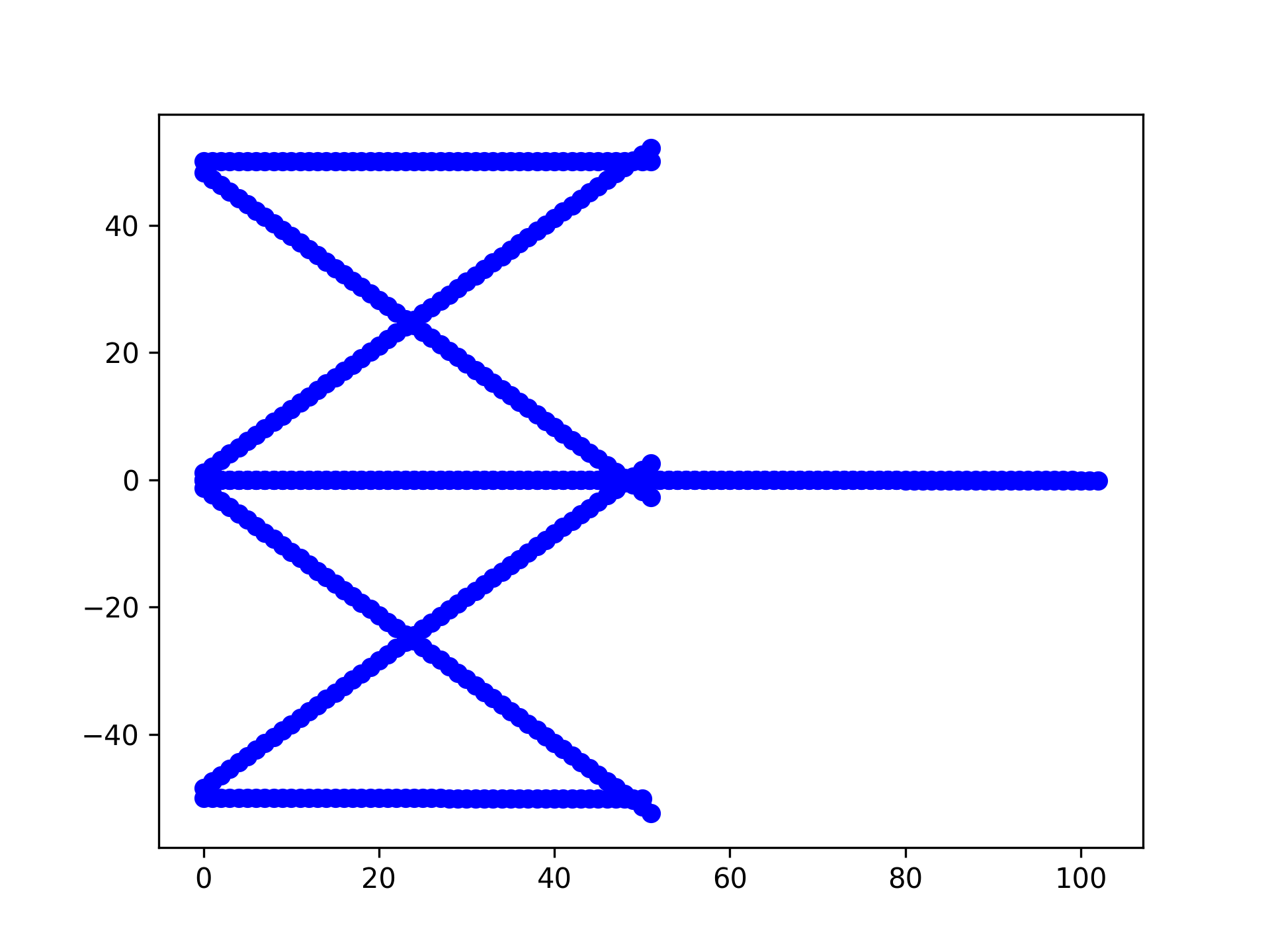}
    \caption{Learned atoms for Example 3 from the trajectories in Figure~\ref{fig:learning_example_traj1_appendix} after atom merging.}
    \label{fig:learning_example_atoms1_appendix}
\end{figure}

\paragraph{Example 4}
This is a real life scenario, where an agent has to move to a coffee machine avoiding an obstacle in between and then move to a goal location as shown in Figure~\ref{fig:coffee_trajectories}. 
Given the complexity of the trajectories, intially learnt DAG is complex as can be seen in Figure~\ref{fig:coffee_dag} and corresponding atoms as shown in Figure~\ref{fig:coffee_atoms}. After the merging of similar atom the DAG shown in Figure~\ref{fig:coffee_dag_simplified} and atoms shown in Figure~\ref{fig:coffee_atoms_simplified} are obtained.

\begin{figure}[ht]
    \centering
    \includegraphics[width=0.8\columnwidth]{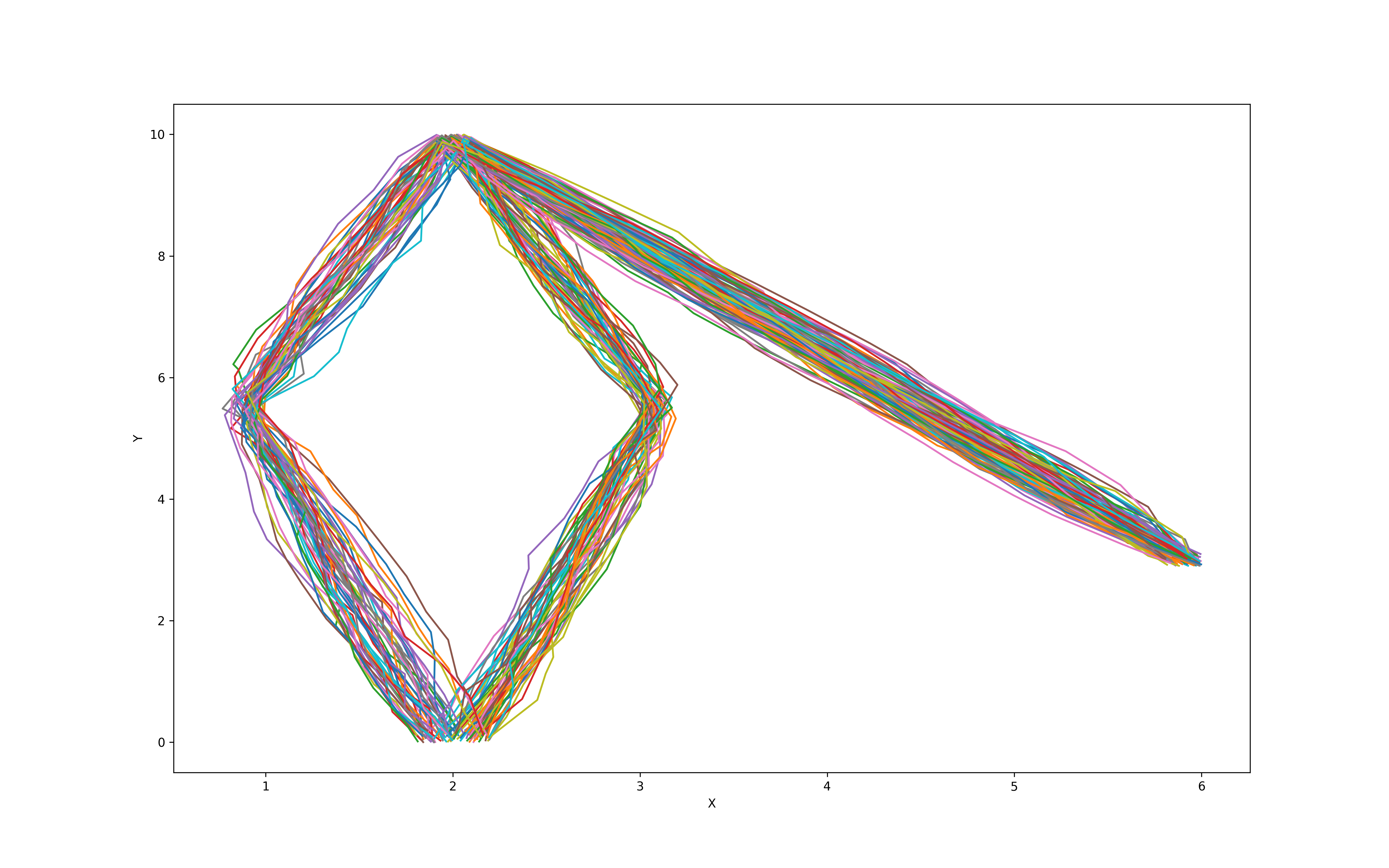}
    \caption{Example 4 trajectories where an agent moves to a coffee machine avoiding an obstacle and then moves to a goal location.}
    \label{fig:coffee_trajectories}
\end{figure}

\begin{figure}[ht]
    \centering
    \includegraphics[width=0.8\columnwidth]{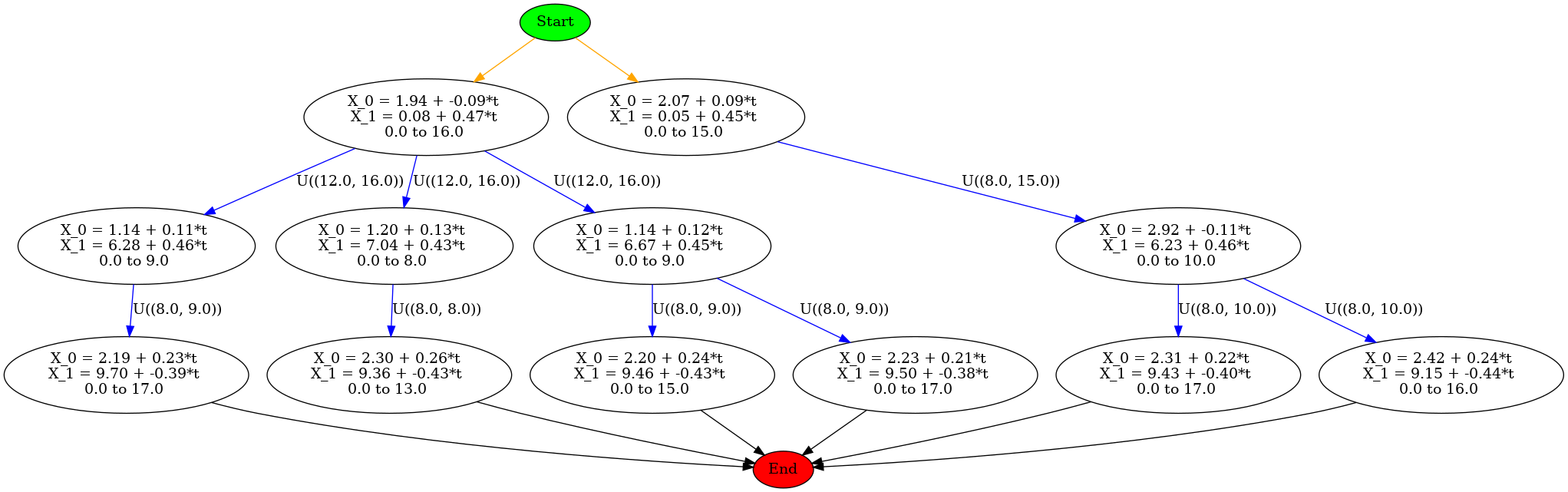}
    \caption{Learned DAG for Example 4}
    \label{fig:coffee_dag}
\end{figure}

\begin{figure}[ht]
    \centering
    \includegraphics[width=0.8\columnwidth]{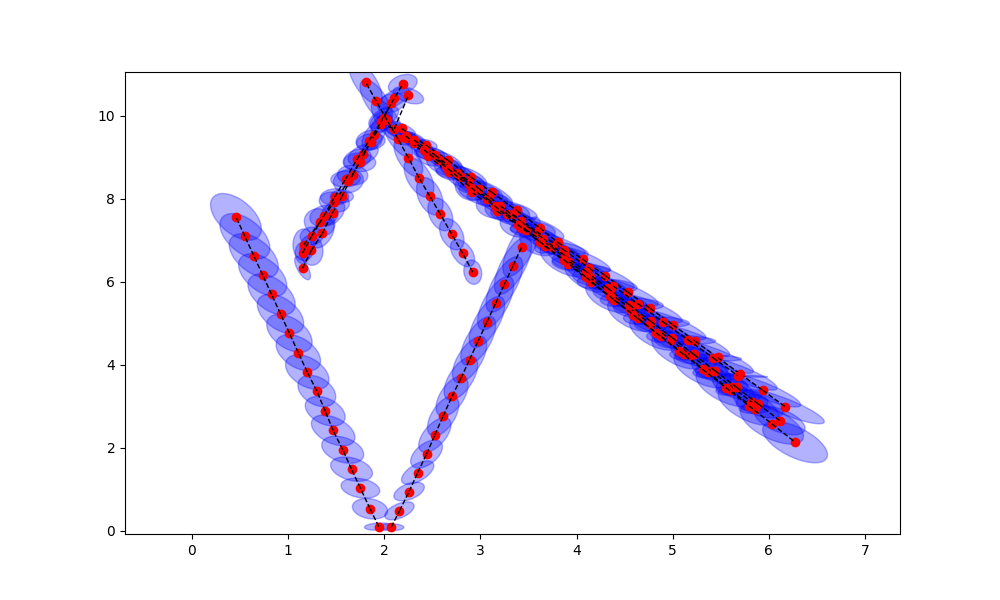}
    \caption{Learned atoms for Example 4 before atom merging. Note that there are several similar atoms corresponding to the final phase of the trajectory where the agent moves to the goal location.}
    \label{fig:coffee_atoms}
\end{figure}

\begin{figure}[H]
    \centering
    \includegraphics[width=0.5\columnwidth]{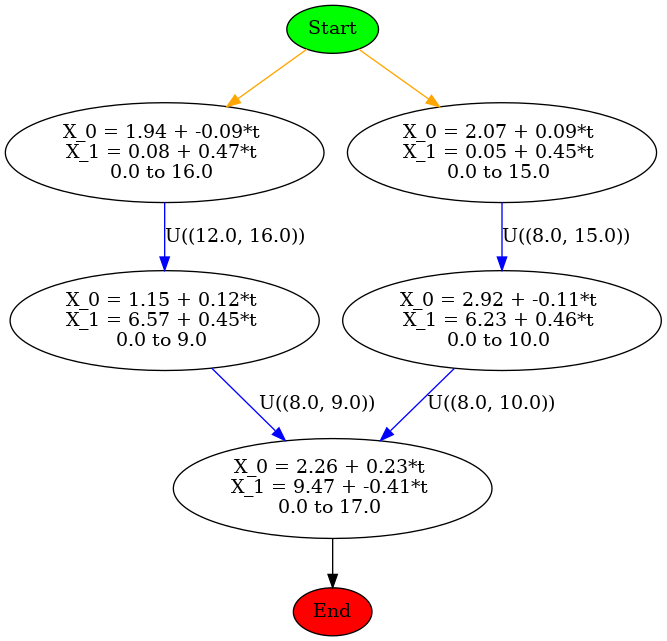}
    \caption{Learned DAG for Example 4 after simplification from the DAG in Figure~\ref{fig:coffee_dag}.}
    \label{fig:coffee_dag_simplified}
\end{figure}

\begin{figure}[ht]
    \centering
    \includegraphics[width=0.8\columnwidth]{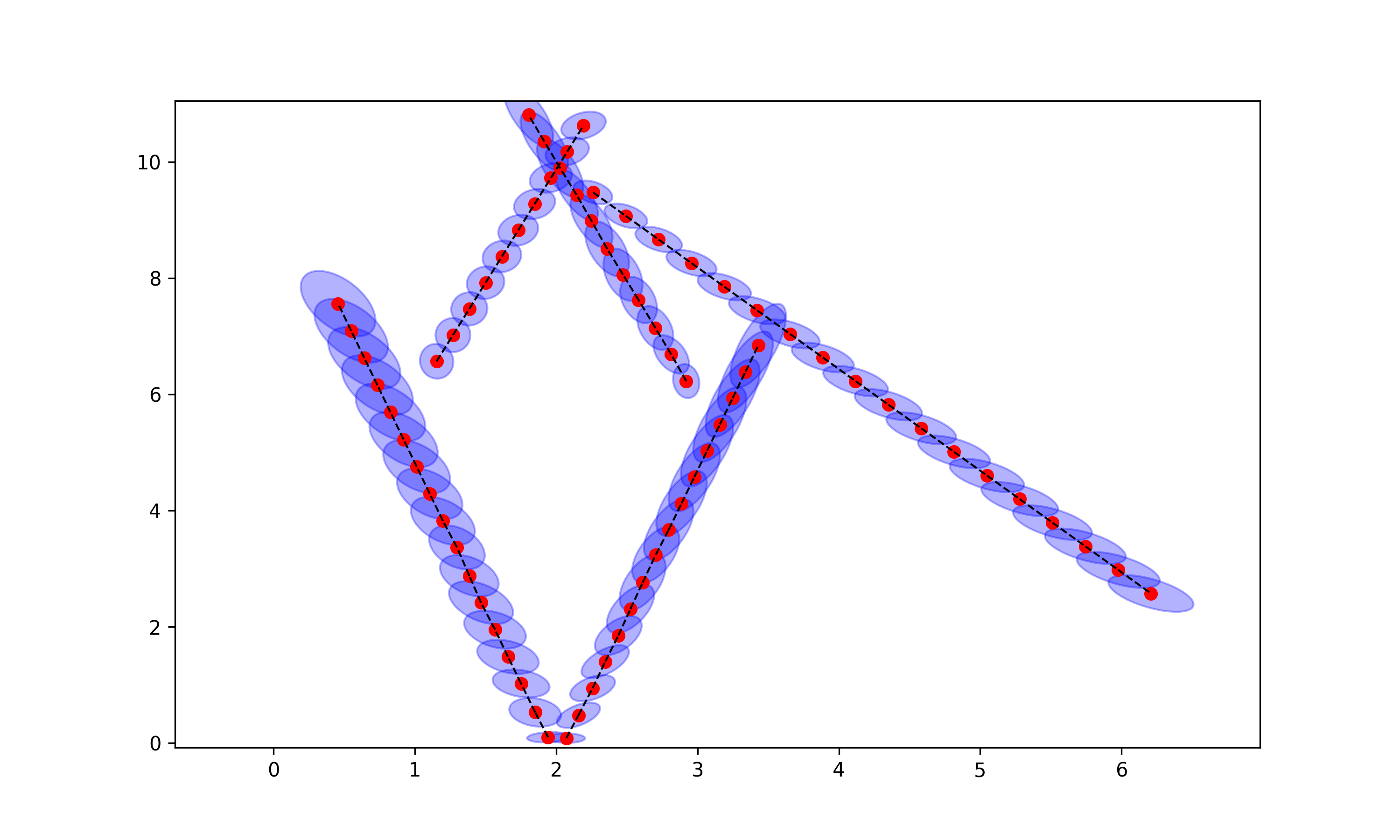}
    \caption{Learned atoms for Example 4 after simplification from the atoms in Figure~\ref{fig:coffee_atoms}.}
    \label{fig:coffee_atoms_simplified}
\end{figure}

\subsection{Parameters of the learning algorithm}

The algorithm has several parameters that should be chosen based on the characteristics of the data and the desired properties of the learned FPL. These hyperparameters include:
\begin{itemize}
    \item \textbf{Initial Subtrajectory Length}: The length of the initial sub-trajectories used for learning. A longer length may capture more complex behaviors but this value must be less than the shortest atom duration.
    \item \textbf{Minimum Trajectories for Processing}: The minimum number of trajectories required to process and learn a new atom. This helps ensure that the learned representations are robust and not overly sensitive to noise. If this value is too high then we may not have enough data towards the tail end of time horizons and hence FPL may not capture the full behavior.
    \item \textbf{Deviation Threshold}: The threshold for deviation when comparing trajectories to current atom. If the deviation exceeds this threshold, a new atom is learned. If the threshold is too low then noise may be read as deviation but if the value is to high then algorithm will be fuzzy around atom boundaries and suffer worse performance.
\end{itemize}

We are also using a hyperparameter corresponding to clustering process where if the standard deviation of trajectories is below a certain threshold then we do not perform clustering. This process can be avoided by using alternate clustering algorithms.

These hyperparameters can be adjusted based on the specific characteristics of the data and the desired properties of the learned FPL. 

We choose the following hyperparameters:
\begin{itemize}
    \item \textbf{For 1D Datasets}
    \begin{itemize}
        \item \textbf{Initial Subtrajectory Length}: Depends on the dataset
        \item \textbf{Minimum Trajectories for Processing}: 5
        \item \textbf{Deviation Threshold}: 4 standard deviations
    \end{itemize}
    \item \textbf{For 2D Datasets}
    \begin{itemize}
        \item \textbf{Initial Subtrajectory Length}: Depends on the dataset
        \item \textbf{Minimum Trajectories for Processing}: 5
        \item \textbf{Deviation Threshold}: 3 standard deviations
    \end{itemize}
\end{itemize}

\end{document}